\newtheorem{theorem}{Theorem}[]
\newtheorem{lemma}[theorem]{Lemma}
\newtheorem{corollary}[theorem]{Corollary}
\newtheorem{observation}[theorem]{Observation}
\newtheorem{remark}[theorem]{Remark}
\newcommand{\mycase}[1]{\noindent{\bf CASE #1\ }}
\newtheoremstyle{s2}{9pt}{9pt}{\rm}{\parindent}{\bf}{.}{0.5em}{}
\theoremstyle{s2} 
\newtheorem{definition}[theorem]{Definition}
\newtheorem{algorithm}{Algorithm}
\DeclareMathOperator{\ch}{ch}
\DeclareMathOperator{\con}{con}
\newcommand{\f}{\overline{\pi}}
\newcommand{\floor}[1]{\ensuremath{\left\lfloor{#1}\right\rfloor}}%
\newcommand{\ceil}[1]{\ensuremath{\left\lceil{#1}\right\rceil}}%
\newcommand{\ignore}[1]{}%
\begin{document}
\title{Beyond the Shannon's Bound}
\author{
  Micha{\l} Farnik\thanks{Theoretical Computer Science Department, Faculty of Mathematics and Computer Science, Jagiellonian University, Krak{\'o}w, Poland; email: \texttt{michal.farnik@gmail.com}}
  \and
  \L{}ukasz Kowalik\thanks{Institute of Informatics, University of Warsaw, Poland; email: \texttt{kowalik@mimuw.edu.pl}. }
  \and
  Arkadiusz Soca{\l}a\thanks{Faculty of Mathematics, Informatics and Mechanics, University of Warsaw, Poland; email: \texttt{arkadiusz.socala@students.mimuw.edu.pl}.}
}
\date{}

\maketitle

\abstract{
Let $G=(V,E)$ be a multigraph of maximum degree $\Delta$. The edges of $G$ can be colored with at most $\frac{3}{2}\Delta$ colors by Shannon's theorem. We study lower bounds on the size of subgraphs of $G$ that can be colored with $\Delta$ colors. 

Shannon's Theorem gives a bound of $\frac{\Delta}{\lfloor\frac{3}{2}\Delta\rfloor}|E|$. 
However, for $\Delta=3$, Kami\'{n}ski and Kowalik~\cite{swat,Bey} showed that there is a 3-edge-colorable subgraph of size at least $\frac{7}{9}|E|$, unless $G$ has a connected component isomorphic to $K_3+e$ (a $K_3$ with an arbitrary edge doubled).
Here we extend this line of research by showing that $G$ has a $\Delta$-edge colorable subgraph with at least $\frac{\Delta}{\lfloor\frac{3}{2}\Delta\rfloor-1}|E|$ edges, unless $\Delta$ is even and $G$ contains $\frac{\Delta}{2}K_3$ or $\Delta$ is odd and $G$ contains $\frac{\Delta-1}{2}K_3+e$. Moreover, the subgraph and its coloring can be found in polynomial time.

Our results have applications in approximation algorithms for the Maximum $k$-Edge-Colorable Subgraph problem, where given a graph $G$ (without any bound on its maximum degree or other restrictions) one has to find a $k$-edge-colorable subgraph with maximum number of edges. 
In particular, for every even $k \ge 4$ we obtain a $\frac{2k+2}{3k+2}$-approximation and for every odd $k\ge 5$ we get a $\frac{2k+1}{3k}$-approximation. When $4\le k \le 13$ this improves over earlier algorithms due to Feige et al.~\cite{FOW02}.
}

\section{Introduction}

In this paper we consider undirected multigraphs (though for simplicity we will call them graphs).
A graph is $k$-edge-colorable if there exists an assignment of $k$ colors to the edges of the graph, such that every two incident edges receive different colors. By Shannon's theorem~\cite{shannon}, $\floor{\frac{3}{2}\Delta}$ colors suffice to color any multigraph, where $\Delta$ denotes the maximum degree. 
This bound is tight, e.g.\ for every even $\Delta$ consider the graph $(\Delta/2)K_3$, and for odd $\Delta$ consider the graph $\floor{\Delta/2}K_3+e$ (see Fig.~\ref{fig:2k3} and~\ref{fig:2k3+e} and Section~\ref{sec:comb-res} for definitions). 

It is natural to ask {\em how many} edges of a graph of maximum degree $\Delta$ can be colored with {\em less} than $\floor{\frac{3}{2}\Delta}$ colors. The {\em maximum $k$-edge-colorable subgraph of $G$} (maximum $k$-ECS in short) is a $k$-edge-colorable subgraph $H$ of $G$ with maximum number of edges. Let $\gamma_k(G)$ denote the ratio $|E(H)|/|E(G)|$; when $|E(G)|=0$ we define $\gamma_k(G)=1$. If $\Delta$ is the maximum degree of $G$ we write shortly $\gamma(G)$ for $\gamma_{\Delta}(G)$.
Lower bounds on $\gamma(G)$ were studied first by Albertson and Haas~\cite{AH96}. They showed that $\gamma(G)\ge\frac{26}{31}$ for simple graphs of maximum degree 3. Today, the case of simple graphs is pretty well researched. Since by Vizing's theorem any simple graph of maximum degree $\Delta$ can be edge-colored with $\Delta+1$ colors, by simply discarding the smallest color class we get $\gamma(G)\ge\Delta/(\Delta+1)$. This ratio grows to 1 with $\Delta$, and for $\Delta \le 7$ much more precise bounds are known (see~\cite{Bey} for a discussion).

In this paper we study lower bounds on $\gamma(G)$ for {\em multigraphs}. Note that in this case we can apply Shannon's theorem similarly as Vizing's theorem above and we get the bound $\gamma(G)\ge{\Delta}/{\lfloor\frac{3}2\Delta\rfloor}$; let us call it {\em the Shannon's bound}. 
As far as we know, so far better bounds are known only for subcubic graphs (i.e.\ of maximum degree three).
The Shannon's bound gives $\gamma(G)\ge\frac{3}{4}$ then, which is tight by $K_3+e$.
Rizzi~\cite{R09} showed that when $G$ is a subcubic multigraph with no cycles of length 3, then $\gamma(G)\ge\frac{13}{15}$, which is tight by the Petersen graph. Kami\'nski and Kowalik~\cite{swat,Bey} extended this result and proved that $\gamma(G)\ge\frac{7}{9}$ when $G$ is a connected subcubic multigraph different from $K_3+e$.

\subsection{Our Combinatorial Results}
\label{sec:comb-res}

In what follows, for a nonnegative integer $c$ by $cK_3$ we denote the graph on three vertices with every pair of vertices connected by c parallel edges (see Fig.~\ref{fig:2k3}), while $cK_3+e$ denotes the graph that can be obtained from $cK_3$ by adding a new edge between one pair of vertices (see Fig.~\ref{fig:2k3+e}).

\tikzstyle{every node} = [circle, draw, fill=black]
\begin{figure}
\begin{center}
\begin{minipage}{0.3\textwidth}
  \centering
  \begin{tikzpicture}[scale=0.7,thick]
    \node at (0,0) (u) {};
    \node at (4,0) (v) {};
    \node at (2,3.46) (w) {};
    \draw (u) edge[out=15, in=165] (v);
    \draw (u) edge[out=-15, in=195] (v);
    \draw (u) edge[out=75, in=225] (w);
    \draw (u) edge[out=45, in=255] (w);
    \draw (v) edge[out=105, in=-45] (w);
    \draw (v) edge[out=135, in=-75] (w);
  \end{tikzpicture}
  
  \caption{\label{fig:2k3}$2K_3$}
  \end{minipage}
  \begin{minipage}{0.3\textwidth}
  \centering
  \begin{tikzpicture}[scale=0.7,thick]
    \node at (0,0) (u) {};
    \node at (4,0) (v) {};
    \node at (2,3.46) (w) {};
    \draw (u) edge[out=15, in=165] (v);
    \draw (u) edge[out=-15, in=195] (v);
    \draw (u) edge (v);
    \draw (u) edge[out=75, in=225] (w);
    \draw (u) edge[out=45, in=255] (w);
    \draw (v) edge[out=105, in=-45] (w);
    \draw (v) edge[out=135, in=-75] (w);
  \end{tikzpicture}
  
  \caption{\label{fig:2k3+e}$2K_3 + e$}
  \end{minipage}
\end{center}
\end{figure}

Our first result is as follows.

\begin{theorem}\label{thm:main_2}
Let $G$ be a multigraph of maximum degree $\Delta \ge 4$. 
Then $G$ has a $\Delta$-edge colorable subgraph with at least ${\Delta}/{(\lfloor\frac{3}{2}\Delta\rfloor-1)}|E|$ edges, unless
$\Delta$ is even and $G$ contains $\frac{\Delta}{2}K_3$ as a subgraph or
$\Delta$ is odd  and $G$ contains $\frac{\Delta-1}{2}K_3+e$ as a subgraph. 
Moreover, the subgraph and its coloring can be found in polynomial time.
\end{theorem}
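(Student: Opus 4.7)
The plan is to reduce the theorem to the following stronger statement: \emph{if $G$ does not contain the exceptional subgraph specified in the theorem, then $\chi'(G) \le k$, where $k := \floor{\tfrac{3}{2}\Delta} - 1$.} Given such a $k$-edge-coloring of $G$, keeping the $\Delta$ largest color classes produces a $\Delta$-edge-colorable subgraph with at least $\tfrac{\Delta}{k}|E|$ edges by averaging, which is exactly the bound claimed. I would prove the chromatic-index statement by induction on $|E(G)|$, splitting into two regimes according to the maximum edge multiplicity $\mu(G)$.

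If $\mu(G) \le \floor{\Delta/2} - 1$, Vizing's theorem for multigraphs gives $\chi'(G) \le \Delta + \mu(G) \le k$, and we are done. Otherwise fix an edge $e = uv$ realizing $\mu(G) \ge \floor{\Delta/2}$. Since the property ``contains the exceptional subgraph'' is monotone under taking subgraphs, $G - e$ also avoids it, so by induction $G - e$ admits a $k$-edge-coloring $\phi$. The missing-color sets $M(u), M(v)$ have size at least $k - (\Delta - 1) = \floor{\Delta/2}$, so if $M(u) \cap M(v) \ne \emptyset$ we extend $\phi$ to $e$ and finish. Otherwise, letting $A$ be the set of colors on the non-$v$ edges at $u$, $B$ the colors on the non-$u$ edges at $v$, and $C$ the colors on the remaining parallel $uv$-edges, a short inclusion--exclusion computation translates the non-extendability $M(u) \cap M(v) = \emptyset$ into a tight packing constraint on $A$, $B$, and $C$ relative to the $k$ available colors. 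In this deadlocked situation I would invoke a Vizing-fan / Kempe-chain recoloring at $u$: follow a sequence of neighbors $w_1, w_2, \ldots$ of $u$ along free colors and swap a two-colored Kempe chain so as to free a color of $M(v)$ at $u$.

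The main obstacle is to show that every genuine deadlock in this fan-and-Kempe procedure forces the exceptional subgraph to appear inside $G$. The structural claim I aim for is that a deadlock forces all $\Delta - \mu(G) \le \ceil{\Delta/2}$ non-$v$ edges at $u$ to terminate at a single third vertex $w$, and symmetrically all non-$u$ edges at $v$ to terminate at the same $w$; combined with the $\mu(G)$ parallel $uv$-edges this realizes $\tfrac{\Delta}{2}K_3$ when $\Delta$ is even and $\tfrac{\Delta-1}{2}K_3 + e$ when $\Delta$ is odd, with the extra edge reflecting the parity imbalance $\ceil{\Delta/2} \ne \floor{\Delta/2}$, contradicting the hypothesis. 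Establishing this forcing step cleanly will require a careful multigraph adaptation of Vizing's fan construction, together with sub-case analysis by the parity of $\Delta$ and by whether $\mu(G)$ equals $\floor{\Delta/2}$, $\ceil{\Delta/2}$, or is strictly larger. Polynomial-time computability follows automatically, since every ingredient (testing edge multiplicities, constructing a Vizing fan, performing a Kempe swap, and detecting a fan deadlock) runs in polynomial time.
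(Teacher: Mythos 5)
Your plan hinges on a strictly stronger statement than the theorem actually asserts: you want $\chi'(G)\le\lfloor\tfrac{3}{2}\Delta\rfloor-1$ whenever the exceptional three-vertex subgraph is absent, whereas the theorem only claims a large $\Delta$-colorable \emph{subgraph}. Your stronger chromatic-index statement is in fact true, but for a non-elementary reason: it is a special case of the Goldberg--Seymour theorem ($\chi'(G)\le\max\{\Delta+1,\lceil\Gamma(G)\rceil\}$). Indeed, a short calculation shows that for $|S|\ge 5$ the odd density $|E(S)|/\lfloor|S|/2\rfloor$ is at most $\tfrac{5\Delta}{4}$, so the only way to have $\Gamma(G)\ge\lfloor\tfrac{3}{2}\Delta\rfloor$ is via a three-vertex subgraph with $\lfloor\tfrac{3}{2}\Delta\rfloor$ edges, which (by a degree-sum argument) is precisely $\tfrac{\Delta}{2}K_3$ or $\tfrac{\Delta-1}{2}K_3+e$. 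Thus Goldberg--Seymour implies what you want, but that theorem required a famously long and intricate proof and cannot be reproduced by a simple Vizing-fan induction.

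The gap in your argument is exactly at the step you flagged as ``the main obstacle.'' Your claim that a fan/Kempe deadlock forces all non-$v$ edges at $u$, and all non-$u$ edges at $v$, to terminate at a single third vertex $w$ is not something a Vizing fan gives you. Vizing fans can visit many distinct vertices, and the standard fan/Kempe analysis only yields $\chi'\le\Delta+\mu$; getting below that (in the regime $\mu\ge\lfloor\Delta/2\rfloor$ you isolate) requires controlling much more global structure than a fan at a single edge. If your structural claim were provable by that mechanism one would have a short proof of a Goldberg--Seymour-type bound, which is not the case. So while the reduction ``if $\chi'(G)\le k$ then keep the $\Delta$ largest classes'' is fine, the core inductive step is missing and, as proposed, would fail.

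For contrast, the paper never attempts to bound $\chi'(G)$. It takes a partial $\Delta$-coloring maximizing a potential $\Psi$ (number of colored edges, lexicographically refined by the sizes of uncolored components), develops a theory of ``moves'' of free components, identifies for each nontrivial free component $P$ a set of vertices $\mathrm{con}(P)$ it controls, and runs a discharging argument in which every colored edge sends charge to the components controlling its endpoints. Lower bounds on the charge received by each component type (Lemmas~\ref{P1}--\ref{lem:7_9for3}), together with the collapsing reduction (Lemma~\ref{lem_collapse}), yield the edge count directly. This route sidesteps the chromatic-index question entirely and is what makes the result provable by elementary means.
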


Theorem~\ref{thm:main_2} essentially means that for every value of maximum degree $\Delta$ there is a {\em single} bottleneck configuration and when we exclude it, we get a better bound. Note that the bounds in Theorem \ref{thm:main_2} are tight. The smallest examples are $\frac{\Delta}{2}K_3$ with one edge removed when $\Delta$ is even and the two 3-vertex multigraphs with $3(\Delta-1)/2$ edges when $\Delta$ is odd ($\frac{\Delta-1}{2}K_3$ is one of them).
It is natural to ask whether these are again the only bottlenecks and how does the next bottleneck look like.
We partially answer this question, at least for large value of $\Delta$, with the following theorem.

\begin{theorem}\label{thm:main_1}
Let $G$ be a multigraph of maximum degree $\Delta$ and let $t$ be an integer such that $\floor{\frac{3\Delta}{2}}\geq t\geq\left(\frac{1}{2}\sqrt{22}-1\right)\Delta\approx 1.34\Delta$. 
Assume that $G$ does not contain a three-vertex subgraph with more than $t$ edges. 
Then $G$ has a $\Delta$-edge-colorable subgraph with at least $\frac{\Delta}{t}|E|$ edges.
Moreover, the subgraph and its coloring can be found in polynomial time.
\end{theorem}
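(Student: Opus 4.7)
The plan is to work with a maximum $\Delta$-edge-colorable subgraph $H$ of $G$ together with a fixed $\Delta$-edge-coloring, and to set $U=E(G)\setminus E(H)$. Since the desired conclusion $|E(H)|\ge \frac{\Delta}{t}|E(G)|$ is equivalent to $|U|\le \frac{t-\Delta}{t}|E(G)|$, the task reduces to extracting a strong local obstruction from every uncolored edge $uv\in U$ and then double counting these obstructions against the hypothesis that no three-vertex subgraph of $G$ has more than $t$ edges.

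For each $uv\in U$, the maximality of $H$ forbids all augmentations by Kempe swaps. For any pair $(c_u,c_v)$ of colors missing at $u$ and at $v$ respectively (the two missing-color sets must be disjoint, since a common missing color would let us extend $H$ by coloring $uv$), the Kempe $(c_u,c_v)$-chain starting at $v$ must terminate at $u$: otherwise swapping colors along it would free a common missing color and allow the extension. Because $c_u$ is missing at $u$, this chain has at least two edges, so it passes through a third vertex $w$, namely the $c_u$-neighbor of $v$ in $H$. Varying $(c_u,c_v)$, and symmetrically analyzing chains initiated at $u$, forces many colored edges incident to $u$ and to $v$ to point to such third vertices, thereby concentrating edges inside triples $\{u,v,w\}$.

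A double-counting argument then closes the proof. Summing the local contributions over $uv\in U$ and over the induced triples, bounding $|E(G[\{u,v,w\}])|\le t$, and using the constraint $\alpha(u)+\alpha(v)\le \Delta$ (where $\alpha$ counts missing colors at a vertex), one obtains an inequality of the form $2c^2+4c-9\ge 0$ in the variable $c=t/\Delta$, whose positive root is exactly $\tfrac{1}{2}\sqrt{22}-1$. This is precisely the quadratic threshold that the hypothesis on $t$ is tailored to guarantee.

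The main obstacle is the degenerate case in which all $\Delta$ colors already appear at $u$ or at $v$, so there are no missing colors to initiate a Kempe chain. There one must first build a longer Vizing-style fan from $uv$ to liberate a color at $u$ or $v$, introducing additional case analysis and possibly multiple third vertices per obstruction. The polynomial-time algorithmic statement then comes for free: each contradiction in the argument is witnessed by an explicit Kempe swap or fan rotation, so starting from any $\Delta$-edge-colorable subgraph and iteratively performing these augmenting moves must terminate at a subgraph satisfying the claimed lower bound.
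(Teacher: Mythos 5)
Your sketch correctly identifies the critical quadratic $2c^2+4c-9\ge 0$ in $c=t/\Delta$ (this is exactly what appears, after clearing denominators, in the inequality $\frac{4\Delta+2t}{5\Delta-2t}\ge\frac{\Delta}{t-\Delta}$ in the paper's proof), and the Kempe-chain intuition that an uncolored edge $uv$ forces colored structure on a third vertex is genuine. But the proposal has real gaps that cannot be dismissed as bookkeeping. First, maximizing only $|E(H)|$ does not give enough structure: the paper works with a coloring maximizing the lexicographic potential $\Psi$, which additionally maximizes the number of free components ordered by size. That refinement is what powers Lemma~\ref{lem:distinct-free}, the whole machinery of elementary moves and controlled vertices, and in particular Corollary~\ref{cor:num-uncolored-0}, which bounds the size of a free component. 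Your sketch treats each uncolored edge $uv$ as an independent obstruction, but uncolored edges cluster into free components of size up to $\lfloor\Delta/2\rfloor$, and the charge/double-counting has to be done per component, not per edge; your argument never controls this clustering.

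Second, and more fundamentally, the claimed ``double-counting over triples'' is asserted rather than shown. In the paper, the inequality arises from Lemma~\ref{lem:aux}, whose constraints ($\alpha\ge 2e$, $\alpha+\beta\le\Delta$, $e+\Delta-\beta\le t$, $2\beta+\Delta-\alpha\ge k+1$) encode precisely how many colors appear on edges leaving the triple, internal to the triple, or are free there, and the parameter $k$ comes from the absence of a $k$-collapsible subgraph. Your proposal never mentions anything resembling collapsing: when a triple $\{u,v,w\}$ is joined to the rest of $G$ by few edges, there simply is not enough external charge, and the paper resolves this by collapsing such triples and recursing (Lemma~\ref{lem_collapse}). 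Without that reduction the desired bound fails on small configurations, so the argument cannot be closed as sketched. Likewise the ``degenerate case'' you flag (no missing colors at $u$ or $v$) is where most of the paper's case analysis lives (Lemmas~\ref{single_delta45}, \ref{multiple_delta}, \ref{multiple_delta-1}, \ref{lem:4_5for4}, \ref{lem:7_9for3}); deferring it with ``Vizing-style fans, introducing additional case analysis'' is precisely the content you would need to supply.
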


By Vizing's theorem for multigraphs, every multigraph of maximum degree $\Delta$ and maximal edge multiplicity $\mu$ has a $\Delta$-edge colorable subgraph with at least $\tfrac{\Delta}{\Delta+\mu}$ edges. Below we state an immediate corollary from Theorem~\ref{thm:main_1} which improves this bound for $\mu \in [\tfrac{\sqrt{22}-2}{6}\Delta,\Delta/2)$.

\begin{corollary}
Let $G$ be a multigraph of maximum degree $\Delta$ and maximal edge multiplicity $\mu\geq\frac{\Delta}{6}\left(\sqrt{22}-2\right)\approx 0.45\Delta.$ Then $G$ has a $\Delta$-edge-colorable subgraph with at least $\frac{\Delta}{3\mu}|E|$ edges.
Moreover, the subgraph and its coloring can be found in polynomial time.
\end{corollary}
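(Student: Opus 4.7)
The plan is to derive the corollary from Theorem~\ref{thm:main_1} by instantiating $t = 3\mu$. First I would observe that any three-vertex subgraph of $G$ spans at most three distinct vertex pairs, each carrying at most $\mu$ parallel edges, so its edge count is bounded by $3\mu$. Hence the structural hypothesis of Theorem~\ref{thm:main_1} is satisfied with $t = 3\mu$. The lower bracketing inequality $t \geq (\tfrac{1}{2}\sqrt{22}-1)\Delta$ rewrites as $\mu \geq \tfrac{\sqrt{22}-2}{6}\Delta$, which is exactly the hypothesis of the corollary, so this bound comes for free.

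The only point that needs care is the upper bracketing inequality $3\mu \leq \lfloor\tfrac{3\Delta}{2}\rfloor$. I plan to split on whether $\mu \leq \Delta/2$. If so, a short parity check (treating $\Delta$ even and $\Delta$ odd separately, and using that $\mu$ is an integer) confirms $3\mu \leq \lfloor \tfrac{3\Delta}{2}\rfloor$, so Theorem~\ref{thm:main_1} applies verbatim and supplies both a $\Delta$-edge-colorable subgraph of size at least $\tfrac{\Delta}{3\mu}|E|$ and the polynomial-time algorithm. If instead $\mu > \Delta/2$, then $\tfrac{\Delta}{3\mu} < \tfrac{2}{3} \leq \tfrac{\Delta}{\lfloor 3\Delta/2\rfloor}$, so a constructive version of Shannon's theorem alone already produces a $\Delta$-edge-colorable subgraph of the required size in polynomial time.

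I do not foresee a genuine obstacle here: the argument is essentially a one-line reduction to Theorem~\ref{thm:main_1}, with the boundary regime $\mu > \Delta/2$ handled by the Shannon fallback. The only mildly non-obvious step is recognizing that the three-vertex edge condition in Theorem~\ref{thm:main_1} is automatic from the multiplicity bound, so no extra combinatorial work is needed beyond what the theorem already provides.
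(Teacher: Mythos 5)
Your proposal is correct and is exactly what the paper means by calling this an ``immediate corollary'' of Theorem~\ref{thm:main_1}: instantiate $t=3\mu$ (which bounds any three-vertex subgraph since there are only three vertex pairs), check that $3\mu\geq(\tfrac12\sqrt{22}-1)\Delta$ is equivalent to the multiplicity hypothesis, and invoke the theorem. Your extra care with the regime $\mu>\Delta/2$ (where $3\mu$ exceeds $\lfloor 3\Delta/2\rfloor$ and one falls back on Shannon's bound, which already beats $\Delta/(3\mu)$ there) is a good catch that the paper glosses over by remarking the corollary only improves Vizing in the range $\mu\in[\tfrac{\sqrt{22}-2}{6}\Delta,\Delta/2)$; your argument makes the stated corollary hold verbatim for all admissible $\mu$.
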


Below we state a (not immediate) corollary from Theorem~\ref{thm:main_2}. It will be useful in applications in approximation algorithms.

\begin{theorem}\label{thm:main}
Let $G$ be a connected multigraph of maximum degree $\Delta\ge 3$. Then $G$ has a $\Delta$-edge-colorable subgraph with at least 
\begin{enumerate}
 \item $\frac{2\Delta}{3\Delta-2}|E|$ edges when $\Delta$ is even and $G\ne \frac{\Delta}{2}K_3$,
 \item $\frac{2\Delta+1}{3\Delta}|E|$ edges when $\Delta$ is odd and $G\ne \frac{\Delta-1}{2}K_3+e$.
\end{enumerate}
Moreover, the subgraph and its coloring can be found in polynomial time.
\end{theorem}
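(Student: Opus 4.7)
The plan is to derive Theorem~\ref{thm:main} from Theorem~\ref{thm:main_2} by case analysis on the parity of $\Delta$, using the connectedness of $G$ to constrain how the forbidden subgraph of Theorem~\ref{thm:main_2} can appear in $G$. For even $\Delta\ge 4$ the argument is immediate: every vertex of $\tfrac{\Delta}{2}K_3$ already has degree $\Delta$, so no edge can leave this triple, and a connected multigraph containing $\tfrac{\Delta}{2}K_3$ must equal it; since this case is excluded, Theorem~\ref{thm:main_2} yields the required $\tfrac{2\Delta}{3\Delta-2}|E|$.

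For odd $\Delta$, if $G$ does not contain $\tfrac{\Delta-1}{2}K_3+e$ then Theorem~\ref{thm:main_2} gives $\tfrac{2\Delta}{3(\Delta-1)}|E|$ for $\Delta\ge 5$, while the Kami\'nski--Kowalik result~\cite{Bey} handles $\Delta=3$ with the tight ratio $\tfrac{7}{9}$; both dominate $\tfrac{2\Delta+1}{3\Delta}$ by a short cross-multiplication. Otherwise $G$ contains $S:=\tfrac{\Delta-1}{2}K_3+e$ on some vertices $\{u,v,w\}$, with $u,v$ of degree $\Delta$ in $S$ and $w$ of degree $\Delta-1$. Since $u,v$ are saturated in $S$ and $w$ has exactly one free degree, connectedness of $G$ forces the existence of a single outgoing edge $wx$ with $x\notin\{u,v,w\}$. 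Consequently $G^\ast:=G[V\setminus\{u,v,w\}]$ is a connected multigraph with $\Delta(G^\ast)\le\Delta$ and $\deg_{G^\ast}(x)\le\Delta-1$.

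The next step is induction on $|V(G)|$: apply the inductive hypothesis to $G^\ast$ --- or Theorem~\ref{thm:main_2} if its maximum degree is smaller, or Shannon's theorem when $\Delta(G^\ast)\le 2$ --- to obtain a $\Delta$-edge-colorable subgraph of $G^\ast$ of size at least $\tfrac{2\Delta+1}{3\Delta}|E(G^\ast)|$; the inequalities $\tfrac{2\Delta^\ast+1}{3\Delta^\ast}\ge\tfrac{2\Delta+1}{3\Delta}$ and $\tfrac{2\Delta^\ast}{3\Delta^\ast-2}\ge\tfrac{2\Delta+1}{3\Delta}$ for $\Delta^\ast\le\Delta$ follow by routine cross-multiplication. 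Independently, color $\Delta$ edges of $S$ consisting of one $uv$-edge together with $\Delta-1$ edges incident to $w$, using all $\Delta$ colors, so that exactly one color is free at $w$. After permuting the color classes of $G^\ast$ to align this free color with the color free at $x$ (which exists because $\deg_{G^\ast}(x)\le\Delta-1$), extend the coloring by assigning this shared color to $wx$.

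The principal obstacle is the accounting. Writing $m=|E(G^\ast)|$ so that $|E(G)|=\tfrac{3\Delta+1}{2}+m$, the total number of uncolored edges is at most $\tfrac{\Delta-1}{2}+\tfrac{\Delta-1}{3\Delta}m$, and the target bound reduces to the arithmetic inequality $\tfrac{\Delta-1}{2}\le\tfrac{(\Delta-1)(3\Delta+1)}{6\Delta}$, equivalently $3\Delta\le 3\Delta+1$, which is immediate. A few base cases require separate treatment --- for instance, $G^\ast$ being a 3-vertex bottleneck for its own maximum degree $\Delta^\ast<\Delta$, or $\Delta(G^\ast)\le 2$ --- but in each the maximum $\Delta$-edge-colorable subgraph of $G^\ast$ is easily computed (size $\min(|E(G^\ast)|,\Delta)$ in the first case, all of $G^\ast$ in the second since $\Delta\ge 3$) and the accounting is verified directly.
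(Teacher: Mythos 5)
Your strategy is essentially the same as the paper's for even $\Delta$ and for $\Delta=3$, but for odd $\Delta\ge 5$ you take a slightly different route. Where the paper excludes the doubled-exception graph up front, lets $\mathcal{H}$ collect \emph{all} subgraphs isomorphic to $\tfrac{\Delta-1}{2}K_3+e$, deletes them simultaneously to form $G'$, and then applies Theorem~\ref{thm:main_2} once to $G'$, you instead peel off one such subgraph $S$ at a time and induct on $|V(G)|$. Both decompositions lead to the same per-copy accounting ($\Delta+1$ colored edges out of $\tfrac{3\Delta+1}{2}$), and your arithmetic check that the residual $\tfrac{\Delta-1}{2}$ uncolored edges are absorbed is correct.

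There is, however, a genuine gap in the inductive step: you never say what happens when $G^\ast$ is itself equal to $\tfrac{\Delta-1}{2}K_3+e$ with $\Delta(G^\ast)=\Delta$, which is precisely the case where $G$ is two copies of $\tfrac{\Delta-1}{2}K_3+e$ joined at their degree-$(\Delta-1)$ vertices. In that case the inductive hypothesis does not apply (this $G^\ast$ is the excluded graph), so the recursion cannot proceed. The list of ``base cases'' you give covers only $\Delta^\ast<\Delta$ and $\Delta(G^\ast)\le 2$; the problematic case has $\Delta^\ast=\Delta$ and $|V(G^\ast)|=3$. It is true that your 3-vertex formula $c_\Delta(G^\ast)=\min\{|E(G^\ast)|,\Delta\}=\Delta$ still gives a valid lower bound here, and one can verify directly that the accounting is then exactly tight ($2\Delta+1$ colored edges out of $|E(G)|=3\Delta$); but this must be stated as a separate base case, since it is not subsumed by the inductive hypothesis or by the $\Delta^\ast<\Delta$ cases you list. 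The paper avoids the issue by handling the doubled exception explicitly at the outset before invoking Theorem~\ref{thm:main_2} on $G'$.
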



Again, the bounds in Theorem \ref{thm:main} are tight. The smallest examples are $\frac{\Delta}{2}K_3$ with one edge removed when $\Delta$ is even  and a graph consisting of two copies of $\frac{\Delta-1}{2}K_3+e$ with the two vertices of degree $\Delta-1$ joined by an edge when $\Delta$ is odd. 

\subsection{Approximation Algorithms}
One may ask why we study $\gamma_{\Delta}(G)$ and not, say $\gamma_{\Delta+1}(G)$. Our main motivation is that finding large $\Delta$-edge-colorable subgraphs has applications in approximation algorithms for the {\sc Maximum $k$-Edge-Colorable Subgraph} problem (aka Maximum Edge $k$-coloring~\cite{FOW02}). In this problem, we are given a graph $G$ (without any restriction on its maximum degree) and the goal is to compute a maximum $k$-edge colorable subgraph of $G$. It is known to be APX-hard when $k \geq 2$ \cite{CP80, H81, LG83}. The research on approximation algorithms for max $k$-ECS problem was initiated by Feige, Ofek and Wieder~\cite{FOW02}. (In the discussion below we consider only multigraphs, consult~\cite{Bey} for an overview for simple graphs.)

Feige et al.~\cite{FOW02} suggested the following simple strategy.
Begin with finding a maximum $k$-matching $F$ of the input graph, i.e.\ a subgraph of maximum degree $k$ which has maximum number of edges. This can be done in polynomial time (see e.g.~\cite{schrijver}). Since a $k$-ECS is a $k$-matching itself, $F$ has at least as many edges as the maximum $k$-ECS. Hence, if we color $\rho|E(F)|$ edges of $F$ we get a $\rho$-approximation.
If we combine this algorithm with (the constructive version of) Shannon's Theorem, we get $k/\lfloor\frac{3}{2}k\rfloor$-approximation.
By plugging in the Vizing's theorem for multigraphs, we get a $\frac{k}{k+\mu(G)}$-approximation, where $\mu(G)$ is the maximum edge multiplicity.

Feige et al.~\cite{FOW02} show a polynomial-time algorithm which, for a given multigraph and an integer $k$, finds a subgraph $H$ such that $|E(H)|\ge {\rm OPT}$, $\Delta(H)\le k+1$ and $\Gamma(H)\le k+\sqrt{k+1}+2$, where ${\rm OPT}$ is the number of edges in the maximum $k$-edge colorable subgraph of $G$, and $\Gamma(H)$ is the odd density of $H$, defined as $\Gamma(H)=\max_{S\subseteq V(H), |S|\ge 2}\frac{|E(S)|}{\lfloor |S|/2\rfloor}$. The subgraph $H$ can be edge-colored with at most $\max\{\Delta+\sqrt{\Delta/2},\lceil \Gamma(H)\rceil\}\le\lceil k + \sqrt{k+1} + 2 \rceil$ colors in $n^{O(\sqrt{k})}$-time by an algorithm of Chen, Yu and Zang~\cite{chen-jco}. By choosing the $k$ largest color classes as a solution this gives a $k/\lceil k + \sqrt{k+1} + 2 \rceil$-approximation. One can get a slightly worse $k/(k+(1+3/\sqrt{2})\sqrt{k} + o(\sqrt{k}))$-approximation by replacing the algorithm of Chen et al. by an algorithm of Sanders and Steurer~\cite{ss} which takes only $O(nk(n+k))$-time.
Note that in both cases the approximation ratio approaches 1 when $k$ approaches $\infty$.

{
  
\linespread{1.1}
\begin{table}[ht]
       \begin{center}
\begin{tabular}{|c||c|c||c|c|}
                       \hline
                       $k$ & previous ratio & reference & new ratio \\
                       \hline                  \hline
                       2 & $\frac{10}{13}$ & \cite{FOW02} & \\
                       3 & $\frac{7}{9}$ & \cite{swat,Bey} & \\
                       4 & $1-(\tfrac{3}4)^4$ & \cite{FOW02}& $\frac{5}{7}$ \\
                       5 & $\tfrac{5}{7}$ & \cite{shannon,FOW02} &  ${\frac{11}{15}}$ \\
                       $6,8,10,12$ & $\max\{\frac{2k}{3k},\frac{k}{k+\mu}\}$ & \cite{shannon,vizing,FOW02} &  $\max\{\frac{2k+2}{3k+2},\frac{k}{k+\mu}\}$ \\
                       $7,9, 11, 13$ & $\max\{\frac{2k}{3k-1},\frac{k}{k+\mu}\}$ & \cite{shannon,vizing,FOW02} &  $\max\{\frac{2k+1}{3k},\frac{k}{k+\mu}\}$ \\
                       $\ge 14$ & $\max\{\frac{k}{\lceil k+\sqrt{k+1}+2\rceil},\frac{k}{k+\mu}\}$ & \cite{chen-jco,vizing,FOW02} & \\
                       \hline
               \end{tabular}
      \end{center}
       \caption{\label{piekna-tabelka}Approximating Maximum $k$-Edge-Colorable Subgraph in multigraphs}
\end{table}
 
}
The results above work for all values of $k$. However, for small values of $k$ tailor-made algorithms are known, with much better approximation ratios. 
Feige et al.~\cite{FOW02} proposed a $\frac{10}{13}$-approximation algorithm for $k=2$ based on an LP relaxation.
They also analyzed a simple greedy algorithm and showed that it has approximation ratio $1-(1-\frac{1}k)^k$, which is still the best result for the case $k=4$ in multigraphs. For $k=3$ Shannon's bound gives a $3/4$-approximation. However, Kami\'nski and Kowalik~\cite{swat,Bey} showed that $K_3+e$ is the only tight example for the Shannon's bound in subcubic graphs; otherwise $\gamma(G)\le \frac{7}{9}$.
One cannot combine this result directly with the $k$-matching technique, since the $k$-matching may contain components isomorphic to $K_3+e$. However, inspired by a paper of Kosowski~\cite{K09}, Kami\'nski and Kowalik~\cite{swat,Bey} showed a general algorithmic technique which leads to improved approximation factors even if the bound on $\gamma(G)$ does not hold for a few special graphs. Using this technique they get a $\frac{7}{9}$-approximation for $k=3$.  

In this paper we also apply the constructive versions of our combinatorial bounds with the algorithmic technique from~\cite{swat,Bey} and we obtain new approximation algorithms which improve the previously known approximation ratios for $4\leq k\leq 13.$ The current state of art in approximating {\sc Maximum $k$-Edge-Colorable Subgraph} for multigraphs is given in Table~\ref{piekna-tabelka}.

\subsection{Preliminaries} 
Our notation is mostly consistent with the one used in \cite{Bey}, which we recall below.

Let $G=(V,E)$ be a graph. For a vertex $x\in V$ by $N(x)$ we denote the set of neighbors of $x$ and $N[x]=N(x)\cup\{x\}$. For a set of vertices $S$ we denote $N(S)=\bigcup_{x\in S}N(x)\setminus S$ and $N[S]=\bigcup_{x\in S}N[x]$. 
We also denote by $I[S]$ the subgraph whose set of vertices is $N[S]$ and set of edges is the set of edges of $G$ incident with $S$. For a subgraph $H$ of $G$ we denote $N[H]=N[V(H)]$ and $I[H]=I[V(H)]$.


A {\em partial $k$-coloring} of a graph $G=(V,E)$ is a function $\pi : E \rightarrow \{1, \ldots, k\} \cup \{\bot\}$ such that if two edges $e_1, e_2\in E$ are incident then $\pi(e_1)\neq\pi(e_2)$, or $\pi(e_1)=\pi(e_2)=\bot$. From now on by a {\em coloring} of a graph we will mean a partial $\Delta(G)$-coloring. We say that an edge $e$ is {\em uncolored} if $\pi(e)=\bot$; otherwise, we say that $e$ is {\em colored}. For a vertex $v$, ${\pi}(v)$ is the set of colors of edges incident with $v$, i.e.\ $\pi(v)=\{\pi(e)\ :\ e\in I[v]\}\setminus\{\bot\}$, while $\overline{\pi}(v)=\{1, \ldots, k\}\setminus \pi(v)$ is the set of free colors at $v$. 

Let $V_{\bot} = \{v \in V\ :\ \f(v)\ne \emptyset\}$. In what follows, $\bot(G,\pi)=(V_{\bot},\pi^{-1}(\bot))$ is called {\em the graph of free edges}.
Every connected component of the graph $\bot(G,\pi)$ is called a {\em free component}. If a free component has only one vertex, it is called {\em trivial}.

Below we state a few lemmas proved in \cite{Bey} which will be useful in the present paper. Although the lemmas were formulated for simple graphs one can easily check that the proofs apply to multigraphs as well.

\begin{lemma}[\cite{Bey}, Lemma 7]
\label{lem:distinct-free}
Let $(G,\pi)$ be a colored graph that maximizes the number of colored edges.
For any free component $Q$ of $(G,\pi)$ and for every two distinct vertices $v,w\in V(Q)$
\begin{enumerate}[(a)]
 \item $\f(v) \cap \f(w) = \emptyset$,
 \item for every $a\in \f(v)$, $b\in\f(w)$ there is an $(ab, vw)$-path. 
\end{enumerate}
\end{lemma}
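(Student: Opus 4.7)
The plan is to derive both parts from the maximality of the number of colored edges via two Kempe-chain swaps: part~(a) by a minimal-counterexample argument on the uncolored distance, and (b) as a direct consequence of (a).

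For (a), I would argue by contradiction. Among all distinct pairs $v,w$ lying in a common free component with $\f(v)\cap\f(w)\neq\emptyset$, pick one minimizing the distance in $\bot(G,\pi)$, and let $c$ be a common free color. If that distance equals $1$ then $vw$ is uncolored, and setting $\pi(vw):=c$ strictly increases the number of colored edges, contradicting maximality. Otherwise let $u$ be the neighbor of $v$ on a shortest uncolored $v$-to-$w$ path. The distance-$1$ case applied to the adjacent pair $(v,u)$ gives $\f(v)\cap\f(u)=\emptyset$; in particular any $c'\in\f(u)$ satisfies $c'\neq c$ and $c'\in\pi(v)$. Let $P$ be the maximal $(c,c')$-alternating colored path starting at $v$; it begins with the $c'$-edge at $v$ and cannot close into a cycle at $v$ because $c\in\f(v)$.

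Now swap the two colors along $P$. Since $\bot(G,\pi)$ is untouched, the free components and all uncolored distances are preserved. Because $c'\in\f(u)$, the vertex $u$ has no $c'$-edge, so $u$ cannot be interior to $P$; hence either $u\notin V(P)$ or $u$ is the far endpoint of $P$. In the former case $c'$ remains in $\f(u)$, while the swap puts $c'$ into $\f(v)$, so the adjacent pair $(v,u)$ shares the free color $c'$, contradicting the distance-$1$ base case. In the latter case the alternation together with $c'\in\f(u)$ forces the last edge of $P$ to be colored $c$, so after the swap $c\in\f(u)$; moreover $w\notin V(P)$, because an interior visit to $w$ would require a $c$-edge at $w$ (excluded by $c\in\f(w)$) and the endpoints of $P$ are $v$ and $u\neq w$. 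Thus $c$ remains in $\f(w)$, the pair $(u,w)$ shares the free color $c$, and its uncolored distance is strictly smaller than that of $(v,w)$, contradicting minimality.

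For (b), fix $a\in\f(v)$ and $b\in\f(w)$; part (a) already forces $a\neq b$. Let $R$ be the maximal $(a,b)$-alternating colored path starting at $v$, whose first edge (if any) carries color $b$. If the far endpoint of $R$ is not $w$, swap $a$ and $b$ along $R$; as in the analysis above $w\notin V(R)$, so $b$ stays in $\f(w)$, while the swap inserts $b$ into $\f(v)$ (or, if $R$ is the trivial path, $b\in\f(v)$ already). This yields $b\in\f(v)\cap\f(w)$, contradicting (a). Hence $R$ ends at $w$, and since $b\in\f(w)$ forbids a $b$-edge at $w$ the last edge of $R$ must be colored $a$, so $R$ is the required $(ab,vw)$-path. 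The main obstacle I expect is the chain-placement bookkeeping in (a): ruling out every position of $u$ on $P$ except the far endpoint and verifying that $w$ stays off $P$, so that the swap produces a strictly closer violating pair instead of merely relocating the problem. Once this geometry is nailed down, the rest is local recoloring that exploits maximality.
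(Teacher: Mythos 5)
Your argument is correct in substance, and since the paper only cites~\cite{Bey} for this lemma without reproducing a proof, there is no in-paper proof to compare against; what you give is the natural Kempe-chain argument one would expect.

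There is, however, one bookkeeping step in part~(a) that does not close as written. You fix the coloring $\pi$ and pick a violating pair $(v,w)$ of minimum uncolored distance \emph{in $\pi$}. But after the $(c,c')$-swap you are working in a new maximum coloring $\pi'$, and the pair $(u,w)$ you exhibit shares the color $c$ only in $\pi'$, not in $\pi$ (in $\pi$ the vertex $u$ has a $c$-colored edge, so $c\notin\f(u)$). Hence ``contradicting minimality'' does not literally apply to the minimum you defined. The fix is easy and does not change the idea: either take the minimum over all pairs $\bigl(\pi,(v,w)\bigr)$ with $\pi$ ranging over all colorings maximizing the number of colored edges, or, more cleanly, run a strong induction on the uncolored distance $d$, asserting that for \emph{every} maximum coloring and every pair at uncolored distance $d$ in a common free component the free-color sets are disjoint; the base $d=1$ is your direct coloring of the joining uncolored edge, and your swap produces a maximum coloring in which a violating pair occurs at distance $1$ (Case~1) or $d-1$ (Case~2). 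With that rephrasing the chain-placement analysis you flag as the main obstacle — $u$ cannot be interior to $P$ since it lacks a $c'$-edge, and $w$ stays off $P$ since it lacks a $c$-edge and is neither endpoint — is correct and the argument closes. Part~(b) is fine as stated: the swap on the maximal $(a,b)$-chain $R$ from $v$ preserves the number of colored edges and the free components, keeps $b$ free at $w$ (as $w\notin V(R)$), and makes $b$ free at $v$, contradicting~(a) unless $R$ ends at $w$ with an $a$-colored edge, which is exactly the $(ab,vw)$-path.
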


For a free component $Q$, by $\f(Q)$ we denote the set of free colors at the vertices of $Q$, i.e.\ $\f(Q)=\bigcup_{v\in V(Q)}\f(v)$.

\begin{corollary}[\cite{Bey}, Lemma 8]
\label{cor:num-uncolored-0}
Let $(G,\pi)$ be a colored graph that maximizes the number of colored edges.
For any free component $Q$ of $(G,\pi)$ we have $|\f(Q)|\ge 2|E(Q)|$. In particular $Q$ has at most $\left\lfloor\frac{\Delta}{2}\right\rfloor$ edges.
\end{corollary}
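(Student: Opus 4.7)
My plan is to combine Lemma \ref{lem:distinct-free}(a) with a vertex-by-vertex degree count. The first step is to observe that, by part (a) of that lemma, the sets $\f(v)$ for $v \in V(Q)$ are pairwise disjoint, so
$$|\f(Q)| \;=\; \sum_{v \in V(Q)} |\f(v)|.$$

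Next, I would bound each $|\f(v)|$ from below by $d_Q(v)$, the degree of $v$ in the free-edge graph $\bot(G,\pi)$. This follows from a direct local count: incident edges at $v$ receive pairwise distinct colors, so the colored edges at $v$ use exactly $\deg_G(v) - d_Q(v)$ colors out of the $\Delta$ available, giving $|\f(v)| = \Delta - \deg_G(v) + d_Q(v) \geq d_Q(v)$ because $\deg_G(v)\le\Delta$.

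Plugging this lower bound into the previous identity and applying the handshake identity inside $Q$ yields
$$|\f(Q)| \;\geq\; \sum_{v \in V(Q)} d_Q(v) \;=\; 2|E(Q)|,$$
which is the first claim. The second follows immediately: since $\f(Q) \subseteq \{1,\ldots,\Delta\}$, we get $2|E(Q)| \le \Delta$, hence $|E(Q)| \le \lfloor \Delta/2 \rfloor$.

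I do not expect a genuine obstacle here: once Lemma \ref{lem:distinct-free}(a) is invoked as a black box, the proof is a two-line counting argument. The only conceptual content is that the disjointness of free color sets within $Q$ (which is where the maximality of the number of colored edges enters, via Vizing-style recoloring along alternating paths) is precisely what converts a global statement about the free component into a local per-vertex estimate.
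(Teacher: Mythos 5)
Your proof is correct and follows the natural route from Lemma~\ref{lem:distinct-free}(a) that the paper intends (the statement is labelled a Corollary precisely because it drops out of that disjointness together with the local degree count $|\f(v)|=\Delta-\deg_G(v)+d_Q(v)\ge d_Q(v)$ and the handshake lemma). The one implicit point you rely on --- that every uncolored edge incident to a vertex $v\in V(Q)$ is an edge of $Q$ --- is indeed valid, since both endpoints of any uncolored edge lie in $V_\bot$, so all uncolored edges at $v$ belong to $v$'s connected component of $\bot(G,\pi)$.
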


Let $Q_1,Q_2$ be two distinct free components of $(G,\pi)$. Assume that for some pair of vertices $x\in V(Q_1)$ and $y\in V(Q_2)$, there is an edge $xy\in E$ such that $\pi(xy)\in\f(Q_1)$. Then we say that {\em $Q_1$  sees $Q_2$ with $xy$}, or shortly {\em $Q_1$ sees $Q_2$}.

\begin{lemma}[\cite{Bey}, Lemma 10]
\label{lem:A}
Let $(G,\pi)$ be a colored graph that maximizes the number of colored edges.
If $Q_1,Q_2$ are two distinct free components of $(G,\pi)$ such that $Q_1$ sees $Q_2$ then
$\f(Q_1) \cap \f(Q_2) = \emptyset$.
\end{lemma}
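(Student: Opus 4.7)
The plan is to argue by contradiction via a single Kempe chain swap, arranged so that the resulting coloring violates Lemma~\ref{lem:distinct-free}(a). Suppose $\beta\in\f(Q_1)\cap\f(Q_2)$, and let $v'\in V(Q_1)$, $w'\in V(Q_2)$ be the unique vertices (unique by Lemma~\ref{lem:distinct-free}(a) inside each component) with $\beta\in\f(v')$ and $\beta\in\f(w')$. Write $\alpha=\pi(xy)$ and let $v\in V(Q_1)$ be the unique vertex with $\alpha\in\f(v)$; then $v\ne x$, since $\alpha$ is used at $x$ via the edge $xy$. I would also observe that $\alpha\ne\beta$: if $\alpha=\beta$, then $\beta\in\f(v)$, so one could hope to recolor $xy$ directly and apply Lemma~\ref{lem:distinct-free}(a) inside $Q_1$ to get a contradiction.

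The engine of the argument is the standard fact that swapping colors along a maximal $(\alpha,\beta)$-alternating path preserves both the number of colored edges and the set of uncolored edges. The resulting coloring $\pi'$ is therefore still optimal, and its free components are exactly $Q_1,Q_2,\ldots$, so Lemma~\ref{lem:distinct-free}(a) still applies component by component. Concretely, I consider the maximal $(\alpha,\beta)$-alternating path $P$ with $w'$ as one endpoint (an endpoint because $\beta\in\f(w')$), call the other endpoint $z$, and perform the swap. A vertex with $\alpha$ or $\beta$ free cannot be internal to any $(\alpha,\beta)$-chain, so neither $v$ (for which $\alpha\in\f(v)$) nor $v'$ (for which $\beta\in\f(v')$) lies on $P$; thus their free-color sets are unchanged in $\pi'$, while at the two endpoints $w'$ and $z$ the element of $\{\alpha,\beta\}$ that is free flips.

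The main case analysis is on the location of $z$. If $z\in V(Q_1)$, then $z\notin\{v,v'\}$ (as noted above), and in $\pi'$ the vertex $z$ has either $\alpha$ or $\beta$ newly free; the same color remains free at $v$ or $v'$ respectively, giving two distinct vertices of $Q_1$ with a common free color in the optimal coloring $\pi'$, contradicting Lemma~\ref{lem:distinct-free}(a). If $z\in V(Q_2)\setminus\{w'\}$, the analogous contradiction occurs inside $Q_2$, using $w'$ (whose free color also flips) against $z$. The degenerate situation $v=v'$ (where $\alpha,\beta\in\f(v)$) is in fact the easiest: the $(\alpha,\beta)$-chain through $v$ is trivial, so $v\notin V(P)$ automatically, and in Case $z\in V(Q_1)$ the collision in $\pi'$ is with the single vertex $v$, which holds both free colors.

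The step I expect to be the hardest is the remaining case, in which $z$ lies outside $V(Q_1)\cup V(Q_2)$ (or the trivial chain $P=\{w'\}$). Here the one-shot Lemma~\ref{lem:distinct-free}(a) argument does not produce a direct contradiction. My plan is to examine $\pi'$, in which $\alpha$ has become free at $w'\in V(Q_2)$ while still being free at $v\in V(Q_1)$, and show that the edge $xy$ (whose color in $\pi'$ is either $\alpha$ or $\beta$, according to whether $xy\in P$) still witnesses a ``sees'' relation between $Q_1$ and $Q_2$, with the seeing color itself being a new common free color; iterating or choosing the chain from $v'$ instead of $w'$ should force one of the earlier cases, or alternatively expose an uncolored edge that can be colored, contradicting the optimality of $\pi$ outright. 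The bookkeeping needed to prove that this iteration terminates is what I expect to be the technical core of the proof.
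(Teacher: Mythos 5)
The high-level idea (Kempe swap plus Lemma~\ref{lem:distinct-free}) is the right one, but the proof you sketch is not complete, and the parts you dismiss as easy or defer as ``bookkeeping'' are in fact where the whole difficulty lies.

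First, the claim $\alpha\ne\beta$ is asserted but not proved: ``one could hope to recolor $xy$ directly'' does not describe an actual recoloring, since $xy$ is colored and can only be \emph{un}colored (losing an edge) or moved to some color free at both ends (which need not exist). The degenerate case $\pi(xy)\in\f(Q_2)$ is not a triviality to be brushed aside; indeed, as shown below, the generic case reduces to it.

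Second, your two ``easy'' cases are vacuous. Write $v,v'\in V(Q_1)$ for the vertices with $\alpha,\beta$ free, and $w'\in V(Q_2)$ for the vertex with $\beta$ free. If $v\ne v'$, Lemma~\ref{lem:distinct-free}(b) says the maximal $(\alpha,\beta)$-chain through $v$ ends precisely at $v'$, and $w'$ cannot lie on it (it would have to be an endpoint, but the endpoints are $v,v'\in V(Q_1)$); if $v=v'$ that chain is trivial. Either way, the other endpoint $z$ of $w'$'s chain is never $v$ or $v'$, and by Lemma~\ref{lem:distinct-free}(a) those are the only vertices of $Q_1$ that could serve as a chain endpoint for these two colors. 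So $z\in V(Q_1)$ cannot occur. Likewise $z\in V(Q_2)\setminus\{w'\}$ forces $\alpha\in\f(z)\subseteq\f(Q_2)$, which is exactly the degenerate case $\alpha\in\f(Q_1)\cap\f(Q_2)$ you set aside. Thus every case that your case analysis actually reaches is one you do not resolve.

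Third, the proposed iteration for $z\notin V(Q_1)\cup V(Q_2)$ does not visibly terminate. After the swap, $\alpha$ becomes a common free color of $Q_1$ and $Q_2$, and the color of $xy$ is $\beta$ if $xy$ lay on the chain (giving the original situation with $\alpha,\beta$ interchanged) or still $\alpha$ otherwise (giving exactly the degenerate case $\pi'(xy)\in\f'(Q_1)\cap\f'(Q_2)$). Repeating the swap from the other side can simply undo progress. No potential function or invariant is given that would show this loop must end, and without one the argument is circular. The ``technical core'' you identify is indeed the whole proof, and it is missing.
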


We use the notion of the potential function $\Psi$ introduced in \cite{Bey}:
$$\Psi(G,\pi)=(c,n_{\floor{\Delta/2}},n_{\floor{\Delta/2}-1},\ldots,n_1),$$
where $c$ is the number of colored edges, i.e.\ $c=|\pi^{-1}(\{1,\ldots,\Delta\})|$ and $n_i$ is the number of free components with $i$ edges for every $i=1,\ldots,\floor{\Delta/2}$.

\subsection{Our Approach and Organization of the Paper}
Informally, our plan for proving the main results is to consider a coloring that maximizes the potential $\Psi$ and injectively assign many colored edges to every free component in the coloring. To this end we introduce edges controlled by a component (each of them will be assigned to the component which controls it) and edges influenced by a component (as we will see every edge is influenced by at most two components; if it is influenced by exactly two components, we will assign {\em half} of the edge to each of the components).

In Section~\ref{sec:struct} we develop structural results on colorings maximizing $\Psi$. Informally, these results state that in such a coloring every free component influences/controls many edges. Then, in Subsection~\ref{sec:charge} we prove lower bounds for the number of edges assigned to various types of components, using a convenient formalism of sending charge.

In Section~\ref{sec:collapsing} we develop a method of collapsing subgraphs. We use it to reduce some special graphs, for which maximizing $\Psi$ does not give a sufficiently good result, to graphs better suited to our needs.
The paper concludes with Section~\ref{sec:proof} containing the proofs of the main results.

\section{The structure of colorings maximizing $\Psi$}
\label{sec:struct}

\subsection{Moving free components}

Note that if $P$ and $Q$ are distinct free components of a coloring $(G,\pi)$ then $E(P)\cap E(I[Q])=\emptyset$. 

\begin{definition}
Let $(G,\pi)$ be a colored graph 
and let $P$ be a nontrivial free component of $(G,\pi)$. An {\em elementary move} of $P$ in $\pi$ is a coloring $\pi '$ such that:
\begin{enumerate}
\item $\pi '$ can be obtained from $\pi$ by uncoloring $k$ edges of $I[P]\setminus E(P)$ and coloring $k$ edges of $P$ for some $k\geq 0$,
\item $\pi '|_{I[P]}$ has exactly one nontrivial free component, denote it $P'$.
\end{enumerate}
If the above holds we say that $\pi '$ and $P'$ have been obtained respectively from $\pi$ and $P$ by an elementary move.
Sometimes we will write shortly {\em $\pi'$ is an elementary move of $\pi$}, meaning that $\pi'$ is an elementary move of a free component of $\pi$.
\end{definition}

Note that in particular $\pi$ is the trivial elementary move of any of its components.
Note also that $\pi$ and $\pi '$ have the same number of uncolored edges. Furthermore the component $P$ is either replaced with a component $P'$ or merged with a component $Q$ into $P'\cup Q$. Either way an elementary move does not decrease the potential $\Psi$. Furthermore we have the following:

\begin{remark}\label{ruch_potencjal}
If $\pi$ maximizes the potential $\Psi$ then an elementary move $\pi'$ of a component $P$ cannot cause a merge of nontrivial components and hence $P'$ is a free component of $\pi '$.
\end{remark}

Thus if $\pi$ maximizes the potential $\Psi$ and $\pi '$ is an elementary move of $\pi$ then there is a one-to-one correspondence between free components of $\pi$ and free components of $\pi '$. For a free component $P'$ of $\pi '$ and a free component $P$ of $\pi$ we write that $P'=P(\pi)$ if either $P=P'$ or $\pi '$ is the elementary move of $P$ to $P'$.

\begin{lemma}\label{lem:elem-move-local}
 Let $(G,\pi)$ be a coloring that maximizes the potential $\Psi$ and let $P$ and $Q$ be two distinct nontrivial free components. 
 Let $\pi'$ be an elementary move of $P$ in $\pi$.
 Then, $\pi'|_{E(I[Q])} = \pi|_{E(I[Q])}$.
\end{lemma}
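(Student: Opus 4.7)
The plan is a proof by contradiction. Suppose some edge $e$ has $\pi(e) \neq \pi'(e)$ while $e \in E(I[Q])$. An elementary move only changes colors on edges of $I[P]$: it uncolors some edges of $I[P] \setminus E(P)$ and colors some edges of $E(P)$. Since $P$ and $Q$ are distinct free components, their vertex sets are disjoint, so every edge of $E(P)$ has both endpoints in $V(P)$ and is therefore not incident to $V(Q)$; that is, $E(P) \cap E(I[Q]) = \emptyset$. Consequently $e$ must lie in $I[P] \setminus E(P)$ and the move must uncolor it. Using the disjointness $V(P) \cap V(Q) = \emptyset$, the only way $e$ can sit in both $I[P]$ and $I[Q]$ is for its endpoints to be one $v \in V(P)$ and the other $w \in V(Q)$.

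Next I would use the uncolored edge $e$ to locate $w$ inside $V(P')$. After the move, $e$ is uncolored, so $\pi(e)$ is a free color at both $v$ and $w$, placing both in $V_{\bot}(\pi')$ and connecting them by an uncolored edge of $I[P]$. In the restricted coloring $\pi'|_{I[P]}$ they therefore belong to a common nontrivial free component, which by the defining property of an elementary move must be $P'$. Hence $v,w \in V(P')$, and by Remark~\ref{ruch_potencjal} (which we may invoke since $\pi$ maximizes $\Psi$) $P'$ is a free component of $\pi'$ itself.

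The remaining step, which I expect to be the main technical point, is to upgrade ``$w \in V(P')$'' to ``$V(Q) \subseteq V(P')$'', so as to exhibit an actual merger of the two distinct nontrivial components $P$ and $Q$ of $\pi$. For this I need to check that the move leaves the component $Q$ intact inside $\bot(G,\pi')$. Every edge of $E(Q)$ has both endpoints in $V(Q)$, hence is not in $I[P]$, so it is untouched by the move and remains uncolored in $\pi'$. For any $u \in V(Q)$, the move can only uncolor edges incident to $u$ (coloring happens only inside $E(P)$, and $u \notin V(P)$), so $\overline{\pi'}(u) \supseteq \overline{\pi}(u) \neq \emptyset$, giving $u \in V_{\bot}(\pi')$. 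Thus $V(Q)$ is still connected inside $\bot(G,\pi')$ via $E(Q)$, so it lies in a single free component of $\pi'$, which must be $P'$ because $w \in V(Q) \cap V(P')$. Now $P'$ is a free component of $\pi'$ that contains $v \in V(P)$ and all of $V(Q)$, i.e., the move merged the two distinct nontrivial free components $P$ and $Q$ of $\pi$, contradicting Remark~\ref{ruch_potencjal}. Therefore no such $e$ exists and $\pi'|_{E(I[Q])} = \pi|_{E(I[Q])}$.
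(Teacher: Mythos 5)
Your proof is correct and follows essentially the same route as the paper's: after noting that a recolored edge in $E(I[Q])$ must be an uncolored edge of the form $vw$ with $v\in V(P)$, $w\in V(Q)$, you conclude that this would merge $P$ and $Q$ into a single free component of $\pi'$ and thereby increase $\Psi$, contradicting maximality. The paper compresses this to the single assertion ``$E_1\cap E(I[Q])=\emptyset$ for otherwise $\Psi(G,\pi')>\Psi(G,\pi)$,'' whereas you spell out the merge argument and channel it through Remark~\ref{ruch_potencjal}; both rest on the same underlying observation.
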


\begin{proof}
 Recall that $\pi'$ is obtained from $\pi$ by uncoloring some edges $E_1\subseteq E(I[P])$, and coloring some edges $E_2\subseteq E(P)$.
 Then $E_2\cap E(I[Q])=\emptyset$ since $P$ and $Q$ are distinct free components.
 Moreover, $E_1\cap E(I[Q])=\emptyset$ for otherwise $\Psi(G,\pi')>\Psi(G,\pi)$.
\end{proof}

\begin{lemma}\label{przemienne_ruchy}
Let $(G,\pi)$ be a coloring that maximizes the potential $\Psi$ and let $P$ and $Q$ be two distinct nontrivial free components. Suppose that $P'$ and $Q'$ can be obtained respectively by elementary moves $\pi_P$ and $\pi_Q$ of $P$ and $Q$. Then:
\begin{enumerate}[$(i)$]
\item $Q$ is a free component of $\pi_P$,
\item $P$ is a free component of $\pi_Q$,
\item $\pi_P|_{E(I[P])\cap E(I[Q])}=\pi_Q|_{E(I[P])\cap E(I[Q])}$,
\item let us define $\pi':E(G)\rightarrow\{1,\ldots,\Delta\}\cup\{\bot\}$ as follows
\begin{eqnarray*}
\pi'|_{E(G)\setminus(E(I[P])\cup E(I[Q]))}&=&\pi|_{E(G)\setminus(E(I[P])\cup E(I[Q]))}, \\
\pi'|_{E(I[P])}&=&\pi_P|_{E(I[P])}, \\ 
\pi'|_{E(I[Q])}&=&\pi_Q|_{E(I[Q])}.
\end{eqnarray*}
Then, $\pi'$ is an elementary move of $Q$ in $\pi_P$ such that $Q(\pi')=Q'$ and
      $\pi'$ is an elementary move of $P$ in $\pi_Q$ such that $P(\pi')=P'$.
\end{enumerate}
\end{lemma}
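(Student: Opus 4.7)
The plan is to exploit the locality of elementary moves under the hypothesis that $\pi$ maximizes $\Psi$. The key workhorse will be Lemma~\ref{lem:elem-move-local}, which already tells us that $\pi_P$ agrees with $\pi$ on $E(I[Q])$ and $\pi_Q$ agrees with $\pi$ on $E(I[P])$; combined with the disjointness $V(P)\cap V(Q)=\emptyset$ (distinct free components do not share a vertex), this should render the two moves essentially independent, so that each can be performed on top of the other.

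I would first settle (iii) as an immediate corollary: on $E(I[P])\cap E(I[Q])$ both $\pi_P$ and $\pi_Q$ coincide with $\pi$ by Lemma~\ref{lem:elem-move-local}, hence with each other. For (i), since $\pi_P$ coincides with $\pi$ on all of $E(I[Q])$, the edges of $Q$ stay uncolored, every vertex of $V(Q)$ keeps its free-color set intact, and no new uncolored edge appears at $V(Q)$; therefore $Q$ remains a connected component of the graph of free edges of $\pi_P$. Part (ii) is symmetric.

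For (iv) I would proceed in three steps. Well-definedness of $\pi'$ reduces to consistency on $E(I[P])\cap E(I[Q])$, which is exactly (iii), the three pieces otherwise covering $E(G)$ in disjoint parts. For validity, the only delicate case is an incident pair $e_1\in E(I[P])\setminus E(I[Q])$, $e_2\in E(I[Q])\setminus E(I[P])$; using $V(P)\cap V(Q)=\emptyset$, such edges must meet at a vertex outside $V(P)\cup V(Q)$, and each of them either keeps its $\pi$-color or becomes $\bot$ in $\pi'$, so the validity of $\pi$ transfers to $\pi'$. Finally, to check that $\pi'$ is an elementary move of $Q$ in $\pi_P$ sending $Q$ to $Q'$, I would compare $\pi'$ with $\pi_P$: they differ exactly on the edges of $E(I[Q])$ where $\pi_Q$ differs from $\pi$; a short argument using Lemma~\ref{lem:elem-move-local} together with $E(Q)\cap E(I[P])=\emptyset$ confirms that these differences lie in $(E(I[Q])\setminus E(Q))\setminus E(I[P])$ (the newly uncolored edges) and $E(Q)$ (the newly colored edges), which are precisely the admissible operations of an elementary move of $Q$. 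The nontrivial free component of $\pi'|_{I[Q]}$ is then the same as that of $\pi_Q|_{I[Q]}$, namely $Q'$, so $Q(\pi')=Q'$. The symmetric statement about $P$ in $\pi_Q$ follows identically.

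The main obstacle will be keeping the bookkeeping clean in part (iv): one must partition $E(G)$ carefully and verify that the move-operations of $\pi_P$ and $\pi_Q$ on $\pi$ truly commute because they act on disjoint edge sets. Everything ultimately hinges on the two facts $V(P)\cap V(Q)=\emptyset$ and Lemma~\ref{lem:elem-move-local}, so the proof should read as a straightforward verification once these are in hand.
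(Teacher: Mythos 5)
Your proposal is correct and follows essentially the same route as the paper's proof: both hinge entirely on Lemma~\ref{lem:elem-move-local} to reduce $(i)$--$(iii)$ to immediate restrictions, and both establish $(iv)$ by checking well-definedness via $(iii)$, propriety via the locality of the two moves, and the elementary-move conditions by noting that $\pi'$ differs from $\pi_P$ exactly where $\pi_Q$ differs from $\pi$, so $\pi'|_{I[Q]}=\pi_Q|_{I[Q]}$ has the unique nontrivial free component $Q'$. The only cosmetic difference is in the propriety check, where the paper observes that only edges of $E(P)\cup E(Q)$ gain a color, while you instead enumerate incident-edge cases across the boundary of $E(I[P])$ and $E(I[Q])$; the two viewpoints are equivalent.
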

\begin{proof}
$(i)$ and $(ii)$ follow directly from Lemma~\ref{lem:elem-move-local}. 
For $(iii)$, note that if $e\in E(I[P])\cap E(I[Q])$ then $e\not\in E(Q)$, since $P$ and $Q$ are distinct free components.
Hence, by Lemma~\ref{lem:elem-move-local}, $\pi_P(e)=\pi(e)$. By symmetry, also $\pi_Q(e)=\pi(e)$, which proves the claim.
Note that $(iii)$ implies that $\pi'$ is well-defined in $(iv)$.
If we compare $\pi'$ to $\pi$ we see that only edges in $E(P)$ and $E(Q)$ get new colors. Since the former are incident only to edges in $I(P)$ and the latter only to edges in $I(Q)$, and $\pi_P$ and $\pi_Q$ are proper partial edge-colorings (i.e.\ incident edges get different colors), so $\pi'$ is also a proper partial edge-coloring. The rest of the claim follows from Lemma~\ref{lem:elem-move-local}.
\end{proof}

\begin{lemma}\label{lem:ruch_id}
Let $(G,\pi)$ be a coloring that maximizes the potential $\Psi$. Let $\pi_0, \pi_1, \ldots, \pi_k$ be a sequence of colorings such that $\pi_0=\pi_k=\pi$ and for $i=1,\ldots,k$ coloring $\pi_i$ is an elementary move of a free component in $\pi_{i-1}$. Let $P_0$ be a free component of $\pi_0$ and let $P_i=P_{i-1}(\pi_i)$. Then $P_k=P_0$.
\end{lemma}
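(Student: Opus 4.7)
The plan is to prove the lemma by reshuffling the sequence of elementary moves so that all moves of $P_0$'s trajectory appear consecutively at the beginning, and then concluding by a no-collision argument. The reshuffling is powered by Lemma~\ref{przemienne_ruchy} used as an exchange lemma. To set the stage, I first observe that each $\pi_i$ maximizes $\Psi$ (since $\Psi$ is nondecreasing under elementary moves and $\Psi(\pi_0)=\Psi(\pi_k)$), so Remark~\ref{ruch_potencjal} applies at every step. Together with Lemma~\ref{lem:elem-move-local} this gives, at each step $i$, a bijection $\phi_i\colon\mathcal{F}_{i-1}\to\mathcal{F}_i$ between free components, which is the identity on components different from $C_i$ and sends $C_i$ to $C_i'$. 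The tracking $P_0\mapsto P_i$ is exactly the value of $\phi_i\circ\cdots\circ\phi_1$ applied to $P_0$.

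The key step I would establish is an exchange property: if two consecutive moves at steps $i$ and $i+1$ satisfy $\phi_i^{-1}(C_{i+1})\neq C_i$, then by Lemma~\ref{przemienne_ruchy} (i)--(iv) we may replace these two steps by the swapped pair --- first moving $\phi_i^{-1}(C_{i+1})$, then $C_i$ --- obtaining the same $\pi_{i+1}$. A short case analysis on whether $P_{i-1}$ equals $C_i$, equals $\phi_i^{-1}(C_{i+1})$, or neither (the ``both'' case being excluded by hypothesis), using Lemma~\ref{przemienne_ruchy}(iv) to identify the resulting components, shows that the tracked $P_{i+1}$ is unchanged by the swap.

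Applying the exchange repeatedly, I would bring the $m$ moves of $P_0$'s trajectory to positions $1,\ldots,m$, leaving the remaining $k-m$ moves (all of components different from the current tracked one) in positions $m+1,\ldots,k$. Let $\pi'_m$ denote the coloring after the trajectory block. By Lemma~\ref{lem:elem-move-local}, none of the remaining moves touches $E(I[\tau_m(P_0)])$, so $\pi'_m$ and $\pi=\pi_k$ coincide on $E(I[\tau_m(P_0)])$; in particular $\tau_m(P_0)$ is also a free component of $\pi$. On the other hand, throughout the trajectory block every free component other than the tracked one is left untouched, so the free components of $\pi'_m$ are exactly $\tau_m(P_0)$ together with the free components of $\pi$ different from $P_0$; by the no-collision consequence of Remark~\ref{ruch_potencjal}, $\tau_m(P_0)$ is distinct from each of these. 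Combined with $\tau_m(P_0)\in\mathcal{F}_0$, this forces $\tau_m(P_0)=P_0$, which is exactly $P_k=P_0$.

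The step I expect to be the main technical obstacle is the case analysis in the exchange property: one must verify that in each configuration of $P_{i-1}$ relative to $C_i$ and $\phi_i^{-1}(C_{i+1})$ the tracked component after the swapped pair equals $P_{i+1}$. Each case resolves by a direct invocation of Lemma~\ref{przemienne_ruchy}(iv), so the difficulty is purely bookkeeping rather than conceptual.
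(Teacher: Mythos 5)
Your proof is correct and follows essentially the same strategy as the paper's: observe that every intermediate coloring still maximizes $\Psi$, use Lemma~\ref{przemienne_ruchy} as an exchange lemma to move all trajectory steps of $P_0$ to the front (the paper uses a termination measure on the sum of the indices of trajectory moves, but the reordering is the same), and then argue that the remaining moves leave $E(I[\tau_m(P_0)])$ untouched by Lemma~\ref{lem:elem-move-local}. Your concluding step phrases the argument directly (the reordered trajectory cannot produce a new free component of $\pi$, so it must return to $P_0$), whereas the paper sets up a contradiction by tracking a second component $Q_0=P_k$ forward and colliding it with $P_{i_0}$; these are equivalent formulations of the same no-collision idea.
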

\begin{proof}
Suppose that for some index $j$ it holds that $\pi_{j+1}$ is an elementary move of $P_j$ in $\pi_j$ and $\pi_j$ is an elementary move of some free component $R$ of $\pi_{j-1}$ distinct from $P_{j-1}$. Then $P_j=P_{j-1}$.
Define a coloring $\pi_j^*$ as follows: $\pi^*_j|_{E(I[P_{j-1}])}=\pi_{j+1}|_{E(I[P_{j-1}])}$ and $\pi^*_j|_{E(G)\setminus E(I[P_{j-1}])}=\pi_{j-1}|_{E(G)\setminus E(I[P_{j-1}])}$.
Then by Lemma~\ref{lem:elem-move-local}, $\pi^*_j$ is an elementary move of $P_{j-1}$ in $\pi_{j-1}$ and $\pi^*_j(P_{j-1}) = P_{j+1}$.
Moreover, by Lemma~\ref{lem:elem-move-local}, $\pi_{j+1}|_{E(G)\setminus(E(I[R])\cup E(I[P_{j-1}]))}=\pi_{j-1}|_{E(G)\setminus(E(I[R])\cup E(I[P_{j-1}]))}$, $\pi_{j+1}|_{E(I[R])}=\pi_j|_{E(I[R])}$, and $\pi_{j+1}|_{E(I[P_{j-1}])}=\pi_j^*|_{E(I[P_{j-1}])}$.
Hence, by Lemma~\ref{przemienne_ruchy}$(iv)$, $\pi_{j+1}$ is an elementary move of $R$ in $\pi_j^*$.  Thus we may replace $\pi_j$ with $\pi_j^*$ in the sequence $\pi_0, \pi_1, \ldots, \pi_k$ and redefine $P_j$ accordingly. Note that by replacing $\pi_j$ with $\pi_j^*$ we have decreased the sum of indexes $j$ such that $\pi_{j+1}$ is an elementary move of $P_j$ in $\pi_j$. We continue this process as long as possible and obtain a sequence of colorings and an index $0\leq i_0\leq k$ such that for every $1\leq i\leq i_0$ $\pi_i$ is an elementary move of $P_{i-1}$ in $\pi_{i-1}$ and for every $i_0< i\leq k$ $\pi_i$ is an elementary move of some free component of $\pi_{i-1}$ distinct from $P_{i-1}$. Note that $\pi_
0$ and $\pi_k$ were never changed.

Suppose that $P_0$ and $P_k$ are two distinct free components of $\pi$. 
Let $Q_0=P_k$ and $Q_i=Q_{i-1}(\pi_i)$. By induction we obtain that $P_i$ and $Q_i$ are distinct free components of $\pi_i$. Note that for $i\leq i_0$ we have $Q_i=Q_{i-1}$ and for $i>i_0$ we have $P_i=P_{i-1}$. Thus $Q_{i_0}=Q_0=P_k$ and $P_{i_0}=P_k$, a contradiction.
\end{proof}

We say that a coloring $\pi '$ is a {\em move} of $\pi$ if $\pi$ maximizes the potential $\Psi$ and there is a sequence of colorings $\pi_0=\pi, \pi_1, \ldots, \pi_k=\pi '$ such that $\pi_i$ is an elementary move of a free component in $\pi_{i-1}$. 
We say that a free component $P'$ of $\pi '$ is obtained from a free component $P$ in $\pi$ if for every $i=0,\ldots,k$ there is a free component $P_i$ of $\pi_i$ such that $P_0=P$, $P'=P_k$ and for $i>0$ we have $P_i=P_{i-1}(\pi_i)$. We denote the free component of a coloring $\pi '$ obtained from a free component $P$ by $P(\pi ')$. This notation extends the notation introduced earlier for elementary moves. Note that $P(\pi ')$ does not depend on the sequence $\pi_1,\ldots,\pi_{k-1}$ since given a different sequence $\pi_1',\ldots,\pi_{t-1}'$ we may apply Lemma~\ref{lem:ruch_id} to the sequence $\pi ',\pi_{k-1},\ldots,\pi_1,\pi,\pi_1',\ldots,\pi_{t-1}',\pi '$.

\begin{theorem}\label{niezalezne_ruchy}
Let $(G,\pi)$ be a coloring that maximizes the potential $\Psi$. Then:
\begin{enumerate}[$(i)$]
\item if $\pi '$ is a move of $\pi$ then $\pi$ is a move of $\pi '$;
\item  Let $P$ be a nontrivial free component of $\pi$ and let $\pi '$ be a move of $\pi$. 
Then there is a sequence of elementary moves $\{\pi_i\}_{i=1,\ldots,k}$, such that for every $i=1,\ldots,k$, $\pi_i$ is a move of a component $P_{i-1}$ in $\pi_{i-1}$, where $P_0=P$, $\pi_0=\pi$, for every $i=1,\ldots,k$ we have $P_i=P(\pi_i)$, $P_k=P(\pi')$ and $\pi_k|_{E(I[P_k])}=\pi '|_{E(I[P_k])}$;
\item if $P_1,\ldots,P_k$ are distinct nontrivial free components of $\pi$ and $\pi_1,\ldots,\pi_k$ are moves of $\pi$ then there is a move $\pi '$ of $\pi$ such that $P_i(\pi ')=P_i(\pi_i)$ and $\pi '|_{E(I[P_i(\pi ')])}=\pi_i|_{E(I[P_i(\pi_i)])}$ for $i=1,\ldots,k$.
\end{enumerate}
\end{theorem}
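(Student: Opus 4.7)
I would prove the three parts together by extending Lemma~\ref{przemienne_ruchy}, which gives commutativity of single elementary moves of distinct components, to arbitrary move sequences. The logical order is to establish (ii) first, derive (iii) from (ii) by induction on $k$, and finally deduce (i) from (ii) and (iii) via a local reversibility argument.

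For (ii), I would bubble-sort the defining sequence $\pi = \pi_0, \pi_1, \ldots, \pi_m = \pi'$. Each step $\pi_{i-1}\to\pi_i$ is tagged by the free component $R_i$ of $\pi_{i-1}$ being moved. Whenever $R_i$ is not a descendant of $P$ but $R_{i+1}$ is a descendant of $P$, I apply Lemma~\ref{przemienne_ruchy}(iv) to swap the two consecutive steps; this preserves $\pi_{i+1}$ and the canonical identification of free components across the intermediate coloring. Since the number of inversions decreases with each swap, after finitely many swaps all $P$-descendant moves form an initial prefix $\pi_0, \pi_1', \ldots, \pi_k'$. The remaining suffix consists of elementary moves of components distinct from $P(\pi_k')$; by Lemma~\ref{lem:elem-move-local}, these do not change colors on $E(I[P(\pi_k')])$, so $\pi_k'|_{E(I[P(\pi_k')])} = \pi'|_{E(I[P(\pi')])}$, as required.

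For (iii), I would induct on $k$, with $k=1$ trivial. For the inductive step, the hypothesis yields a combined move $\pi''$ realizing the conclusion for $\pi_1, \ldots, \pi_{k-1}$. Using (ii), rewrite $\pi_k$ as reached from $\pi$ by a sequence $\sigma$ of elementary moves only of $P_k$-descendants, ending at some $\pi_k^*$ with $\pi_k^*|_{E(I[P_k(\pi_k^*)])} = \pi_k|_{E(I[P_k(\pi_k)])}$. The key step is to show that the same sequence $\sigma$, replayed starting from $\pi''$ and reinterpreting each letter via the canonical bijection between free components of $\pi$ and of $\pi''$, remains a valid sequence of elementary moves and yields the desired $\pi'$. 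This follows from iterated application of Lemma~\ref{przemienne_ruchy}(iv): since $\pi''$ arises from $\pi$ by elementary moves of $P_j$-descendants with $j<k$, all distinct from the $P_k$-descendants used in $\sigma$, every step of $\sigma$ can be commuted past the steps building $\pi''$. Lemma~\ref{lem:elem-move-local} then ensures that $\pi'$ agrees with $\pi_j$ on $E(I[P_j(\pi')])$ for $j<k$ (because $\sigma$ does not touch these regions) and with $\pi_k$ on $E(I[P_k(\pi')])$ by construction.

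For (i), given $\pi'$ a move of $\pi$, I would apply (ii) to each nontrivial free component $P$ of $\pi$ to extract a sequence $\sigma_P$ of elementary moves of $P$-descendants from $\pi$ to some $\pi_P^*$ that matches $\pi'$ on $E(I[P(\pi')])$. The plan is then to invert each $\sigma_P$ locally and assemble the inverses via (iii) to obtain $\pi$ as a move of $\pi'$. The main obstacle I anticipate is proving local reversibility of $\sigma_P$: the naive reverse of a single elementary move of a component $R$ can fail to be an elementary move when the vertex set of the free component shifts (a vertex may drop out of or be adjoined to the component during the move), making the direct inverse act on edges no longer incident with the new component. The remedy I plan is to augment the reverse with auxiliary elementary moves that first ``expand'' the current free component to recapture any dropped vertices before performing the color-swap, and symmetrically ``shrink'' when vertices were added; the consistency of component identifications across this augmented sequence is guaranteed by Lemma~\ref{lem:ruch_id}, applied to the closed loop formed by concatenating $\sigma_P$ with its constructed inverse.
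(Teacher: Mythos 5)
Your treatments of parts (ii) and (iii) essentially match the paper's: (ii) is the same bubble-sort argument based on Lemma~\ref{przemienne_ruchy} and Lemma~\ref{lem:elem-move-local}, and (iii) is the same combination argument (the paper organizes it as a 2D grid of colorings $\pi_{j_1,j_2}$ for $k=2$ and observes the general case is a routine extension; your induction on $k$ is equivalent in substance).

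Where you diverge is part (i), and this divergence is a genuine gap. The paper proves (i) directly and independently of (ii)--(iii): the reverse of a single elementary move is itself an elementary move (of $P'$ in $\pi'$), so one simply reverses the sequence. You instead try to \emph{derive} (i) from (ii) and (iii), and in doing so you raise --- and then fail to resolve --- the worry that ``the naive reverse of a single elementary move $\ldots$ can fail to be an elementary move when the vertex set of the free component shifts.'' This worry is in fact unfounded. If $\pi'$ is obtained from $\pi$ by uncoloring $E_1\subseteq I[P]\setminus E(P)$ and coloring $E_2\subseteq E(P)$, then $E(P')=(E(P)\setminus E_2)\cup E_1$, hence $E_1\subseteq E(P')$; and every edge $uv\in E_2$ is incident with $V(P')$ because its colour $c=\pi'(uv)$ cannot lie in $\f(u)\cap\f(v)$ (Lemma~\ref{lem:distinct-free}(a)), so at one endpoint, say $u$, the $c$-coloured edge at $u$ must have been uncoloured, i.e.\ lies in $E_1\subseteq E(P')$, forcing $u\in V(P')$. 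Together with the fact that by Lemma~\ref{lem:elem-move-local} no vertex of $V(P')\setminus V(P)$ is incident with an uncoloured edge of $\pi$ (otherwise two nontrivial components of $\pi'$ would meet at that vertex), one gets that $\pi|_{I[P']}$ has the single nontrivial free component $P$, so the reverse is indeed an elementary move. Your proposed ``remedy'' --- inserting auxiliary expand/shrink moves and appealing to Lemma~\ref{lem:ruch_id} for consistency --- is therefore both unnecessary and, as written, not an argument: you do not specify what these auxiliary moves are, why they exist, or why the augmented sequence is a well-defined move of $\pi'$. As it stands, part (i) of your proposal is not proved.
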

\begin{proof}~
\begin{enumerate}[$(i)$]
\item By Remark~\ref{ruch_potencjal} if $\pi '$ is an elementary move of a free component $P$ in $\pi$ then $\pi$ is an elementary move of $P(\pi ')$ in $\pi '$. Since a move is a sequence of elementary moves the claim follows.

\ignore{\item Let $\pi_1,\ldots,\pi_k$ be a shortest sequence of elementary moves giving a move such that $P(\pi_k)=P(\pi ')$ and $\pi_k|_{E(I[P(\pi_k)])}=\pi '|_{E(I[P(\pi_k)])}$. Let $P_0=P$ and $P_i=P_{i-1}(\pi_i)$. We claim that $\pi_i$ are elementary moves of $P_{i-1}$. Suppose this is not true and let $\pi_j$ be the elementary move of $Q_{j-1}\neq P_{j-1}$ with the highest index. By using Lemma \ref{przemienne_ruchy} multiple times we see that $\pi_k$ can be obtained from $\pi_{i-1}$ by first moving $P_j,\ldots, P_{k-1}$ and then $Q_{j-1}$. However reversing the moving of $Q_{j-1}$ does neither change $P_k$ nor the colors of $E(I[P_k])$, thus we obtain a sequence of $k-1$ elementary moves giving a move from $P$ to $P'$ --- a contradiction.}

\item Let $\pi_0,\ldots,\pi_t$ be any sequence of elementary moves such that $\pi_0=\pi$ and $\pi_t=\pi '$. 
We use the process described in the proof of Lemma~\ref{lem:ruch_id} to redefine the colorings $\pi_i$ for $0<i<t$ and obtain an index $k$ such that for every $1\leq i\leq k$ $\pi_i$ is an elementary move of $P_{i-1}$ in $\pi_{i-1}$ and for every $k< i\leq t$ $\pi_i$ is an elementary move of some free component of $\pi_{i-1}$ distinct from $P_{i-1}$. 
Note that for $k< i\leq t$ we have $P_i=P_{i-1}$ and by Lemma~\ref{lem:elem-move-local} it holds that $\pi_i|_{E(I[P_i])}=\pi_{i-1}|_{E(I[P_{i-1}])}$, hence $\pi_1,\ldots,\pi_k$ is the desired sequence.

\item We will show the claim for $k=2$. The proof for more components is a trivial generalization of this one but involves a multitude of indices. Using $(ii)$ we get two sequences of elementary moves $\pi_{1,0},\ldots,\pi_{t_1+1,0}$ and $\pi_{0,1},\ldots,\pi_{0,t_2+1}$ of free components $P_1^0,\ldots,P_1^{t_1}$ and $P_2^0,\ldots,P_2^{t_2}$ giving respectively a move from $P_1=P_1^0$ to $P_1(\pi_1)$ and a move from $P_2=P_2^0$ to $P_2(\pi_2)$ such that all $P_1^{j_1}$ are free components obtained from $P_1$ and all $P_1^{j_1}$ are free components obtained from $P_1$. 
We claim that for every $j_1=1,\ldots,t_1+1$ and $j_2=1,\ldots,t_2+1$ there exists a move $\pi_{j_1,j_2}$ of $\pi$ such that $P_1(\pi_{j_1,j_2})=P_1(\pi_{j_1,0})$ and $P_2(\pi_{j_1,j_2})=P_2(\pi_{0,j_2})$. 
Furthermore we require $\pi_{j_1,j_2}$ to simultaneously be an elementary move of $P_1(\pi_{j_1-1,j_2})$ in $\pi_{j_1-1,j_2}$ and $P_2(\pi_{j_1,j_2-1})$ in $\pi_{j_1,j_2-1}$. By Lemma~\ref{przemienne_ruchy}$(iv)$ the existence of $\pi_{j_1,j_2}$ follows from the 
existence of 
$\pi_{j_1-1,j_2}$ and $\pi_{j_1,j_2-1}$ thus the claim follows by induction on $j_1+j_2$. The fact that $\pi '|_{E(I[P_i(\pi ')])}=\pi_i|_{E(I[P_i(\pi_i)])}$ follows from the explicit definition of coloring $\pi '$ in Lemma \ref{przemienne_ruchy}.
\end{enumerate}
\end{proof}

\subsection{Controlling vertices}
Let $P$ be a nontrivial free component of $(G,\pi)$.
By $\mathcal{M}(P)$ we denote the set of moves of $\pi$ that can be obtained by moving only $P$ (and the components obtained from $P$). 
We say that vertex $v$ is {\em controlled} by $P$ when $v\in V(P(\pi'))$ for some $\pi'\in\mathcal{M}(P)$.
By $\con(P)$ we denote the set of vertices controlled by $P$.

\begin{lemma}\label{lem:con-disjoint}
Let $(G,\pi)$ be a coloring that maximizes the potential $\Psi$ and let $Q_1$ and $Q_2$ be two distinct nontrivial free components of $\pi$. Then $\con(Q_1)\cap\con(Q_2)=\emptyset$.
\end{lemma}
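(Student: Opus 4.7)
The plan is a short contradiction argument leveraging Theorem~\ref{niezalezne_ruchy}$(iii)$, which lets us combine independent moves of distinct components into one. Suppose, for contradiction, that some vertex $v$ lies in $\con(Q_1)\cap\con(Q_2)$. Then by definition of $\con$ there are moves $\pi_1\in\mathcal{M}(Q_1)$ and $\pi_2\in\mathcal{M}(Q_2)$ such that $v\in V(Q_1(\pi_1))$ and $v\in V(Q_2(\pi_2))$. The moves $\pi_1$ and $\pi_2$ affect (potentially) different parts of the graph, but a priori there is no single coloring reflecting both simultaneously.

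This is exactly what Theorem~\ref{niezalezne_ruchy}$(iii)$ provides: applied with $k=2$ and the distinct nontrivial free components $Q_1,Q_2$ of $\pi$, it yields a move $\pi'$ of $\pi$ with
\[ Q_1(\pi')=Q_1(\pi_1) \quad\text{and}\quad Q_2(\pi')=Q_2(\pi_2). \]
In particular $v\in V(Q_1(\pi'))\cap V(Q_2(\pi'))$.

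Next I would observe that $Q_1(\pi')$ and $Q_2(\pi')$ are two \emph{distinct} free components of the single coloring $\pi'$. Indeed, $\pi'$ is a move of $\pi$, so $\pi'$ still maximizes $\Psi$; iterating Remark~\ref{ruch_potencjal} along a sequence of elementary moves realizing $\pi'$ shows that distinct nontrivial free components of $\pi$ are tracked to distinct free components of $\pi'$ via the $(\cdot)(\pi')$ correspondence. But any two distinct free components of $\pi'$, being distinct connected components of the graph of free edges $\bot(G,\pi')$, have disjoint vertex sets. This contradicts $v\in V(Q_1(\pi'))\cap V(Q_2(\pi'))$ and finishes the proof.

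The only delicate point is justifying that $Q_1(\pi')\ne Q_2(\pi')$; this is the main thing to watch, but it follows cleanly from the no-merging property of elementary moves under maximum $\Psi$ together with the well-definedness of $(\cdot)(\pi')$ established via Lemma~\ref{lem:ruch_id}. Everything else is bookkeeping.
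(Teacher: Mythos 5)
Your proof is correct and follows essentially the same route as the paper's: invoke Theorem~\ref{niezalezne_ruchy}$(iii)$ to combine the two moves into a single coloring $\pi'$ in which $Q_1(\pi_1)$ and $Q_2(\pi_2)$ are both free components, then conclude by disjointness of distinct connected components. The paper leaves the distinctness of $Q_1(\pi')$ and $Q_2(\pi')$ implicit, whereas you flag and justify it; this is a reasonable elaboration, not a different argument.
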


\begin{proof}
Assume there is a vertex $v\in \con(Q_1)\cap\con(Q_2)$.
Then, for each $i=1,2$, there is a move $\pi_i\in\mathcal{M}(Q_i)$ such that $v\in V(Q_i(\pi_i))$. 
By Theorem \ref{niezalezne_ruchy} there is a coloring $\pi'$ with free components $Q_i(\pi_i)$, for $i=1,2$. 
Hence $v$ belongs to two distinct components in $(G,\pi')$, a contradiction.
\end{proof}

\begin{lemma}\label{lem:con-invariant}
Let $(G,\pi)$ be a coloring that maximizes the potential $\Psi$ and let $Q$ be a nontrivial free component of $\pi$. Let $\pi '$ be a move of $\pi$ and let $Q'=Q(\pi ')$. Then $\con(Q)=\con(Q')$.
\end{lemma}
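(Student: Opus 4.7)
The plan is to establish both inclusions $\con(Q) \subseteq \con(Q')$ and $\con(Q') \subseteq \con(Q)$. Since $\pi'$ is a move of $\pi$ witnessed by some sequence of elementary moves, reversing that sequence and invoking Theorem~\ref{niezalezne_ruchy}$(i)$ shows that $\pi$ is a move of $\pi'$ and that $Q = Q'(\pi)$. Thus the two inclusions are symmetric, and it suffices to prove $\con(Q') \subseteq \con(Q)$.

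Fix any $v \in \con(Q')$, witnessed by a move $\rho \in \mathcal{M}(Q')$ with $v \in V(Q'(\rho))$, which is realised by a sequence of elementary moves $\pi'=\rho_0,\rho_1,\ldots,\rho_\ell=\rho$, each of a descendant of $Q'$. The aim is to build a coloring $\tilde\rho \in \mathcal{M}(Q)$ with $v \in V(Q(\tilde\rho))$. First I apply Theorem~\ref{niezalezne_ruchy}$(ii)$ to the move $\pi \to \pi'$ of the component $Q$: this yields a sequence of elementary moves $\pi=\sigma_0,\sigma_1,\ldots,\sigma_m=\tau$, each of a descendant of $Q$, such that $Q(\tau)=Q'$ and, crucially, $\tau|_{E(I[Q'])}=\pi'|_{E(I[Q'])}$. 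In particular $\tau \in \mathcal{M}(Q)$. I then replay the sequence $\rho_1,\ldots,\rho_\ell$ starting from $\tau$ instead of from $\pi'$: by induction on $j$ I claim there are colorings $\tau=\tau_0,\tau_1,\ldots,\tau_\ell$ with $\tau_j$ an elementary move of $Q(\tau_{j-1})$ in $\tau_{j-1}$, satisfying $Q(\tau_j)=Q'(\rho_j)$ as subgraphs and $\tau_j|_{E(I[Q(\tau_j)])}=\rho_j|_{E(I[Q'(\rho_j)])}$. The inductive step is the core point: the transition from $\rho_{j-1}$ to $\rho_j$ consists of uncoloring a set $E_1 \subseteq E(I[Q'(\rho_{j-1})]) \setminus E(Q'(\rho_{j-1}))$ and coloring a set $E_2 \subseteq E(Q'(\rho_{j-1}))$; both sets lie inside $E(I[Q'(\rho_{j-1})])$, on which $\tau_{j-1}$ agrees with $\rho_{j-1}$, so the same recoloring is a well-defined elementary move of $Q(\tau_{j-1})$ in $\tau_{j-1}$ with all the asserted properties.

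Setting $\tilde\rho=\tau_\ell$, the concatenated sequence $\sigma_0,\ldots,\sigma_m,\tau_1,\ldots,\tau_\ell$ exhibits $\tilde\rho$ as a move of $\pi$ performed entirely through elementary moves of descendants of $Q$, so $\tilde\rho \in \mathcal{M}(Q)$. Since $V(Q(\tilde\rho))=V(Q'(\rho))\ni v$, we conclude $v \in \con(Q)$, and by the symmetry noted above the two sets are equal. The main obstacle is the inductive replay step: I must argue that whether a prescribed recoloring qualifies as an elementary move, and which free component it produces, depends only on the coloring restricted to the neighbourhood $E(I[\cdot])$ of the moved component. This locality is exactly what Lemma~\ref{lem:elem-move-local} (together with the local nature of the definition of an elementary move and Remark~\ref{ruch_potencjal}) captures, and it is precisely what permits substituting $\tau$ for $\pi'$ as the starting coloring of the second half of the move.
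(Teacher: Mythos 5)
Your overall strategy matches the paper's: use symmetry via Theorem~\ref{niezalezne_ruchy}$(i)$ to reduce to one inclusion, and use Theorem~\ref{niezalezne_ruchy}$(ii)$ to convert a witness move into a move of a single component. But the ``replay'' step contains a gap. Your inductive invariant is that $\tau_j$ and $\rho_j$ agree on $E(I[Q(\tau_j)])$, and you assert this propagates because the recoloring $\rho_{j-1}\to\rho_j$ lies inside $E(I[Q'(\rho_{j-1})])$. That suffices to show the replayed recoloring is a proper elementary move producing $Q(\tau_j)=Q'(\rho_j)$, but it does \emph{not} yield agreement on the new set $E(I[Q(\tau_j)])$. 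When the elementary move adds a vertex $x\notin V(Q(\tau_{j-1}))$ to the component, $E(I[Q(\tau_j)])$ acquires edges incident to $x$ only; those edges were untouched by the recoloring, so on them $\tau_j=\tau$ and $\rho_j=\pi'$, and the only agreement you have established between $\tau$ and $\pi'$ is on $E(I[Q'])$, which need not contain those edges. Thus the invariant can fail after the first step that grows the component, and the induction collapses.

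One can in fact repair this: the edges on which $\tau$ and $\pi'$ differ arise from elementary moves of components of $\pi'$ other than $Q'$, each such move only recolors edges with both endpoints in (descendants of) that other component, and by Lemma~\ref{lem:con-disjoint} in the coloring $\pi'$ those endpoints lie outside $\con(Q')$, whereas the trajectory of $Q'$ under the $\rho_j$'s stays inside $\con(Q')$. But this is nontrivial extra work that your write-up does not carry out, and it threatens circularity (it uses facts about $\con$ in a move of $\pi$, which is close to what the lemma itself asserts). The cleaner route, which is exactly what the paper does, is to skip the replay entirely: since $\rho$ is a move of $\pi$ (compose $\pi\to\pi'\to\rho$), apply Theorem~\ref{niezalezne_ruchy}$(ii)$ once to the pair $(\pi,\rho)$ with component $Q$, obtaining directly some $\tilde\rho\in\mathcal M(Q)$ with $Q(\tilde\rho)=Q(\rho)=Q'(\rho)\ni v$.
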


\begin{proof}
Pick any $v\in\con(Q)$. Let $\pi_1\in\mathcal{M}(Q)$ be such that $v\in V(Q(\pi_1))$. Note that $\pi_1$ is a move of $\pi '$ and $Q'(\pi_1)=Q(\pi_1)$. By Theorem~\ref{niezalezne_ruchy}(ii) there is some $\pi_2\in\mathcal{M}(Q')$ such that $Q'(\pi_1)=Q'(\pi_2)$. Thus $v\in\con(Q')$ and $\con(Q)\subset\con(Q')$. By symmetry $\con(Q')\subset\con(Q)$ and the claim follows.
\end{proof}

In what follows we define a subset of $\con(Q)$, which will be particularly useful.

\begin{definition}\label{def_WQ}
For a free component $Q$ of a coloring $(G,\pi)$ we define a set of vertices $\con_1(Q)=\{v\in V(G)\ :\ v\in V(Q(\pi '))$ for some elementary move $\pi'$ of $Q$ such that $|E(Q(\pi '))\setminus E(Q)|\leq 1\}$.
\end{definition}

\begin{lemma}
\label{wp1}
\label{lem:disjoint-free-W}
Let $(G,\pi)$ be a coloring that maximizes the potential $\Psi$ and let $Q$ be a free component with $|E(Q)|\geq 2$.
If $u,v\in \con_1(Q)$ and $u\neq v$ then $\f(u)\cap\f(v)=\emptyset$.
\end{lemma}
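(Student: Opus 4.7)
Suppose, for contradiction, that some color $c$ lies in $\f(u)\cap\f(v)$. My plan is to exhibit a coloring $\pi^{\ast}$ that maximizes $\Psi$ together with a free component $Q^{\ast}$ of $(G,\pi^{\ast})$ in which $u$ and $v$ are two distinct vertices satisfying $c\in\overline{\pi^{\ast}}(u)\cap\overline{\pi^{\ast}}(v)$; then Lemma~\ref{lem:distinct-free}(a) applied to $Q^{\ast}$ would yield the contradiction. When $u,v\in V(Q)$ we simply take $\pi^{\ast}=\pi$ and $Q^{\ast}=Q$, and Lemma~\ref{lem:distinct-free}(a) applies directly. Otherwise, by symmetry, we may assume $v\notin V(Q)$.

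By the definition of $\con_1(Q)$ there is an elementary move $\pi_v$ of $Q$ with $|E(Q(\pi_v))\setminus E(Q)|\le 1$ and $v\in V(Q(\pi_v))$. Since $v\notin V(Q)$, this difference is in fact exactly one edge, so $\pi_v$ is obtained from $\pi$ by uncoloring a single edge $e'=vv'$ (with $v'\in V(Q)$, since the new uncolored edge lies in $I[Q]\setminus E(Q)$ and the endpoint outside $V(Q)$ must be $v$) of color $\gamma=\pi(e')$, and by coloring a single edge $e\in E(Q)$ with some color $d$. Note that $\pi_v$ also maximizes $\Psi$: the number of colored edges is preserved, and by Remark~\ref{ruch_potencjal} no nontrivial components get merged, so both entries of $\Psi$ are the same. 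Moreover, since $v\notin V(Q)$ and both endpoints of $e$ lie in $V(Q)$, $v$ is not incident with $e$, so the only edge incident with $v$ whose color changes is $e'$; hence $\overline{\pi_v}(v)=\f(v)\cup\{\gamma\}\ni c$.

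It remains to arrange $u\in V(Q(\pi_v))$ and $c\in\overline{\pi_v}(u)$. The first condition is automatic whenever $u$ is an endpoint of some edge of $(E(Q)\setminus\{e\})\cup\{vv'\}$, and the second is automatic whenever $u$ is not an endpoint of the newly colored edge $e$ or the chosen color $d$ differs from $c$. If $u\in V(Q)$, the assumption $|E(Q)|\ge 2$ lets us pick the pair $(e,d)$ so that $e$ is not the unique $Q$-edge incident with $u$ and $d\ne c$. If $u\notin V(Q)$, I would use the analogous elementary move $\pi_u$ of $Q$ and compose it with $\pi_v$: by Lemma~\ref{lem:con-invariant}, $v\in\con(Q)=\con(Q(\pi_u))$ in $\pi_u$, so there is a move of $Q(\pi_u)$ in $\pi_u$ bringing $v$ into the same free component, and concatenating $\pi_u$ with this move produces the desired $\pi^{\ast}$.

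The hard part will be the last step: verifying that the composed moves yield a valid partial coloring, that the resulting free graph on $I[Q]$ still consists of a single nontrivial component containing both $u$ and $v$, and---most delicately---that neither of the newly colored edges uses color $c$ at $u$ or $v$. This last point is precisely where the hypothesis $|E(Q)|\ge 2$ is crucial: together with the lower bound on $|\f(Q)|$ from Corollary~\ref{cor:num-uncolored-0}, it guarantees enough flexibility in the two choices of colored edges and colors to avoid $c$ on the relevant edges. Once $\pi^{\ast}$ is constructed, Lemma~\ref{lem:distinct-free}(a) applied to the distinct vertices $u,v\in V(Q(\pi^{\ast}))$ in $(G,\pi^{\ast})$ gives $\overline{\pi^{\ast}}(u)\cap\overline{\pi^{\ast}}(v)=\emptyset$, contradicting $c\in\overline{\pi^{\ast}}(u)\cap\overline{\pi^{\ast}}(v)$.
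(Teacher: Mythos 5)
The proposal takes a genuinely different route from the paper: it tries to reduce everything to Lemma~\ref{lem:distinct-free}(a) by constructing a coloring in which $u$ and $v$ lie in a common free component with $c$ still free at both. The paper's proof instead works with Lemma~\ref{lem:A}: it shows directly that $Q$ sees the components of $u$ and $v$ (which rules out the ``one in, one out'' case), and then for the case $u,v\notin V(Q)$ performs a Kempe $a$-$b$ swap on a carefully chosen color $b\in\f(Q)\setminus\{\pi(ux),\pi(vp)\}$ to bring $a$ into $\f(Q)$ while keeping $a$ free at one of $u,v$, reducing again to the Lemma~\ref{lem:A} contradiction. As written, your proposal has genuine gaps in both nontrivial cases.

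In the case $u\in V(Q)$, $v\notin V(Q)$, the claim that ``$|E(Q)|\ge 2$ lets us pick the pair $(e,d)$ so that $e$ is not the unique $Q$-edge incident with $u$'' is not justified: once the edge $e'=vv'$ to be uncolored is fixed, $(e,d)$ is forced. By $\Psi$-maximality, $e$ must be incident with $v'$ (otherwise one could color $e$ without uncoloring $e'$ and increase $\Psi$), so $d$ must be the unique color newly freed at $v'$, namely $\gamma=\pi(vv')$, and then $e=v'q$ where $q$ is the \emph{unique} vertex of $Q$ with $\gamma\in\f(q)$ (Lemma~\ref{lem:distinct-free}(a)). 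There is nothing to pick. In the subcase $u=q$ with $\deg_Q(u)=1$, the move $\pi_v$ expels $u$ from the free component, and your argument breaks. (That subcase is in fact impossible: since $\gamma\in\f(q)\subseteq\f(Q)$, $Q$ sees $v$'s free component through $vv'$, and Lemma~\ref{lem:A} together with $c\in\f(u)\cap\f(v)$, $u\in V(Q)$, already gives a contradiction. But you never invoke Lemma~\ref{lem:A}, which is precisely the tool that closes this case in the paper.)

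In the case $u,v\notin V(Q)$, the argument is, as you yourself note, incomplete. Lemma~\ref{lem:con-invariant} gives a move of $Q(\pi_u)$ whose resulting component contains $v$, but it gives no control over whether $u$ remains in that component, nor over whether $c$ stays free at $u$ and $v$ after the intermediate recolorings; moreover the two elementary moves can interfere (e.g.\ $u'q_u$ and $v'q_v$, or $q_u$ and $q_v$, may coincide, or the colored edges may clash). The appeal to ``flexibility'' coming from $|E(Q)|\ge 2$ and Corollary~\ref{cor:num-uncolored-0} is not a substitute for an argument. This is exactly the place where the paper's Kempe-chain idea is doing the real work, and the proposal contains neither it nor an alternative that closes the case.
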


\begin{proof}
Assume that we have two distinct $u,v\in \con_1(Q)$ such that there exists a color
$a\in\f(u)\cap\f(v)$.
By Lemma \ref{lem:distinct-free} the vertices $u$ and $v$ cannot both belong
to $V(Q)$. By symmetry we can assume $u\not\in V(Q)$.

Since $u\in \con_1(Q)$, one can uncolor an edge $ux$, $x\in V(Q)$, and color an edge $e\in E(Q)$, obtaining a proper coloring $\pi'$.
Note that $e$ is incident with $x$ for otherwise we can color $e$ without uncoloring $ux$, and increase the potential $\Psi$.
Hence $e=xy$ for some $y\in V(Q)$ and $\pi(ux)\in \f(y)$. 
It follows that $v\not\in V(Q)$ for otherwise $Q$ sees the (possibly trivial) component containing $u$, a contradiction with Lemma~\ref{lem:A}.
Let $vp$ be the edge such that one can uncolor $vp$ and color an edge of $Q$ with color $\pi(vp)$, which exists because $v\in \con_1(Q)$.
By Corollary~\ref{cor:num-uncolored-0}, $|\f(Q)|\ge 4$, so there is a color $b\in\f(Q)\setminus\{\pi(ux),\pi(vp)\}$. Let $z$ be the vertex of $Q$ such that $b\in\f(z)$. Note that $a\not\in\f(z)$ for otherwise $Q$ sees the (possibly trivial) component containing $v$, a contradiction with Lemma~\ref{lem:A}.
Consider a maximal path $P$ which starts at $z$ and has edges colored in $a$ and $b$ alternately. Swap the colors $a$ and $b$ on the path. As a result, $a$ becomes free in $Q$. Also, $P$ touches at most one of the vertices $u$, $v$, so $a$ is still free in at least one of them, by symmetry say in $u$. Since $b\ne\pi(ux)$, the vertex $u$ is still in $\con_1(Q)$. Hence we arrive at the first case again.
\ignore{
Since $v\in \con_1(Q)$, one can uncolor an edge $vp$, $p\in V(Q)$, and color an edge $pq\in E(Q)$, obtaining a proper coloring $\pi''$.
In particular, $\pi(vp)\in\f(q)$.
Note that since $|E(Q)|\ge 2$ we know that $\deg_Q(p)\ge 2$, for otherwise $\pi''|_{I[Q]}$ has two free components, contradicting the fact that $\pi''$ is an elementary move.
It follows that in $(G,\pi')$ the vertex $p$ is incident with at least one uncolored edge and hence $p\in V(Q(\pi'))$, because otherwise $\pi'|_{I[Q]}$ has two free components, contradicting the fact that $\pi'$ is an elementary move.
If $\pi(vp)\not\in\f'(q)$ it means that $y=q$ and $\pi(ux)=\pi(vp)$, what implies $\pi(vp)\in\f'(u)\subseteq\f'(Q(\pi'))$.
If $\pi(vp)\in\f'(q)$ then also $\pi(vp)\in\f'(Q(\pi'))$ because if $q\not\in V(Q(\pi'))$ then $\pi'|_{I[Q]}$ has two free components, a contradiction. 
Hence we showed that  $p\in V(Q(\pi'))$ and $\pi(vp)\in\f'(Q(\pi'))$. Since $a\in\f'(u)\subseteq\f'(Q(\pi'))$ and $a\in\f(v)=\f'(v)$ this gives a contradiction with Lemma~\ref{lem:A}.
}
\ignore{
Assume that we have two distinct $u,v\in \con_1(Q)$ such that there exists a color
$a\in\f(u)\cap\f(v).$
By Lemma \ref{lem:distinct-free} the vertices $u$ and $v$ cannot both belong
to $V(Q)$.
Suppose that exactly one of $u,v$, say $v$, is in $V(Q)$.
Let $Q'=Q(\pi ')$ be the free component of $(G,\pi ')$ containing $u$ as in Definition
\ref{def_WQ}.
If $v$ belongs to $V(Q')$ then we obtain a contradiction with Lemma
\ref{lem:distinct-free}.
Otherwise let $uw$ be the edge in $E(Q')\setminus E(Q)$.
Since the degree of $w$ in $Q$ is at least $2$ there is a color $b\in\f(w)$
other than $\pi(uw)$.
By Lemma \ref{lem:distinct-free} we can exchange $b$ for $a$ as the free color
in $w$ without decreasing the potential. Since $\pi(uw)$ was not affected we
can still move $Q$ to $Q'$ and obtain $a\in\f(u)\cap\f(w)$ which contradicts
the maximality of the potential.

Thus we may assume that both $u$ and $v$ belong to $\con_1(Q)\setminus V(Q)$.
Let $Q'$ and $Q''$ be the free components containing $u$ and $v$ respectively.
We may assume that $u\notin \con_1(Q'')$ and $v\notin \con_1(Q')$ since otherwise we
could reduce to the previous case.

Recoloring for $u$ uses some free color $b$ and recoloring for $v$ uses
some free color $c.$
We know that $Q'$ has at least two vertices distinct from $u$ and one of them has at least two free colors. Thus there are at least three free colors in vertices of $Q'$ other than $u$.
We can perform the recoloring for $u,$ exchange color $a$ from $u$ with
some free color other than $b$ and $c$ from a vertex other than $u$
and finally undo the recoloring for $u.$
After this procedure we still have $v \in \con_1(Q)\setminus V(Q)$ and $a \in f(v)$ but now we also have $a \in \f(Q)$ which leads to a contradiction.
}
\end{proof}

\subsection{Sending charges}
\label{sec:charge}

In this section we consider a connected colored graph $(G,\pi)$ which maximizes the potential $\Psi$.

We put one unit of charge on each colored edge of $G$. 
Every edge divides its charge equally between the nontrivial components that control its endpoints (by Lemma~\ref{lem:con-disjoint} there are at most two such components).
For every nontrivial free component $P$, let $\ch(P)$ denote the amount of charge sent to $P$. Then the number of colored edges in $G$ is at least $\sum_P \ch(P)\geq|E|\min_P\frac{\ch(P)}{\ch(P)+|E(P)|}$, where the summation is over all nontrivial free components in $(G,\pi)$. In what follows we give lower bounds for $\ch(P)$ for various types of free components.

\begin{observation}\label{niezmienne_ladunki}\label{obs:niezmienne_ladunki}
Let $P$ be a free component of $(G,\pi)$ and let $\pi '$ be a move of $\pi$.
Then every edge of $G$ sends the same amount of charge to $P(\pi ')$ in the coloring $\pi '$ as it does to $P$ in $\pi$.
\end{observation}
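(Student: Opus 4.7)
The plan is to reduce this observation to Lemma~\ref{lem:con-invariant}, which already asserts that the set of controlled vertices of a nontrivial free component is preserved by moves. The charge that a single edge $e=xy$ sends to a free component $R$ is determined purely by the set
\[
  S_\pi(e) = \{R'\ :\ R'\text{ is a nontrivial free component of }\pi\text{ and }\{x,y\}\cap\con(R')\ne\emptyset\},
\]
namely $e$ sends $1/|S_\pi(e)|$ to each $R\in S_\pi(e)$ and $0$ to the rest. So it suffices to establish a suitable identification $S_\pi(e)\leftrightarrow S_{\pi'}(e)$ that sends $P$ to $P(\pi')$.

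First I would check that the move $\pi'$ of $\pi$ is itself a coloring maximizing $\Psi$: by definition a move is a sequence of elementary moves starting at $\pi$, and by Remark~\ref{ruch_potencjal} applied inductively each elementary move preserves $\Psi$ and cannot merge nontrivial components, giving at each step a canonical bijection between nontrivial components before and after. Composing along the sequence and using Lemma~\ref{lem:ruch_id} to see that this bijection is independent of the chosen sequence, one gets a well-defined bijection $Q\mapsto Q(\pi')$ between nontrivial free components of $\pi$ and those of $\pi'$. In particular the formalism of $\con(\cdot)$ and of charge distribution applies to $\pi'$ as well.

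Next I would invoke Lemma~\ref{lem:con-invariant}: for every nontrivial free component $Q$ of $\pi$, $\con(Q)=\con(Q(\pi'))$. Fix an edge $e=xy$. Then the bijection $Q\mapsto Q(\pi')$ sends $S_\pi(e)$ onto $S_{\pi'}(e)$, because the membership condition $\{x,y\}\cap\con(Q)\ne\emptyset$ is equivalent to $\{x,y\}\cap\con(Q(\pi'))\ne\emptyset$. Consequently $|S_\pi(e)|=|S_{\pi'}(e)|$, and $P\in S_\pi(e)$ if and only if $P(\pi')\in S_{\pi'}(e)$. Plugging this into the definition of the charge distribution, the charge $e$ sends to $P$ in $\pi$ coincides with the charge $e$ sends to $P(\pi')$ in $\pi'$ (both zero when $e$ touches no nontrivially-controlled vertex, both $1/|S_\pi(e)|$ otherwise).

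The only real obstacle is the bookkeeping of identifying nontrivial components across the move and verifying that the identification respects both the ``controls an endpoint of $e$'' relation and the cardinality of the resulting set of controllers. Once Lemma~\ref{lem:con-invariant} and the bijection established via Remark~\ref{ruch_potencjal} are in hand, the observation follows immediately from the definition of the charge rule, with no further casework.
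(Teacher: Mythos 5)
Your proof is correct and follows essentially the same route as the paper: the paper's proof also reduces the claim to Lemma~\ref{lem:con-invariant}, observing that $\con(Q)=\con(Q(\pi'))$ for every nontrivial free component $Q$, so the set of controllers of each endpoint of an edge is preserved and hence so is the charge distribution. The extra paragraph you spend re-establishing the bijection $Q\mapsto Q(\pi')$ via Remark~\ref{ruch_potencjal} and Lemma~\ref{lem:ruch_id} is sound but already part of the paper's definition of $P(\pi')$, so it is not needed here.
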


\begin{proof}
By Lemma~\ref{lem:con-invariant} for an arbitrary free component $Q$ of $\pi$ we have $\con(Q(\pi'))=\con(Q)$. Thus $P$ controls an endpoint of an edge iff $P(\pi ')$ controls this endpoint. Furthermore $P$ is the only free component of $\pi$ controlling an endpoint of an edge iff $P(\pi ')$ is the only free component of $\pi '$ controlling an endpoint of that edge.
\end{proof}

For a subset $S\subseteq E$ by $\deg_Sv$ we denote the degree of $v$ in the graph $(V,S)$ and $V(S)$ denotes the set of endpoints of all edges in $S$.
The following observation follows immediately from Lemma~\ref{lem:con-disjoint} (we state explicitly also the weaker bound $\frac{1}{2}|V(S)\cap \con(P)|$ because it will be sufficient in many cases).

\begin{observation}
\label{obs:1/2}
Let $P$ be a nontrivial free component and let $S$ be an arbitrary set of colored edges.
Then edges in $S$ send the charge of at least $\tfrac{1}{2}\sum_{v\in\con(P)}\deg_Sv\ge\frac{1}{2}|V(S)\cap \con(P)|$ to $P$.
\end{observation}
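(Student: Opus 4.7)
The plan is to bound the charge that $P$ receives from each edge of $S$ individually and then sum. Fix a colored edge $e \in S$ and let $k_e \in \{0,1,2\}$ denote the number of endpoints of $e$ lying in $\con(P)$. The per-edge claim I would prove is that $e$ sends at least $k_e/2$ of its unit of charge to $P$.

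For $k_e = 0$ this is trivial. For $k_e = 1$, the charge of $e$ is divided equally between the at most two nontrivial components controlling its endpoints, one of which is $P$, so $P$ receives at least $1/2$. The only case with real content is $k_e = 2$, where one must rule out that $e$ is split between $P$ and some other controlling component. This is precisely what Lemma~\ref{lem:con-disjoint} delivers: the sets $\con(Q)$ are pairwise disjoint over distinct nontrivial components $Q$, so no component other than $P$ controls any endpoint of $e$, and the entire unit of charge of $e$ goes to $P$.

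Summing the per-edge bound over $e \in S$ and reorganising the sum by vertex gives
\[
\sum_{e \in S} \frac{k_e}{2} \;=\; \frac{1}{2}\sum_{v \in \con(P)} \deg_S v,
\]
which yields the first asserted inequality. The second inequality $\sum_{v \in \con(P)} \deg_S v \geq |V(S)\cap\con(P)|$ is immediate since every vertex in $V(S)$ has $\deg_S v \geq 1$.

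The only step with any substance is the $k_e = 2$ case, and its potential obstacle, namely the possibility that the unit charge of $e$ is split away from $P$, is already dispatched by the disjointness of control sets. The remainder of the proof is pure bookkeeping and a swap in the order of summation.
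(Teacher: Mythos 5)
Your proof is correct and takes the same route the paper intends: the paper states the observation "follows immediately from Lemma~\ref{lem:con-disjoint}," and your per-edge case analysis (using disjointness of control sets to handle the $k_e=2$ case and the definition of the charging rule for $k_e=1$) is exactly the unpacking of that one-line justification. The final double-counting step and the coarser bound via $\deg_S v \ge 1$ are routine and match the paper's intent.
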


Let $Q$ be a free component in a colored graph $(G,\pi_1)$. If a colored edge $e$ is incident with $Q$, $|\overline{\pi_1}(Q)| \ge \Delta(G) - 1$,  and $\pi_1(e) \in \overline{\pi_1}(Q)$, then we say that edge $e$ is {\em dominated} by $Q$ in the coloring $\pi_1$. 

\begin{lemma}
 \label{lem:full-components}
Let $P$ be a nontrivial free component in $(G,\pi)$ and let $\pi_1$ be a move of $\pi$. Let $P'=P(\pi_1)$.
Every edge dominated by $P'$ in $\pi_1$ sends its whole charge to $P$.
\end{lemma}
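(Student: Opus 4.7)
The plan is to use Observation~\ref{obs:niezmienne_ladunki} to shift the entire argument to the coloring $\pi_1$: since every edge sends the same charge to $P(\pi_1)=P'$ in $\pi_1$ as it does to $P$ in $\pi$, it suffices to prove that each edge $e=xy$ dominated by $P'$ in $\pi_1$ contributes its full charge to $P'$ under the $\pi_1$-charging. So fix such an edge with $x\in V(P')$, set $c:=\pi_1(e)$, and let $v\in V(P')$ be a vertex with $c\in\overline{\pi_1}(v)$, whose existence is guaranteed by dominance. The endpoint $x$ lies in $V(P')\subseteq\con(P')$, so by Lemma~\ref{lem:con-disjoint} only $P'$ controls $x$; the only way $e$ could fail to send its entire charge to $P'$ would be if the other endpoint $y$ belonged to $\con(Q')$ for some nontrivial free component $Q'\neq P'$ of $\pi_1$. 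The core of the proof is to rule this out.

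Assume toward a contradiction that such a $Q'$ exists. By the definition of $\con(Q')$ there is a move $\pi_2\in\mathcal{M}(Q')$ with $y\in V(Q'(\pi_2))$. Applying Lemma~\ref{lem:elem-move-local} inductively along the elementary moves composing $\pi_2$---each of which acts on $Q'$ or a successor of it, and never on $P'$---one obtains $\pi_2|_{E(I[P'])}=\pi_1|_{E(I[P'])}$. Three consequences follow at once: $P'$ is still a free component of $\pi_2$; $\overline{\pi_2}(P')=\overline{\pi_1}(P')$; and $\pi_2(e)=c$. Combined with $c\in\overline{\pi_2}(v)\subseteq\overline{\pi_2}(P')$, this shows that $P'$ sees $Q'(\pi_2)$ via the edge $e$ in the max-$\Psi$ coloring $\pi_2$, so Lemma~\ref{lem:A} gives $\overline{\pi_2}(P')\cap\overline{\pi_2}(Q'(\pi_2))=\emptyset$.

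The finish pits the dominance hypothesis against the universal lower bound of Corollary~\ref{cor:num-uncolored-0}. From $|\overline{\pi_2}(P')|=|\overline{\pi_1}(P')|\geq\Delta-1$ and the disjointness above we obtain $|\overline{\pi_2}(Q'(\pi_2))|\leq 1$, while Remark~\ref{ruch_potencjal} guarantees $|E(Q'(\pi_2))|=|E(Q')|\geq 1$ and hence $|\overline{\pi_2}(Q'(\pi_2))|\geq 2|E(Q'(\pi_2))|\geq 2$, a contradiction. The main conceptual obstacle is recognizing that the dominance hypothesis $|\overline{\pi_1}(P')|\geq\Delta-1$ is tailored precisely so that any competitor component adjacent to $P'$ runs out of the free colors demanded by Corollary~\ref{cor:num-uncolored-0}; the rest is bookkeeping that uses Lemma~\ref{lem:elem-move-local} to witness $y\in\con(Q')$ by a move that leaves $P'$ undisturbed.
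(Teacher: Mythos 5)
Your proof is correct and follows essentially the same strategy as the paper: assume some nontrivial free component other than $P$ controls the other endpoint $y$, construct a coloring in which $P'$ (with at least $\Delta-1$ free colors) sees that competitor through $e$, and derive a contradiction from Lemma~\ref{lem:A} together with Corollary~\ref{cor:num-uncolored-0}. The only notable technical variation is that the paper argues entirely in $\pi$ and invokes Theorem~\ref{niezalezne_ruchy}(iii) to combine the two moves $\pi_1\in\mathcal{M}(P)$ and $\pi_2\in\mathcal{M}(Q)$ into a single $\pi_3$ exhibiting both $P'$ and $Q(\pi_2)$ simultaneously, whereas you first relocate the charge argument to $\pi_1$ via Observation~\ref{obs:niezmienne_ladunki} and then show directly, by applying Lemma~\ref{lem:elem-move-local} inductively along the elementary moves of $\pi_2\in\mathcal{M}(Q')$, that this move leaves $E(I[P'])$ untouched. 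These are interchangeable ways of packaging the same move-commutation machinery; your route uses the more elementary Lemma~\ref{lem:elem-move-local} instead of the heavier Theorem~\ref{niezalezne_ruchy}(iii), at the modest cost of invoking the charge-invariance observation to change the reference coloring.
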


\begin{proof}
Let $e=xy$ and $x\in V(P')$. (Then $x\in\con(P)$.) By Lemma~\ref{lem:con-disjoint} it suffices to show that there is no nontrivial component $Q$ such that $y\in\con(Q)$.
Assume for a contradiction that such a component exists.
Since $y\in \con(Q)$, there is a move $\pi_2\in\mathcal{M}(Q)$ such that $y\in V(Q(\pi_2))$.
By Theorem \ref{niezalezne_ruchy}(iii) there is a move $\pi_3$ of $\pi$ such that $P(\pi_3)=P'$ and $Q(\pi_3)=Q(\pi_2)$ and $\pi_3|_{E(I[P(\pi_3)])}=\pi_1|_{E(I[P'])}$. In particular $\pi_3(e)=\pi_1(e)$, so $P(\pi_3)$ sees $Q(\pi_3)$ through $e$ in the colored graph $(G,\pi_3)$ and by Lemma~\ref{lem:A} they have disjoint sets of free colors.
However, since $Q$ is nontrivial, by Corollary~\ref{cor:num-uncolored-0} we have $|\f(Q)|\ge 2$, a contradiction.
\end{proof}

\begin{lemma}\label{P1}
Every free component $P$ of $(G,\pi)$ with $|E(P)|=1$ receives at least $\Delta(G)$ of charge.
\end{lemma}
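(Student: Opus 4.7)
The plan is to establish $\ch(P)\geq \Delta$ by partitioning the colored edges incident with $V(P)=\{u,v\}$ (where $P$ is the single uncolored edge $uv$) into three classes of ``full-charge'' edges, each sending $1$ unit to $P$, and a remainder that sends at least $1/2$ each via Observation~\ref{obs:1/2}.

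First I would record two basic identities. Since $u$ and $v$ are each incident with exactly one uncolored edge (the one in $P$), we have $|\f(u)|+|\f(v)|+\deg(u)+\deg(v)=2\Delta+2$, and by Lemma~\ref{lem:distinct-free}, $\f(u)\cap\f(v)=\emptyset$, which forces the containments $\f(v)\subseteq\pi(u)$ and $\f(u)\subseteq\pi(v)$. Next I would exhibit the full-charge edges. For every $c\in\f(v)$, let $ux_c$ be the unique edge at $u$ colored $c$; uncoloring $ux_c$ and recoloring $uv$ with $c$ is a valid elementary move of $P$ whose new free component inside $I[P]$ is $\{ux_c\}$, witnessing $x_c\in\con_1(P)\subseteq\con(P)$, and $x_c\ne v$ because $c\notin\pi(v)$. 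By Lemma~\ref{lem:con-disjoint}, no other nontrivial component controls $x_c$, so the edge $ux_c$ sends its entire unit of charge to $P$. Symmetrically, each $c\in\f(u)$ yields an edge $vy_c$ with $y_c\ne u$ that sends its full charge. Finally, the $p-1$ colored parallel copies of $uv$ (where $p$ is the multiplicity of the edge $uv$) have both endpoints in $V(P)\subseteq\con(P)$ and hence also send full charge.

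These three full-charge classes are pairwise disjoint, since parallel $uv$ edges carry colors in $\pi(u)\cap\pi(v)$ whereas the first two classes carry colors in $\f(v)$ and $\f(u)$ respectively. Adding up the full-charge contributions together with the Observation~\ref{obs:1/2} bound on the remaining $(\deg(u)+\deg(v)-2p)-|\f(u)|-|\f(v)|$ colored edges incident with $\{u,v\}$, the $\pm p$ terms cancel and, upon substituting the identity $|\f(u)|+|\f(v)|+\deg(u)+\deg(v)=2\Delta+2$, the sum simplifies to exactly $\Delta$. The main obstacle will be the bookkeeping around parallel $uv$ edges, namely verifying that the three full-charge classes are really disjoint and that no edge is double-counted; beyond that the computation is essentially mechanical.
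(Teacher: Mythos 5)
Your proof is correct, and it rests on the same three ingredients as the paper's: Lemma~\ref{lem:distinct-free} forcing $\f(u)\cap\f(v)=\emptyset$ (hence $\f(v)\subseteq\pi(u)$, $\f(u)\subseteq\pi(v)$), the elementary move that uncolors the edge carrying a color free at the opposite endpoint and thereby shows its far vertex lies in $\con_1(P)$, and Observation~\ref{obs:1/2}. The difference is organizational rather than conceptual: the paper sums over the $\Delta$ colors, showing for each color $a$ that the $a$-colored edges touching $V(P)$ contribute at least one unit (the case $a\notin\f(P)$ handles both parallel $uv$ edges and the single-endpoint edges uniformly via the $\tfrac12\sum_v \deg_S v$ bound, and the case $a\in\f(P)$ uses the elementary move). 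You instead partition the colored edges incident with $\{u,v\}$ into full-charge and half-charge classes and close with the identity $|\f(u)|+|\f(v)|+\deg(u)+\deg(v)=2\Delta+2$. This works and the arithmetic checks out, but it obliges you to track the multiplicity $p$ of the $uv$ edge and verify disjointness of three edge classes; the paper's color-by-color phrasing sidesteps that bookkeeping entirely, which is why its proof is shorter. Worth noting too that the paper never needs the extra refinement $x_c\in\con_1(P)$ — landing in $\con(P)$ already suffices here, so your appeal to $\con_1$ is harmless but unnecessary.
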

\begin{proof}
We will show that for each color $a\in\{1,\ldots,\Delta\}$ the component $P$ receives at least one unit of charge from edges colored with $a$. 
Suppose that $a\notin\overline{\pi}(P)$. Then each of the two vertices of $P$ is incident with an edge colored with $a$, hence these edges send at least 1 by Observation~\ref{obs:1/2}.
Now suppose that $a\in\overline{\pi}(P)$. 
Denote the vertices of $P$ by $x$ and $y$ so that $a\in\overline{\pi}(x)$.
By Lemma~\ref{lem:distinct-free}, $a\in\pi(y)$;  let $z$ be the neighbor of $y$ such that $\pi(yz)=a$. 
Uncoloring $yz$ and coloring $xy$ is an elementary move so $z\in\con(P)$ and hence $zy$ sends $1$ to $P$ by Observation~\ref{obs:1/2}.
\end{proof}

\begin{lemma}\label{lem:szerokie_komponenty}
Let $P$ be a free component of $(G,\pi)$ and let $U\subseteq\con(P)$ be a set of vertices such that $\f(v)\cap\f(w)=\emptyset$ for every two distinct vertices $v,w\in U$. Then $\ch(P)\geq (|U|-1)|E(P)|$.
\end{lemma}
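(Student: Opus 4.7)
The plan is to bound $\ch(P)$ directly by accounting for the charge that $P$ receives from colored edges incident to vertices of $U$, and then exploit the pigeonhole constraint that the disjoint free color sets $\{\f(v)\}_{v\in U}$ must fit inside $\{1,\dots,\Delta\}$.

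First I would take $S$ to be the set of all colored edges of $G$ and apply Observation~\ref{obs:1/2}. Since $U\subseteq\con(P)$ and $\deg_S v$ is nonnegative for every vertex, the observation gives
\[
\ch(P)\ \ge\ \tfrac{1}{2}\sum_{v\in\con(P)}\deg_S v\ \ge\ \tfrac{1}{2}\sum_{v\in U}\deg_S v\ =\ \tfrac{1}{2}\sum_{v\in U}|\pi(v)|,
\]
where the final equality uses that in a proper partial coloring each color appears at most once at $v$, so $\deg_S v=|\pi(v)|$.

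Next I would rewrite $|\pi(v)|=\Delta-|\f(v)|$ and use the hypothesis that the sets $\f(v)$, $v\in U$, are pairwise disjoint. Since all these sets live inside $\{1,\dots,\Delta\}$, we obtain $\sum_{v\in U}|\f(v)|=\bigl|\bigcup_{v\in U}\f(v)\bigr|\le\Delta$, and therefore
\[
\sum_{v\in U}|\pi(v)|\ =\ |U|\Delta-\sum_{v\in U}|\f(v)|\ \ge\ (|U|-1)\Delta.
\]

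Finally I would invoke Corollary~\ref{cor:num-uncolored-0} to conclude $|E(P)|\le\floor{\Delta/2}\le\Delta/2$, and combine the two inequalities to obtain
\[
\ch(P)\ \ge\ \tfrac{(|U|-1)\Delta}{2}\ \ge\ (|U|-1)|E(P)|,
\]
as required. There is no real obstacle here; the argument is a single clean double-count, with the only point to watch being that the restriction of the charge sum from $\con(P)$ down to $U$ in the first step is legitimate precisely because $U\subseteq\con(P)$ and each term is nonnegative.
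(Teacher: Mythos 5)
Your proof is correct and takes essentially the same approach as the paper: both apply Observation~\ref{obs:1/2} together with the pairwise disjointness of the free color sets and the bound $|E(P)|\le\floor{\Delta/2}$ to reach $\ch(P)\ge\tfrac{(|U|-1)\Delta}{2}\ge(|U|-1)|E(P)|$. The paper organizes the double-count by color (for each color $a$ at least $|U|-1$ vertices of $U$ are matched by an $a$-colored edge), whereas you organize it by vertex (each $v\in U$ has $|\pi(v)|=\Delta-|\f(v)|$ colored edges), but these are the same count in a different order.
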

\begin{proof}
Let $S_a$ denote the set of edges incident to $U$ and colored with color $a$. 
Since $a$ is free in at most one vertex of $U$, we have $|V(S_a)\cap U|\geq |U|-1$. Since $U\subseteq \con(P)$, by Observation~\ref{obs:1/2} we infer that $S_a$ sends at least $\frac{|U|-1}{2}$ of charge to $P$. Taking the sum over all colors we obtain that $P$ receives at least $\frac{\Delta(|U|-1)}{2}$ of charge. Moreover by Lemma \ref{cor:num-uncolored-0} we have $|E(P)|\leq\frac{\Delta}{2}$, hence the claim.
\end{proof}

In what follows, a {\em single edge} is an edge of multiplicity 1.

\begin{lemma}\label{single_delta45}
Let $P$ be a free component of $(G,\pi)$ isomorphic to the $2$-path. Assume that $|\overline{\pi}(P)|\ge\Delta(G)-1$. 
If both edges of $P$ are single edges in $G$, then $\ch(P)\ge 2\Delta(G)$.
\end{lemma}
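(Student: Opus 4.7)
Let $V(P)=\{u,v,w\}$ with uncoloured edges $uv,vw$, and let me show that each colour $a\in\{1,\ldots,\Delta\}$ contributes on average at least $2$ units of charge to $P$; summing over all $\Delta$ colours gives $\ch(P)\ge 2\Delta$. By Lemma~\ref{lem:distinct-free} the sets $\f(u),\f(v),\f(w)$ are pairwise disjoint, and their union $\f(P)$ has size at least $\Delta-1$ by hypothesis; in particular each is nonempty and $|\f(v)|\ge 2$.

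Fix $a$. If $a\in\f(u)$ (symmetrically $\f(w)$), there are distinct edges $vv_a$ and $ww_a$ coloured $a$: the edge $uw$ cannot be coloured $a$ because $a\notin\pi(u)$, and the single-edge assumption on $uv,vw$ rules out those edges being coloured $a$. Both edges are dominated by $P$ in $\pi$ since $|\f(P)|\ge\Delta-1$, so Lemma~\ref{lem:full-components} applied with the trivial move $\pi_1=\pi$ assigns each a full charge of $1$, contributing $2$.

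If $a\in\f(v)$ the same argument gives $2$ unless the edges $uu_a$ and $ww_a$ coincide as a single edge $uw$ coloured $a$. In that degenerate subcase $uw$ alone sends $1$ unit. To recover the missing unit, pick $b\in\f(u)$ and $c\in\f(w)$ and invoke Lemma~\ref{lem:distinct-free}(b) to obtain an $(a,b,v,u)$-path and an $(a,c,v,w)$-path. Each begins at $v$ along an edge coloured $b$ or $c$ (since $a$ is free at $v$), reaches a neighbour $v_b$ or $v_c$ of $v$, and continues from $v_b$ or $v_c$ along an edge coloured $a$. The $k=1$ elementary moves that uncolour $vv_b$ and colour $uv$ with $b$ (respectively, uncolour $vv_c$ and colour $vw$ with $c$) witness $v_b,v_c\in\con_1(P)\subseteq\con(P)$, so by Observation~\ref{obs:1/2} each of the $a$-coloured edges leaving $v_b,v_c$ contributes at least $1/2$ to $P$. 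A careful verification---using the single-edge hypothesis on $uv,vw$ to exclude $v_b,v_c\in V(P)\setminus\{v\}$ (hence these edges differ from $uw$), together with a short argument handling the corner case $v_b=v_c$ (in which the common vertex carries additional structure that yields the extra charge)---gives the missing unit.

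Finally, at most one colour $a$ lies outside $\f(P)$, occurring only when $|\f(P)|=\Delta-1$. For this colour the three $a$-coloured edges at $V(P)$ send at least $3/2$ charge by Observation~\ref{obs:1/2}; the remaining $1/2$ is obtained via a Kempe $(a,c)$-swap with $c\in\f(v)$, which produces another maximum-$\Psi$ colouring whose graph of free edges agrees with $\pi$'s and in which $a\in\f(v)$, reducing to the preceding case. The main obstacle throughout is the degenerate subcase $a\in\f(v)$ with $uw$ coloured $a$: one must verify that the two extra $a$-coloured edges along the Kempe paths are genuinely new and collectively contribute the missing unit. Here the single-edge hypothesis on $E(P)$ is essential, as it blocks configurations in which the paths immediately fold back into $V(P)$ through a parallel copy of $uv$ or $vw$.
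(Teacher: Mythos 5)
Your per-colour decomposition is a genuinely different organisation from the paper's proof, and the cases $a\in\f(u)$ and $a\in\f(w)$ are handled correctly: the two $a$-edges at $v$ and $w$ (resp.\ $u$ and $v$) are distinct, dominated by $P$ under the trivial move, and each sends a full unit by Lemma~\ref{lem:full-components}. However, the remaining two cases contain gaps that you flag but do not fill, and both are genuine.

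In the subcase $a\in\f(v)$ with $uw$ coloured $a$: if $v_b=v_c=:z$, there is only one $a$-coloured edge at $z$; Observation~\ref{obs:1/2} gives it a bare $1/2$, for a colour total of $3/2$, and ``the common vertex carries additional structure'' is not an argument. The paper closes exactly this hole by an additional elementary move (uncolour $vz$, colour $uv$ with the freed colour) after which the $a$-edge at $z$ becomes \emph{dominated} by the moved component and sends a full unit by Lemma~\ref{lem:full-components}; verifying the domination condition $|\overline{\pi'}(P(\pi'))|\ge\Delta-1$ is itself a nontrivial bit of accounting that you would still need. For the colour $a\notin\f(P)$: a Kempe $(a,c)$-swap produces a \emph{different} colouring, and $\ch(P)$ — which depends on $\con(P)$ and hence on the colouring — is not invariant under it; only the paper's ``moves'' are known to preserve charge (Observation~\ref{obs:niezmienne_ladunki}), and a Kempe swap is not a move in that sense. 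So ``reducing to the preceding case'' is not licensed. Both difficulties trace to the same omission: the paper first dispatches $|\con_1(P)|\ge 5$ via Lemma~\ref{lem:disjoint-free-W} and Lemma~\ref{lem:szerokie_komponenty}, and then, assuming $|\con_1(P)|\le 4$, proves that $\con_1(P)$ consists of $V(P)$ together with exactly one external vertex. That structural fact is what forces your $v_b$ and $v_c$ to coincide with a single, well-understood vertex and lets Observation~\ref{obs:1/2} deliver $\ge 2$ directly for the colour outside $\f(P)$; without it, the per-colour counts you write down simply do not reach $2\Delta$.
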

\begin{proof}
By Lemma~\ref{lem:disjoint-free-W} and Lemma~\ref{lem:szerokie_komponenty}, if $|\con_1(P)|\ge 5$, we are done. In what follows we assume $|\con_1(P)|\leq 4$. 
Let $S_a$ denote the set of edges incident to $\con_1(P)$ and colored with color $a$.
We will show that for every color $a$ the edges of $S_a$ send at least 2 to $P$; then the claim clearly follows.

Let $P=xyz$ and let $a\in\f(x)$. Then, by Lemma~\ref{lem:distinct-free} we have $a\in\pi(y)$, so there is an $yv$ edge such that $\pi(yv)=a$. Since $P$ has single edges we have $v\notin V(P)$. Furthermore we may move $P$ by uncoloring $yv$ and coloring $xy$, thus $v\in \con_1(P)$ and hence $\con_1(P)=\{x,y,z,v\}$.

Consider any edge $e$ colored with $c\in \f(P)\cup\f(v)$ and incident to $\con_1(P)$. If $e$ is incident to $V(P)$ and $c\in\f(P)$ then $e$ sends 1 to $P$ by Lemma~\ref{lem:full-components}. Assume $e$ is incident to $v$ but not to $V(P)$. Then $c\in\f(P)$. Color $c$ is free in at most one of $x,z$; by symmetry we can assume $c\in\pi(x)$. Pick any color $c_x\in\f(x)$. By Lemma~\ref{lem:distinct-free}, $c_x\in\pi(y)$ and hence there is an edge $v'y$ colored with $c_x$. Consider the elementary move $\pi'$ obtained by uncoloring $v'y$ and coloring $xy$ with $c_x$. We obtain $v'\in \con_1(P)$. Since $P$ has single edges it follows that $v=v'$. Furthermore $c\in \overline{\pi'}(P(\pi'))$ and hence $e$ is dominated by $P(\pi')$, so $e$ sends 1 to $P$ by Lemma~\ref{lem:full-components}.
Finally, assume $c\in\f(v)$. Then $e$ is incident with $V(P)$, if both endpoints of $e$ are in $V(P)$ then $e$ sends 1 to $P$. Assume that only one endpoint of $e$ is in $V(P)$; by symmetry we can assume $e$ is not incident with $x$. Then, again $e$ is dominated by $P(\pi')$, so $e$ sends 1 to $P$ by Lemma~\ref{lem:full-components}. 
Hence, for every $c\in\f(P)\cup\f(v)$ every edge in $S_c$ sends 1 to $P$. Moreover, for $c\in\f(P)\cup\f(v)$ we have $|S_c|\ge (|\con_1(P)|-1)/2 = 3/2$, so $|S_c|\ge 2$ and the edges in $S_c$ send $2$ to $P$, as required.

Finally consider $c\not\in\f(P)\cup\f(v)$. Since every vertex in $\con_1(P)$ is incident with an edge in $S_c$, by  Observation~\ref{obs:1/2} we infer that $S_c$ sends at least $\frac{1}{2}|\con_1(P)|=2$ of charge to $P$, as required.
\end{proof}

\begin{lemma}\label{multiple_delta}
Let $P$ be a free component of $(G,\pi)$ consisting of a $k$-fold $xy$ edge for $k\geq 2$. Assume that $|\overline{\pi}(P)|=\Delta(G)$. Then either $\ch(P)\geq 2\Delta(G)$ or $G$ has exactly three vertices.
\end{lemma}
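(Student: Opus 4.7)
The plan is to exploit the assumption $|\overline\pi(P)|=\Delta$ in full. Write $\alpha=|\overline\pi(x)|$ and $\beta=|\overline\pi(y)|$; by Lemma~\ref{lem:distinct-free} these sets are disjoint, and then the assumption forces them to partition $\{1,\ldots,\Delta\}$, so $\alpha+\beta=\Delta$ and $\pi(x)\cap\pi(y)=\emptyset$. Consequently no $xy$-edge can be colored, so all $xy$-edges lie in $P$; the colored edges incident to $V(P)=\{x,y\}$ are exactly the $c_x+c_y=\Delta$ edges from $x$ or $y$ to other vertices, where $c_x=\Delta-\alpha$ and $c_y=\Delta-\beta$. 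Since every color belongs to $\overline\pi(P)$, each of these edges is dominated by $P$, so by Lemma~\ref{lem:full-components} applied to the trivial move each sends its whole unit of charge to $P$, contributing a first $\Delta$ to $\ch(P)$.

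Next, for every $a\in\overline\pi(x)$ the unique edge at $y$ colored $a$ has endpoint $v_a\notin\{x,y\}$, and uncoloring $yv_a$ while coloring one $xy$-edge with $a$ is an elementary move producing a free component $P_a=\{x,y,v_a\}$; analogous witnesses $u_b$ arise from $b\in\overline\pi(y)$. Let $V_1'$ be the set of all such $v_a,u_b$. Applying Lemma~\ref{lem:distinct-free} to $P_a$ gives $|\overline{\pi_a}(P_a)|=(\alpha-1)+\beta+|\overline\pi(v_a)|+1=\Delta+|\overline\pi(v_a)|$, and since this is at most $\Delta$ we must have $\overline\pi(v_a)=\emptyset$ and $|\overline{\pi_a}(P_a)|=\Delta$. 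So every vertex of $V_1'$ has degree $\Delta$ with every incident edge colored, and by Lemma~\ref{lem:full-components} applied to the move $\pi_a$, every colored edge incident to $v_a$ other than the single edge $yv_a$ uncolored by the move sends its whole charge to $P$; the exceptional edges lie in $I[\{x,y\}]$ and were already paid for. Hence every colored edge incident to $V(P)\cup V_1'$ contributes $1$ to $\ch(P)$.

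If $|V_1'|=1$, say $V_1'=\{z\}$, then the $\beta$ edges $\{xu_b\}$ and $\alpha$ edges $\{yv_a\}$ all end at $z$, so $\mu_{xz}\ge\beta$ and $\mu_{yz}\ge\alpha$; combined with $\deg(z)\le\Delta=\alpha+\beta$ this forces $z$ to have no further neighbor, and $x,y$ also have no neighbors outside $\{x,y,z\}$, so connectedness of $G$ gives $|V(G)|=3$ and the lemma holds. Otherwise $|V_1'|\ge 2$. Let $E_{11}$ denote the colored edges with both endpoints in $V_1'$ and $E_{12}$ the colored edges from $V_1'$ to $V\setminus(V_1'\cup\{x,y\})$. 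Each $v\in V_1'$ has $\Delta$ colored edges and the edges from $V_1'$ to $\{x,y\}$ total $c_x+c_y=\Delta$, giving the identity $2|E_{11}|+|E_{12}|=(|V_1'|-1)\Delta$. The previous paragraph therefore yields
$$\ch(P)\ge\Delta+|E_{11}|+|E_{12}|=|V_1'|\Delta-|E_{11}|.$$

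It remains to bound $|E_{11}|$. For $|V_1'|\ge 3$ the inequality $|E_{11}|\le(|V_1'|-1)\Delta/2\le(|V_1'|-2)\Delta$ already yields $\ch(P)\ge 2\Delta$. The delicate case is $|V_1'|=2$: writing $V_1'=\{z_1,z_2\}$ and letting $A_i\subseteq\overline\pi(y)$ be the colors of the $xz_i$ edges and $B_i\subseteq\overline\pi(x)$ those of the $yz_i$ edges, we get partitions $A_1\sqcup A_2=\overline\pi(y)$ and $B_1\sqcup B_2=\overline\pi(x)$. A colored $z_1z_2$ edge with color $c$ would require $c\notin A_1\cup B_1$ (by distinctness of colors at $z_1$ from those of $xz_1,yz_1$) and symmetrically $c\notin A_2\cup B_2$; but $A_1\cup A_2\cup B_1\cup B_2=\{1,\ldots,\Delta\}$, a contradiction. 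Hence $E_{11}=\emptyset$ and $\ch(P)\ge 2\Delta$. The main obstacle is precisely this $|V_1'|=2$ case, where the uniform degree-count bound is not tight enough and one must invoke the color-partition obstruction to conclude.
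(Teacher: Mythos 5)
Your proof is correct, and the $|N(P)|=1$ (your $|V_1'|=1$) case matches the paper's essentially verbatim. The difference lies in how the case $|N(P)|\ge 2$ is handled. The paper simply fixes any two vertices $z_1,z_2\in N(P)$, shows $\overline\pi(z_1)=\overline\pi(z_2)=\emptyset$ via $\con_1(P)$ and Lemma~\ref{lem:disjoint-free-W}, and then argues \emph{per color}: for each color $a$, the $a$-colored edge meeting $P$ and the $a$-colored edge at the $z_i$ not meeting $P$ through the first edge are two distinct edges in $I[\{x,y,z_1,z_2\}]$, each dominated by $P$, $P(\pi_1)$ or $P(\pi_2)$, so $\ch(P)\ge 2\Delta$ directly with no sub-cases. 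You instead pursue a global degree-count over all of $N(P)$, obtaining $\ch(P)\ge|V_1'|\Delta-|E_{11}|$, which disposes of $|V_1'|\ge 3$ immediately but forces you to prove $E_{11}=\emptyset$ separately when $|V_1'|=2$ via the partition $A_1\sqcup A_2=\overline\pi(y)$, $B_1\sqcup B_2=\overline\pi(x)$. Both routes are sound (indeed, your $E_{11}=\emptyset$ conclusion is exactly what the paper's per-color count implies when specialized to $|N(P)|=2$), but the paper's two-vertex per-color argument is uniform and avoids your sub-case split. A minor stylistic note: you derive $\overline\pi(v_a)=\emptyset$ by applying Lemma~\ref{lem:distinct-free} to $P_a$ and counting, whereas the paper obtains the same thing in one line from $z_i\in\con_1(P)$ and Lemma~\ref{lem:disjoint-free-W}; your version is valid because $\pi_a$ maximizes the number of colored edges, but it is worth noticing that the $\con_1$ machinery was set up precisely to package this argument.
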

\begin{proof}
First assume $|N(P)|=1$, i.e.\ $N(P)=\{z\}$, for some vertex $z$.
Since $\f(x)\cup\f(y)=\{1,\ldots,\Delta\}$ and $\f(x)\cap\f(y)=\emptyset$ by Lemma~\ref{lem:distinct-free}, so $\pi(x)=\f(y)$ and $\pi(y)=\f(x)$. It follows that the number of edges between $z$ and $P$ is $|\pi(x)|+|\pi(y)|=\Delta$. This means that $z$ has no neighbors outside $P$ and $G$ has exactly three vertices $x$, $y$ and $z$.

Now assume that $z_1$ and $z_2$ are two distinct vertices in $N(P)$. 
For every $i=1,2$ we can uncolor any edge between $z_i$ and $P$ and color one of the edges $xy$ obtaining an elementary move $\pi_i$. 
Thus $z_1,z_2\in \con_1(P)$ and by Lemma~\ref{lem:disjoint-free-W} we have $\overline{\pi}(z_1)=\overline{\pi}(z_2)=\emptyset$. 
Every colored edge incident to $x$ or $y$ is dominated by $P$ and it sends $1$ to $P$ by Lemma~\ref{lem:full-components}. Similarly, for $i=1,2$ every colored edge incident to $z_i$ is dominated by $P(\pi_i)$ and it sends $1$ to $P$ by Lemma~\ref{lem:full-components}. So each colored edge in $I[\{x,y,z_1,z_2\}]$ sends its charge to $P$. Notice that for each color there are at least two distinct edges in $I[\{x,y,z_1,z_2\}]$ colored with this color: one incident with $P$ and one incident with the $z_i$ not connected to $P$ by the first edge. Thus $P$ receives at least two units of charge from edges of a given color and hence the claim.
\end{proof}

\begin{lemma}\label{multiple_delta-1}
Let $P$ be a free component of $(G,\pi)$ consisting of a $k$-fold $xy$ edge for $k\geq 2$. Assume that $|\overline{\pi}(P)|=\Delta(G)-1$. Then either $\ch(P)\geq 2\Delta(G)$ or there is an induced subgraph $H$ of $G$ containing $P$, containing exactly three vertices and connected with $G\setminus H$ by at most three edges which are all colored with the color not in $\overline{\pi}(P)$.
\end{lemma}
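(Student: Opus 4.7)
The plan is to mirror the structure of Lemma~\ref{multiple_delta}, being careful with the single ``shared'' color $c$ defined by $\pi(x)\cap\pi(y)=\{c\}$ (so $\overline{\pi}(P)=\{1,\dots,\Delta\}\setminus\{c\}$). By Lemma~\ref{lem:distinct-free} one obtains $|\pi(x)|+|\pi(y)|=\Delta+1$; writing $m_{xy}\in\{0,1\}$ for the number of colored $xy$-edges (at most one, by the uniqueness of $c$), the number of colored edges from $\{x,y\}$ to $N(P)$ is $m_{out}=\Delta+1-2m_{xy}$.

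\textbf{Case $|N(P)|=1$.} With $N(P)=\{z\}$, all $m_{out}$ external edges land at $z$ with distinct colors, so $m_{out}\le\Delta$ forces $m_{xy}=1$ and $m_{out}=\Delta-1$. The $\Delta-1$ colors used on $xz$- and $yz$-edges form $(\pi(x)\setminus\{c\})\cup(\pi(y)\setminus\{c\})=\{1,\dots,\Delta\}\setminus\{c\}$, so any remaining edge at $z$ must be colored $c$, and $\deg z\le\Delta$ leaves room for at most one such edge. The subgraph induced on $\{x,y,z\}$ is the required $H$.

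\textbf{Case $|N(P)|\ge 2$.} Let $B=\{z\in N(P):\text{every edge from $z$ to $\{x,y\}$ is colored $c$}\}$; then $|B|\le 2$, and every $z\in N(P)\setminus B$ lies in $\con_1(P)$ via an elementary move uncoloring a non-$c$ edge $zw$ with $w\in\{x,y\}$ and coloring an $xy$-edge of $P$ with the same color. If $|N(P)\cap\con_1(P)|\ge 3$, put $U=\{x,y\}\cup(N(P)\cap\con_1(P))$; by Lemma~\ref{lem:disjoint-free-W} the free-color sets at vertices of $U$ are pairwise disjoint, and since $|\f(x)|+|\f(y)|=\Delta-1$, at most one vertex of $U\setminus\{x,y\}$ has a free color, necessarily $c$. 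For each color $a\ne c$, one $a$-edge at $\{x,y\}$ is dominated by $P$ (sending $1$ by Lemma~\ref{lem:full-components}), while further $a$-edges at the other $\ge 3$ vertices of $U$ contribute at least one more unit via Observation~\ref{obs:1/2}; for color $c$ the $c$-edges at $\{x,y\}$ together with the $c$-edges at the $\ge 2$ vertices of $U\setminus\{x,y\}$ using $c$ similarly sum to at least $2$. Summing over all colors yields $\ch(P)\ge 2\Delta$. Otherwise $|N(P)\cap\con_1(P)|\le 2$ and $|N(P)|\le 4$: a finite case analysis on $m_{xy}$, $|N(P)|$, and $B$ then establishes either enough concentration on a single $z^\ast\in N(P)$ (with $\deg z^\ast$ nearly saturated by edges to $\{x,y\}$, and every other neighbor contributing only its $c$-colored edge) for $H=G[\{x,y,z^\ast\}]$ to satisfy the conclusion, or enough rigidity (via maximality of $\Psi$ pushing the vertices outside $\{x,y,z_1,z_2\}$ into trivial free components) to guarantee full-charge contributions from edges leaving $\{z_1,z_2\}$ and hence $\ch(P)\ge 2\Delta$.

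\textbf{Main obstacle.} Unlike in Lemma~\ref{multiple_delta}, any elementary move of $P$ adds a second color to $\pi'(x)\cap\pi'(y)$ alongside $c$, so $|\overline{\pi'}(P(\pi'))|$ drops to $\Delta-2$ and Lemma~\ref{lem:full-components} cannot directly dominate edges at the $z_i$. The argument therefore hinges on (i) the domination by $P$ itself for non-$c$ edges at $\{x,y\}$, still valid since $|\overline{\pi}(P)|=\Delta-1$, (ii) amortizing the $\ge 1/2$-charges across several vertices of $\con_1(P)$, and (iii) handling color $c$ and the ``concentration'' vertex $z^\ast$ through direct structural arguments. The subtlest step is the low-$|\con_1|$ subcase, where the maximality of $\Psi$ must be invoked to rule out competing nontrivial free components around $\{z_1,z_2\}$.
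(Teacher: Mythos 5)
Your overall strategy — split on neighbourhood structure, use $\con_1$ and domination — is in the right spirit, but the proof is not complete, and the case split you chose is not the one the paper uses and is less convenient.

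The paper splits on $|N_{\neq a}(P)|$ (the set of neighbours joined to $P$ by an edge of colour different from $a$, where $a$ is the unique colour in $\pi(x)\cap\pi(y)$; $a$ is your $c$), not on $|N(P)|$. With that split the two alternatives of the lemma line up exactly with the two cases: $|N_{\neq a}(P)|=1$ immediately forces $\Delta-1$ internal edges of colours $\neq a$ between $\{x,y\}$ and the single non-$a$ neighbour $z$, so $G[\{x,y,z\}]$ is the desired $H$; and $|N_{\neq a}(P)|\ge 2$ gives two distinct vertices $z_1,z_2\in\con_1(P)$ and yields $\ch(P)\ge 2\Delta$. Your split on $|N(P)|$ means that the $H$-conclusion can still occur inside your case $|N(P)|\ge 2$ (e.g.\ when there is a single $z$ receiving all $\Delta-1$ non-$a$ edges plus additional neighbours joined only by $a$-coloured edges), and that is precisely what you leave as ``a finite case analysis on $m_{xy}$, $|N(P)|$, and $B$''.

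The substantive gap is exactly that unresolved subcase. You correctly identify in your ``Main obstacle'' paragraph that a single elementary move of $P$ destroys the $\Delta-1$ free colours needed for domination, so the low-$\con_1$ situation cannot be dispatched by Lemma~\ref{lem:full-components} alone. The paper handles it with a concrete two-step argument: after showing $z_1,z_2\in\con_1(P)$ with $\f(z_1),\f(z_2)\subseteq\{a\}$, it observes that for each colour $\neq a$ both $a$-domination through $P$, $P(\pi_1)$ or $P(\pi_2)$ applies; for colour $a$ the only problematic subcase is when the unique $a$-edge at $P$ is a single $xy$-edge, and there the paper chains two more elementary moves from $\pi_1$ to reach a colouring $\pi_3$ in which $a\in\overline{\pi_3}(P(\pi_3))$ and $z_2\in V(P(\pi_3))$, so the $a$-edge at $z_2$ is dominated. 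That explicit composite move is the missing ingredient in your write-up; ``enough rigidity (via maximality of $\Psi$ pushing the vertices outside $\{z_1,z_2\}$ into trivial free components)'' does not give it, because the issue is not competing nontrivial components but the loss of dominating colours under a single move. Your case $|N(P)|=1$ and your $|N(P)\cap\con_1(P)|\ge 3$ subcase are fine; the rest needs to be carried out.
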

\begin{proof}
Denote the color not in $\f(P)$ by $a$.
Let $N_{\neq a}(P)$ denote the vertices from $N(P)$ connected with $P$ by an edge of color different from $a$. 

First assume $|N_{\neq a}(P)|=1$, i.e.\ $N_{\neq a}(P)=\{z\}$ for some vertex $z$. 
Since $\overline{\pi}(P)=\{1,\ldots,\Delta\}\setminus\{a\}$, by Lemma~\ref{lem:distinct-free} for every color $c\ne a$ there is an edge between $z$ and $P$. Then the induced subgraph on vertices $x,y,z$ is joined with the rest of the graph by at most three edges colored with $a$. 

Now assume that $z_1$ and $z_2$ are two distinct vertices in $N_{\neq a}(P)$. 
For every $i=1,2$ we can uncolor any edge between $z_i$ and $P$ and color one of the edges $xy$ obtaining an elementary move $\pi_i$. 
Thus $z_1,z_2\in \con_1(P)$ and by Lemma~\ref{lem:disjoint-free-W} we have $\overline{\pi}(z_1),\overline{\pi}(z_1)\subseteq\{a\}$. 
Hence for each color different from $a$ there are two distinct edges in $I[\{z_1,z_2\}]$ colored with this color (one connects $P$ and $z_i$ for some $i=1,2$, and the other is incident with $z_{3-i}$ and not with $P$) and each of them is dominated by $P$, $P(\pi_1)$ or $P(\pi_2)$, so it sends 1 unit of charge to $P$ by Lemma~\ref{lem:full-components}. It remains to show that $P$ receives at least two units of charge from edges colored with $a$. Denote the set of these edges by $S_a$.

If $\overline{\pi}(z_1)=\overline{\pi}(z_2)=\emptyset$ then $\{x,y,z_1,z_2\}\subseteq V(S_a)$, so by Observation~\ref{obs:1/2}, the edges in $S_a$ send 2 to $P$, as required. Since $z_1,z_2\in \con_1(P)$, by Lemma~\ref{lem:disjoint-free-W} it cannot occur that $\overline{\pi}(z_1)=\overline{\pi}(z_2)=\{a\}$. By symmetry we can assume that $\overline{\pi}(z_1)=\{a\}$ and $\overline{\pi}(z_2)=\emptyset$. 
Both edges colored with $a$ and incident with $P$ are dominated by $P(\pi_1)$ and send 1 unit of charge to~$P$ by Lemma~\ref{lem:full-components}. We are done, unless this is a single $xy$ edge colored with $a$. Assume this is the case, we will complete the proof by showing that the $z_2w$ edge colored with $a$ sends 1 to $P$. 
Notice that we can move $P(\pi_1)$ by uncoloring the $xy$ edge colored with $a$ and coloring the uncolored $xz_1$ or $yz_1$ edge and then further by uncoloring an $xz_2$ or $yz_2$ edge and coloring an $xy$ edge. This way we obtain a coloring $\pi_3$ with the property that $a\in\overline{\pi_3}(P(\pi_3))$, the $z_2w$ edge remains colored with $a$ and 
$z_2$ is a vertex of $P(\pi_3)$. Hence $z_2w$ is dominated by $P(\pi_3)$ and sends 1 to $P$, as required.
\end{proof}

By Lemma~\ref{lem:distinct-free} and Lemma~\ref{lem:szerokie_komponenty}, we immediately obtain the following:

\begin{corollary}\label{cor:4_5for5+}
Let $Q$ be a free component of $(G,\pi)$ such that $|V(Q)|\geq 5$. Then $\ch(Q)\geq 4|E(Q)|$.
\end{corollary}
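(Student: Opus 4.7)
The plan is to apply Lemma~\ref{lem:szerokie_komponenty} directly with $U=V(Q)$, so essentially all the work has already been done in the two cited lemmas. The key observation is that every vertex of $Q$ is trivially controlled by $Q$: taking the empty (trivial) elementary move of $Q$ in $\pi$, we have $Q(\pi)=Q$, so $V(Q)\subseteq\con(Q)$.

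Next, I would invoke Lemma~\ref{lem:distinct-free}(a), which asserts precisely that for any two distinct vertices $v,w\in V(Q)$ in a single free component of a coloring maximizing $\Psi$, the sets of free colors $\f(v)$ and $\f(w)$ are disjoint. (Since the whole framework of this section assumes $(G,\pi)$ maximizes $\Psi$, this hypothesis is satisfied for free.) Thus the set $U=V(Q)$ meets both requirements in the statement of Lemma~\ref{lem:szerokie_komponenty}: $U\subseteq\con(Q)$, and the vertices of $U$ have pairwise disjoint free-color sets.

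Finally, plugging $U=V(Q)$ into Lemma~\ref{lem:szerokie_komponenty} yields
\[
\ch(Q)\ \geq\ (|V(Q)|-1)\,|E(Q)|\ \geq\ 4\,|E(Q)|,
\]
where the last inequality is the hypothesis $|V(Q)|\geq 5$. There is no real obstacle here; the corollary is an immediate packaging of the two previous lemmas, and the role of the bound $|V(Q)|\geq 5$ is only to produce the constant $4$ in the conclusion. The only subtlety worth double-checking is that the trivial ``do nothing'' elementary move is legitimately allowed by the definition, but this is explicitly noted in the text just after the definition of an elementary move, so no additional argument is required.
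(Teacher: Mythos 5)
Your proof is correct and is exactly the paper's intended argument: the paper states the corollary follows immediately from Lemma~\ref{lem:distinct-free} and Lemma~\ref{lem:szerokie_komponenty}, which is precisely the combination you give (with $U=V(Q)$, noting $V(Q)\subseteq\con(Q)$ via the trivial move). No discrepancies.
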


\begin{lemma}\label{lem:4_5for4}
Let $Q$ be a free component of $(G,\pi)$ such that $|V(Q)|=4$. Then $\ch(Q)\geq 4|E(Q)|$.
\end{lemma}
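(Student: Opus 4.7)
My plan is to handle this via a dichotomy on the size of $\overline\pi(Q)$. Note that the four vertices of $V(Q)$ lie in $\con_1(Q)$ trivially and by Lemma~\ref{lem:distinct-free}(a) have pairwise disjoint free colors, so Lemma~\ref{lem:szerokie_komponenty} already gives $\ch(Q)\ge 3|E(Q)|$; the whole job is to extract one more $|E(Q)|$ of charge. By Corollary~\ref{cor:num-uncolored-0} we have $|\overline\pi(Q)|\ge 2|E(Q)|$ and $|E(Q)|\le\lfloor\Delta/2\rfloor$, and since $Q$ is a connected graph on four vertices also $|E(Q)|\ge 3$.

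If $|\overline\pi(Q)|\ge\Delta-1$, every colored edge incident with $V(Q)$ whose color lies in $\overline\pi(Q)$ is dominated by $Q$ in $\pi$ itself, so by Lemma~\ref{lem:full-components} it sends its full unit of charge to $Q$. For each $c\in\overline\pi(Q)$, Lemma~\ref{lem:distinct-free}(a) puts $c$ in $\overline\pi(v)$ for exactly one $v\in V(Q)$, and the three other vertices are endpoints of $c$-colored edges; at most one of those edges can have both endpoints in $V(Q)$, so at least two distinct $c$-colored edges are incident with $V(Q)$, contributing $\ge 2$ units per color. The at most one remaining color contributes $\ge 2$ via Observation~\ref{obs:1/2}. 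Summing, $\ch(Q)\ge 2\Delta\ge 4|E(Q)|$.

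If $|\overline\pi(Q)|\le\Delta-2$, the plan is to exhibit a fifth vertex $u\in\con_1(Q)\setminus V(Q)$; combined with Lemma~\ref{lem:disjoint-free-W} and Lemma~\ref{lem:szerokie_komponenty} this immediately gives $\ch(Q)\ge 4|E(Q)|$. The candidate move is: pick $v'\in V(Q)$ (a leaf when $Q$ is a tree, otherwise an endpoint of any non-bridge edge of $Q$), a $Q$-neighbor $v$ of $v'$, and a free color $a\in\overline\pi(v')$; by Lemma~\ref{lem:distinct-free}(a) there is an edge $vu$ with $\pi(vu)=a$, and if $u\notin V(Q)$, then "uncolor $vu$ and color $vv'$ with $a$" is a valid elementary move with $|E(Q(\pi'))\setminus E(Q)|=1$, witnessing $u\in\con_1(Q)$.

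The main obstacle is the adversarial situation where every such choice of $(v',v,a)$ forces $u\in V(Q)$. For the tree cases ($Q=P_4$ or $K_{1,3}$) with $|E(Q)|=3$, the basic bound $\ch(Q)\ge 2\Delta-|\overline\pi(Q)|/2\ge 3\Delta/2$ obtained from Observation~\ref{obs:1/2} applied color by color suffices, since $|\overline\pi(Q)|\ge 6$ combined with $|\overline\pi(Q)|\le\Delta-2$ forces $\Delta\ge 8$ and hence $\ch(Q)\ge 12=4|E(Q)|$. For the non-tree adversarial case, the at least two colors $c\notin\overline\pi(Q)$ force every vertex of $V(Q)$ to be an endpoint of a $c$-colored edge; if all these edges remain inside $V(Q)$, then many colored edges are internal to $V(Q)$ and each sends its full unit of charge to $Q$ by Lemma~\ref{lem:con-disjoint}. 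A more careful accounting of these internal edges, possibly combined with moving $Q$ to a larger free component $Q'$ and invoking Corollary~\ref{cor:4_5for5+} together with Observation~\ref{obs:niezmienne_ladunki} to transfer the charge bound back to $Q$, should close the remaining gap in this subcase.
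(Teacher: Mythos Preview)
Your handling of $|E(Q)|=3$ is complete and essentially matches the paper: the color-by-color bound $\ch(Q)\ge 2\Delta-\tfrac12|\overline\pi(Q)|\ge\tfrac32\Delta$ settles $\Delta\ge 8$, and when $|\overline\pi(Q)|\ge\Delta-1$ (forced for $\Delta\in\{6,7\}$ by $6\le|\overline\pi(Q)|\le\Delta$) the domination argument gives $\ge 2$ per color. Your general case $|\overline\pi(Q)|\ge\Delta-1$ is also correct and yields $2\Delta\ge 4|E(Q)|$ for any $|E(Q)|$.

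The gap is the last paragraph, which is a hope, not a proof. For $|E(Q)|\ge 4$ with $|\overline\pi(Q)|\le\Delta-2$ you never actually produce the fifth vertex or carry out the alternative accounting; ``should close the remaining gap'' is precisely where the work lies. Two concrete obstacles:

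\emph{(i)} Your candidate move fixes a non-bridge edge $vv'$ and a color $a\in\overline\pi(v')$ and looks at the $a$-edge at $v$. But when $Q$ contains, say, a $4$-cycle $uvpqu$, the $a$-edge at $v$ may be $vq$, and every such single-swap attempt may stay inside $V(Q)$. The paper resolves this by noting that among the $a$-edges at $v,p,q$ at least one must leave $V(Q)$, and then performs a move that uncolors \emph{two} edges and colors two (uncolor $vq$ and $px$, color $uv$ and $pq$). This places the external vertex $x$ in $\con(Q)$ with $|E(Q(\pi'))\setminus E(Q)|=2$, so $x\notin\con_1(Q)$ in general and Lemma~\ref{lem:disjoint-free-W} is unavailable. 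The disjointness $\overline\pi(x)\cap\overline\pi(Q)=\emptyset$ comes instead from Lemma~\ref{lem:A}, since $Q$ sees the trivial component $\{x\}$. Your commitment to $\con_1(Q)$ blocks exactly this step.

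\emph{(ii)} Your fallback of counting internal colored edges, or moving $Q$ to a five-vertex component and invoking Corollary~\ref{cor:4_5for5+}, is not carried out. The internal-edge count does not obviously reach $4|E(Q)|$, and producing a five-vertex component is the very thing that is missing.

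In short, for $|E(Q)|\ge 4$ you need the structural case split the paper uses (cycle of length $3$ or $4$; or acyclic with a multiple edge, reduced by a move to the cycle case), and in the cycle cases you must be willing to leave $\con_1(Q)$ and argue disjoint free colors directly via Lemma~\ref{lem:A}.
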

\begin{proof}
Notice that $Q$ has at least three edges.

\mycase{1.} $|E(Q)|=3$.
For $a\in\{1,\ldots,\Delta\}$, let $S_a$ denote the set of edges incident to $V(Q)$ and colored with color $a$. 
By Lemma~\ref{lem:distinct-free}, we have $|V(S_a)\cap V(Q)|\geq |V(Q)|-1$ for every color $a$.
Hence, by Observation \ref{obs:1/2} edges in $S_a$ send at least $(|V(Q)|-1)/2=\tfrac{3}2$ units of charge to $Q$, so $\ch(Q)\geq \frac{3}{2}\Delta$. If $\Delta\ge 8$, we have $\ch(Q)\ge 12=4|E(Q)|$ and we are done, so in what follows we assume $\Delta \le 7$.
On the other hand, by Corollary~\ref{cor:num-uncolored-0} we infer that $\Delta\ge |\f(Q)|\ge 2|E(Q)| = 6$.
Hence, $\Delta\in\{6,7\}$ and $|\f(Q)|\ge \Delta -1$.
We will show that for every color $a$, edges in $S_a$ send 2 units of charge to $Q$; then $\ch(Q)\geq 2\Delta(G)\geq 12=4|E(Q)|$, as required.
First assume $a\in\f(Q)$. Then every edge in $S_a$ is dominated by $Q$ and by Lemma~\ref{lem:full-components} sends 1 to $Q$. It follows that $S_a$ sends to $Q$ at least $|S_a|\ge\ceil{(|V(Q)|-1)/{2}}=2$ units of charge. 
Finally, if $a\not\in\f(Q)$ then $|V(S_a)\cap V(Q)|=4$, so by Observation~\ref{obs:1/2} edges in $S_a$ send at least $\frac{4}{2}=2$ units of charge to $Q$, as required.

\mycase{2.} $|E(Q)|\ge 4$.
Note that $Q$ contains a cycle (possibly of length 2, i.e.\ a multiple edge). Since $|V(Q)|=4$, all cycles in $Q$ have length at most 4.
Our strategy is to show that $\con(Q)$ contains a subset $U$ of cardinality at least 5 such that every pair of vertices in $U$ has distinct free colors. 
Then the claim follows from Lemma~\ref{lem:szerokie_komponenty}. 

\mycase{2.1.} $Q$ contains a cycle $vpqv$ of length $3$ or a cycle $uvpqu$ of length 4. 
In the prior case let $u$ be the fourth vertex in $V(Q)$ and assume w.l.o.g.\ $uv\in E(Q)$.
Let $a\in\f(u)$.
By Lemma~\ref{lem:distinct-free} each of the vertices $v$, $p$, $q$ is incident  with an edge colored with $a$. 
At least one of these edges has the other endpoint $x$ outside $Q$. By symmetry there are two cases: $(i)$ there is an edge $vx$ colored with $a$ or $(ii)$ $vq$ is colored with $a$ and there is an edge $px$ colored with $a$.
In both cases $x\in\con(Q)$: in case $(i)$ we can move $Q$ by uncoloring $vx$ and coloring $vu$ with $a$, and in case $(ii)$ we can move $Q$ by uncoloring $vq$ and $px$ and coloring $uv$ and $pq$ with $a$. 
Moreover, $\f(x)\cap\f(Q)=\emptyset$, for otherwise $Q$ sees the trivial free component $\{x\}$, a contradiction with Lemma~\ref{lem:A}.
Vertices of $Q$ have disjoint free colors by Lemma~\ref{lem:distinct-free}, so we can put $U=\{u,v,p,q,x\}$.
\ignore{
By Lemma~\ref{lem:distinct-free} we have $a\in\pi{v}$ and $a\in\pi(q)$. If there is no $vq$ edge colored with $a$ then one of the edges colored with $a$ and incident with $v$ or $q$, say $vx$, must be incident with a vertex $x\notin V(Q)$. In that case we may move $Q$ by uncoloring $vx$ and coloring $uv$ with $a$. The resulting component has $5$ vertices so the claim follows by Corollary~\ref{cor:4_5for5+} and Observation~\ref{obs:niezmienne_ladunki}. On the other hand if there is a $vq$ edge colored with $a$ then there also has to be an $px$ edge colored with $a$ for some $x\notin V(Q)$. In that case we may move $Q$ by uncoloring $vq$ and $px$ and coloring $uv$ and $pq$ with $a$. Again the resulting component has $5$ vertices and the claim follows.

\mycase{2.2.} $Q$ contains a cycle $vpqv$ of length $3$. Let $u$ be the fourth vertex in $V(Q)$. Since $Q$ is connected there has to be an edge of $Q$ incident with $u$, w.l.o.g. $uv\in E(Q)$. 

Let $a\in\f(u)$, by Lemma~\ref{lem:distinct-free} there is an $vx$ edge colored with $a$. If $x\notin V(Q)$ then we may move $Q$ by uncoloring $vx$ and coloring $uv$ with $a$, so $x\in \con_1(Q)$ and the claim follows by Lemma~\ref{lem:szerokie_komponenty}. Suppose $x\in V(Q)$, say $x=p$. Then there is a $qy$ edge colored with $a$ and with $y\notin V(Q)$. Let $\pi '$ be the elementary move of $Q$ obtained by uncoloring $vp$ and $qy$ and coloring $uv$ and $pq$ with $a$. We will show that any two distinct vertices in $\{u,v,p,q,y\}$ have disjoint sets of free colors in the coloring $\pi$ and will be done by Lemma~\ref{lem:szerokie_komponenty}. Note that for $z\in\{v,p,q\}$ we have $\f(z)=\f '(z)$, further more $\f(y)=\f '(y)\setminus\{a\}$, thus by Lemma~\ref{lem:distinct-free} applied to $P$ 
and $P(\pi ')$ we only need to show that $\f(u)\cap\f(y)=\emptyset$. Assume for a contradiction that there is a color $b\in\f(u)\cap\f(y)$. Let $c\in\f(p)$, by Lemma~\ref{lem:distinct-free}(b) there is a $(cb,pu)$-path in $(G,\pi)$ and a $(cb,py)$-path in $(G,\pi ')$. However $\pi|_{\pi^{-1}(\{b,c\})}=\pi '|_{\pi^{-1}(\{b,c\})}$ so the maximal $cb$ path starting at $p$ in $(G,\pi ')$ is also the maximal $cb$ path starting at $p$ in $(G,\pi)$. Thus $y=u$, a contradiction with $y\notin V(Q)$.
}

\mycase{2.2.} $Q$ does not contain cycles, but it contains a multiple edge. 
Then, after ignoring multiplicity of edges, $Q$ is a tree $T_Q$. 
First assume that $Q$ contains a double edge $uv$ such that $v$ is a leaf of $T_Q$. 
Let $a\in\f(u)$.
By Lemma~\ref{lem:distinct-free} there is an edge $vx$ colored with $a$. 
Consider the elementary move $\pi'$ obtained by uncoloring $vx$ and coloring $uv$ with $a$.
If $x\not\in V(Q)$, then $x\in \con_1(Q)$, hence by Lemma~\ref{lem:disjoint-free-W} we can put $U=\con_1(Q)$.
Otherwise, $Q(\pi')$ contains a cycle of length 3 or 4, so the claim follows by Case 2.1 and Observation~\ref{obs:niezmienne_ladunki}.

We are left with the case when every multiple edge of $Q$ is not incident with a leaf of $T_Q$.
Since $|V(Q)|=4$ it means that $T_Q$ is a path $uvpq$, and the edge $vp$ has multiplicity at least two in $Q$.
Let $a\in\f(p)$ and consider the edge $vx$ colored with $a$, which exists due to Lemma~\ref{lem:distinct-free}. 
We may move $Q$ by uncoloring $vx$ and coloring $vp$ with $a$. If $x\notin V(Q)$ then again $x\in \con_1(Q)$ and by Lemma~\ref{lem:disjoint-free-W} we can put $U=\con_1(Q)$. 
If $x\in V(Q)$ then either $x=u$ and we have obtained the case where the multiple edge is incident to a leaf of $T_Q$ or $x=q$ and we have obtained the case where $Q$ contains a cycle of length $3$; in both cases we get the claim by Observation~\ref{obs:niezmienne_ladunki}.
\end{proof}

\section{Collapsing subgraphs}\label{sec:collapsing}

\begin{definition}
Let $G$ be a multigraph with maximum degree $\Delta$ and $H$ an induced subgraph of $G$ such that $|V(H)|=3$. We say that $H$ is \emph{$k$-collapsible} if 
$|E(V(H),V(G)\setminus V(H))|\leq k$ and 
$|E(H\setminus x)|\geq |E(x,V(G)\setminus V(H))|$ for every vertex $x\in V(H)$
\end{definition}

If $H$ is a $k$-collapsible subgraph of $G$ then we may obtain a graph $G'$ by removing from $G$ all edges in $E(H)$ and identifying the three vertices in $V(H)$ into a single vertex $h$. Note that $G'$ has the maximum degree at most $\max\{k,\Delta(G)\}$. We say that $G'$ is obtained by collapsing $H$ to $h$ in $G$.

\begin{lemma}\label{lem_collapse}
Let $G'$ be obtained by collapsing a $k$-collapsible subgraph $H$ to $h$ in a multigraph $G$ with maximum degree $\Delta$.
\begin{enumerate}[$(i)$]
\item If $G$ does not contain an induced subgraph on three vertices with more than $\Delta+\lfloor\frac{k}{2}\rfloor$ edges then $G'$ does not contain an induced subgraph on three vertices with more than $\Delta+\lfloor\frac{k}{2}\rfloor$ edges.
\item Given a partial coloring $\pi '$ of $G'$ which colors at least $p|E(G')|$ edges we can construct a partial coloring $\pi$ of $G$ which colors at least $\min\{p,\frac{\Delta}{|E(H)|}\}|E(G)|$ edges.
\end{enumerate}
\end{lemma}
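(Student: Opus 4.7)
My plan for part $(i)$ is direct. A three-vertex induced subgraph of $G'$ either avoids the collapsed vertex $h$, in which case it coincides with an induced subgraph of $G$ on the same vertices and the hypothesis carries over, or it has vertex set $\{h,u,v\}$ with $u,v\in V(G)\setminus V(H)$. In the latter case write $\alpha=|E_G(u,V(H))|$, $\beta=|E_G(v,V(H))|$, and $\gamma=|E_G(u,v)|$, so the induced subgraph of $G'$ on $\{h,u,v\}$ has exactly $\alpha+\beta+\gamma$ edges. The degree bounds at $u$ and $v$ yield $\alpha+\gamma\le\Delta$ and $\beta+\gamma\le\Delta$, while the total outflow from $V(H)$ gives $\alpha+\beta\le|E(V(H),V(G)\setminus V(H))|\le k$. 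Summing the three inequalities, $2(\alpha+\beta+\gamma)\le 2\Delta+k$, and integrality of $\alpha+\beta+\gamma$ yields $\alpha+\beta+\gamma\le\Delta+\lfloor k/2\rfloor$.

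For part $(ii)$ I first transfer $\pi'$ to $G\setminus E(H)$ and then extend it across $E(H)$. The edges of $G'$ incident to $h$ are in natural bijection with $E(V(H),V(G)\setminus V(H))$ via the collapse map, and pulling the coloring back along this bijection yields a partial coloring $\pi_0$ of $E(G)\setminus E(H)$ that preserves every colored edge of $\pi'$. Since $\pi'$ is proper at $h$, the inherited colors on the outside edges meeting any single $x_i\in V(H)$ are pairwise distinct, so $\pi_0$ is proper. Let $C_i\subseteq\{1,\ldots,\Delta\}$ be the colors $\pi_0$ assigns to the outside edges at $x_i$; the sets $C_1,C_2,C_3$ are pairwise disjoint and $|C_i|\le d_{x_i}$, where $d_{x_i}$ denotes the number of edges from $x_i$ to $V(G)\setminus V(H)$.

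To color $E(H)$, partition it into $E_{12},E_{13},E_{23}$ by endpoint pair and, throughout, let $\{i,j,k\}=\{1,2,3\}$. A color $c\in C_i$ is forbidden only at $x_i$, hence it is free at both endpoints of every edge of $E_{jk}$. The $k$-collapsibility hypothesis reads $|E(H\setminus x_i)|=|E_{jk}|\ge d_{x_i}\ge|C_i|$, so one can inject $C_i$ into $E_{jk}$ and color the chosen edges with the corresponding colors; doing this for $i=1,2,3$ produces no clashes because the $C_i$'s are pairwise disjoint and each $C_i$ is used only on edges not incident to $x_i$. Finally, assign distinct free colors from $\{1,\ldots,\Delta\}\setminus\bigcup_iC_i$ to as many of the still-uncolored edges of $E(H)$ as possible, one color per edge. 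A short count shows that exactly $\min(\Delta,|E(H)|)$ edges of $E(H)$ end up colored.

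The closing step is arithmetic: writing $m'=|E(G')|$ and $h=|E(H)|$, the coloring $\pi$ so constructed has at least $pm'+\min(\Delta,h)$ colored edges, and one verifies $pm'+\min(\Delta,h)\ge\min\{p,\Delta/h\}(m'+h)$ by splitting on whether $h\le\Delta$ or $h>\Delta$; in each regime the inequality reduces to $p\le 1$ together with the definition of the minimum. The conceptual crux of the argument is recognizing that the $k$-collapsibility condition $|E(H\setminus x_i)|\ge d_{x_i}$ is precisely what is needed to inject $C_i$ into $E_{jk}$ while avoiding conflicts with the outside edges; once that is in place, the rest is bookkeeping and can be carried out in polynomial time.
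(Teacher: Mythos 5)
Your proof is correct and follows essentially the same approach as the paper's: for part $(i)$ a degree-counting bound on a three-vertex subgraph containing $h$, and for part $(ii)$ pulling $\pi'$ back along the collapse and using, for each $x_i\in V(H)$, the colors on outside edges at $x_i$ to color edges of $E(H\setminus x_i)$, then topping up with the unused colors. You merely spell out verifications that the paper leaves implicit (pairwise disjointness of the $C_i$'s, that $k$-collapsibility supplies the needed injection into $E_{jk}$, and the closing arithmetic showing $pm'+\min(\Delta,|E(H)|)\ge\min\{p,\Delta/|E(H)|\}(m'+|E(H)|)$).
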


\begin{proof}
\begin{enumerate}
\item Every induced subgraph of $G'$ that does not contain $h$ is isomorphic to an induced subgraph of $G$ hence we only need to consider subgraphs of $G'$ containing $h$. However a subgraph of $G'$ containing $h$ and two other vertices can have at most $\Delta+\floor{\frac{\deg_{G'}(h)}{2}}\leq\Delta+\lfloor\frac{k}{2}\rfloor$ edges.
\item The edges in $G'$ correspond to edges in $G$ that are not in $E(H)$. Thus we may use the partial coloring $\pi '$ to obtain a partial coloring of $E(G)\setminus E(H)$. The edges in $G'$ incident with $h$ correspond to edges in $E(V(H),V(G)\setminus V(H))$. For a vertex $x\in V(H)$ we use the colors of edges in $E(x,V(G)\setminus V(H))$ to color edges in $E(H\setminus x)$. We use the remaining colors to color arbitrary edges of $E(H)$. We obtained a partial coloring $\pi$ such that $\min\{\Delta,|E(H)|\}$ edges of $E(H)$ are colored. Thus the claim follows.
\end{enumerate}
\end{proof}

\begin{lemma}\label{lem:main4and5}
Let $G$ be a connected multigraph of maximum degree $\Delta$. Let $\pi$ be a coloring maximizing $\Psi$.
\begin{enumerate}[$(i)$]
 \item If $\Delta=4$ and $G\neq 2K_3$ then $\pi$ colors at least $\frac{4}{5}|E|$ edges.
 \item If $\Delta=5$ and $G$ does not contain a $3$-collapsible subgraph then $\pi$ colors at least $\frac{5}{6}|E|$ edges.
\end{enumerate}
\end{lemma}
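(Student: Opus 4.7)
The plan is to work in the charge-sending framework of Section~\ref{sec:charge}. A standard double-counting argument reduces both parts to showing that every nontrivial free component $P$ of $\pi$ receives enough charge: concretely, $\ch(P)\geq\Delta$ when $|E(P)|=1$ and $\ch(P)\geq 2\Delta$ when $|E(P)|=2$ (equivalent to $\ch(P)\geq 4|E(P)|$ for part $(i)$ and $\ch(P)\geq 5|E(P)|$ for part $(ii)$). By Corollary~\ref{cor:num-uncolored-0} we have $|E(P)|\leq\lfloor\Delta/2\rfloor\leq 2$, so these are the only cases, and the case $|E(P)|=1$ is handled directly by Lemma~\ref{P1}.

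For $|E(P)|=2$ I split by the shape of $P$. A free-color count using $|\f(v)|=\Delta-\deg_G(v)+\deg_P(v)$ and the pairwise disjointness of $\f(v)$ for $v\in V(P)$ (Lemma~\ref{lem:distinct-free}) gives $|\f(P)|\in\{\Delta-1,\Delta\}$, with only $\Delta$ possible when $\Delta=4$. If $P$ is a double edge, Lemma~\ref{multiple_delta} or Lemma~\ref{multiple_delta-1} yields $\ch(P)\geq 2\Delta$ unless the neighborhood of $P$ collapses into a specific three-vertex induced subgraph. If $P$ is a $2$-path whose edges are both single in $G$, Lemma~\ref{single_delta45} applies because $|\f(P)|\geq\Delta-1$, yielding $\ch(P)\geq 2\Delta$. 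If $P=xyz$ is a $2$-path with at least one parallel edge in $G$, the free-color count forces $\deg_G(x)=\deg_G(y)=\deg_G(z)=\Delta$. Picking $a\in\f(x)$ and the $a$-colored edge $yv$ guaranteed by Lemma~\ref{lem:distinct-free}, the elementary move uncoloring $yv$ and coloring the uncolored $xy$ with $a$ produces a new free component $Q'$ with $\ch(Q')=\ch(P)$ by Observation~\ref{obs:niezmienne_ladunki}. If $v\notin V(P)$ then $Q'$ is a $2$-path on $\{y,z,v\}$, and iterating the reduction eventually reaches a single-edge $2$-path covered by Lemma~\ref{single_delta45}; if $v=z$ then $Q'$ is a double edge on $\{y,z\}$ with $|\f(Q')|=\Delta$, so Lemma~\ref{multiple_delta} applies.

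Finally I verify that every exceptional structure from Lemmas~\ref{multiple_delta} and~\ref{multiple_delta-1} contradicts the hypothesis. For part $(i)$ the Lemma~\ref{multiple_delta} exception places $G$ on exactly three vertices, and the degree constraint $\deg_G\leq\Delta=4$ together with the $\Delta$ colored edges from $P$ to its unique neighbour forces every pair to have multiplicity exactly $2$, so $G=2K_3$. For part $(ii)$ the Lemma~\ref{multiple_delta} exception cannot occur at $\Delta=5$, since then a three-vertex graph with all vertices of degree $5$ would have odd degree-sum; the three-vertex induced subgraph $H$ produced by Lemma~\ref{multiple_delta-1} is attached to the rest of $G$ by at most three same-colored edges, and the $\Delta-1=4$ non-$a$ colored edges between $P$ and the third vertex split as $m_x+m_y=4$ with $m_x,m_y\leq 3$ (because $|\pi(x)|=|\pi(y)|=\Delta-2=3$), which exactly certifies $|E(H\setminus v)|\geq|E(v,V(G)\setminus V(H))|$ at each $v\in V(H)$; thus $H$ is $3$-collapsible.

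The main obstacle I foresee is the termination and bookkeeping of the iterative elementary-move reduction in the $2$-path-with-parallel-edge subcase; the degree extremality $\deg_G\equiv\Delta$ propagating to every new vertex of $V(Q')$ should eventually exhaust the reductions at a single-edge $2$-path or a double edge, but the intermediate configurations need to be enumerated carefully to rule out cycling.
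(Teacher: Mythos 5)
The high-level plan matches the paper's (reduce to $|E(P)|\in\{1,2\}$, invoke Lemmas~\ref{P1}, \ref{single_delta45}, \ref{multiple_delta}, \ref{multiple_delta-1}, and handle a $2$-path with a parallel edge by moving to one of the base cases). However, there are several concrete problems with your handling of the multi-edge $2$-path case and with the disposal of the exceptional structures.

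\textbf{The iterative reduction has an unclosed gap.} You move $P=xyz$ by picking $a\in\f(x)$, uncoloring the $a$-colored $yv$ edge and coloring the uncolored $xy$. When $v\notin V(P)$ this produces another $2$-path, which may again have a parallel edge, and you offer no termination argument (indeed, you flag this yourself). The paper sidesteps the issue entirely: it chooses $a$ to be the color of the \emph{parallel colored $xy$ edge}. If $a\in\f(z)$, uncoloring that edge and coloring $yz$ immediately yields a double $xy$ edge in one step. If $a\notin\f(z)$, then $a\notin\f(P)$, so Corollary~\ref{cor:num-uncolored-0} forces $\Delta=5$, $|\f(x)|=|\f(z)|=1$, $|\f(y)|=2$; a single two-edge elementary move (moving both free edges onto $y$'s neighbours at the two remaining free colors) then lands in one of the base cases. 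This is a bounded, exhaustive case split, not an open-ended iteration.

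\textbf{Several of the auxiliary claims are wrong.} First, ``the free-color count forces $\deg_G(x)=\deg_G(y)=\deg_G(z)=\Delta$'' holds for $\Delta=4$ (where $|\f(P)|=4=\Delta$ is forced), but fails for $\Delta=5$ with $|\f(P)|=\Delta=5$: the degree-sum identity $\sum_v\deg_G(v)=3\Delta+4-|\f(P)|$ gives $14$, so the degrees are $\{4,5,5\}$. Since you use this ``degree extremality'' precisely to argue the iteration terminates, this compounds the first problem. Second, in the $v=z$ branch you assert $|\f(Q')|=\Delta$ and invoke Lemma~\ref{multiple_delta}; but $|\f'(Q')|=|\f(y)|+|\f(z)|+1$ can equal $\Delta-1$ when $\Delta=5$, so the correct tool there is sometimes Lemma~\ref{multiple_delta-1}. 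Third, your claim that the Lemma~\ref{multiple_delta} exception cannot occur at $\Delta=5$ ``since then a three-vertex graph with all vertices of degree $5$ would have odd degree-sum'' is incorrect: the three degrees in that exception are $4,5,5$ (sum $14$), so the graph is perfectly realizable. The reason the exception is excluded in part~$(ii)$ is different and much simpler: if $G$ has exactly three vertices then $G$ itself, viewed as a three-vertex induced subgraph, satisfies $|E(V(H),V(G)\setminus V(H))|=0$ and the degree condition vacuously, so it is $3$-collapsible, contradicting the hypothesis. Your treatment of the Lemma~\ref{multiple_delta-1} exception for $\Delta=5$ is essentially correct in outcome, though the inequality you really need there is $|E(H\setminus v)|\geq|E(v,V(G)\setminus V(H))|$ and the latter is at most $1$ for each $v$ (all outgoing edges share the single color $a$), while the former is at least $\min\{|\f(x)|,|\f(y)|,2\}\geq 1$.

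In short, the skeleton is right, but the reduction step and two of the three exception-ruling arguments do not go through as written; closing them requires the paper's more careful choice of elementary move and the ``$G$ itself is $3$-collapsible'' observation.
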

\begin{proof}
We will show that in both cases each free component $P$ of $\pi$ receives at least $|E(P)|\Delta$ units of charge, which gives the claim.

If $|E(P)|=1$ then we are done by Lemma~\ref{P1}. Thus by Corollary~\ref{cor:num-uncolored-0} we may assume that $|E(P)|=2$. We will first consider two special cases and then show that all other cases can be reduced via Observation~\ref{niezmienne_ladunki} to those two cases.

If $P$ is a path of length two and its edges are single edges in $G$ then we are done by Lemma~\ref{single_delta45}.

If $P$ consists of a double edge then we are done either by Lemma \ref{multiple_delta} or by Lemma \ref{multiple_delta-1}.

Assume that $P$ is a path $xyz$ and there are at least two $xy$ edges in $G$. Let $a$ denote the color of a colored $xy$ edge. If $a\in\overline{\pi}(z)$ then we uncolor the $a$-colored $xy$ edge and color the $yz$ edge with $a$. We obtain a component consisting of a double edge and we are done by Observation~\ref{niezmienne_ladunki}. 
So assume that $a\notin\overline{\pi}(z)$. Since $a\not\in\f(x)\cup\f(y)$, we have $a\not\in \f(P)$.
By Lemma~\ref{lem:distinct-free} this means that $\Delta=5$ and $\overline{\pi}(x)=\{b\}$, $\overline{\pi}(z)=\{c\}$ for some distinct colors $b$ and $c$. Consider the edges $yw_b$ and $yw_c$ colored respectively with $b$ and $c$. If $w_b=z$ or $w_c=x$ then we can uncolor this edge, color respectively $xy$ or $yz$ and obtain a component consisting of a double edge; again we are done by Observation~\ref{niezmienne_ladunki}. 
Otherwise, $\{x,y\}\cap\{w_b,w_c\}=\emptyset$ and we may move $P$ to a component with edges $yw_b$ and $yw_c$. 
Note that if $w_b\ne w_c$ then both edges $yw_b$ and $yw_c$ are single in $G$, because $\deg(y)\le 5$. We obtain a new component consisting of a double edge or two single edges in $G$ and again we are done Observation~\ref{niezmienne_ladunki}.
\end{proof}

In the following lemma we will use the function $\rho$. Although the definition may seem artificial it will be justified in the proof.

Let $\rho(\Delta,k,t)=\min(\{\frac{7}{2}\}\cup\{\frac{3\Delta-\alpha}{2e}\ |\ e,\alpha,\beta\in\mathbb{Z},$ $e\geq 2,$ $\alpha\geq 2e,$ $\beta\geq 0,$ $\alpha+\beta\leq\Delta,$ $e+\Delta-\beta\leq t,$ $2\beta+\Delta-\alpha\geq k+1\})$. 

\begin{lemma}\label{lem:aux}
Let $(G,\pi)$ be a coloring maximizing $\Psi$. 
Let $\Delta\geq 6$ be the maximal degree of $G$ and let $\Delta\leq t\leq\floor{\frac{3\Delta}{2}}$ and $0\leq k\leq\Delta$ be integers such that $G$ does not contain a $k$-collapsible subgraph and $G$ does not contain an induced subgraph on three vertices with more than $t$ edges. 
Let $H$ be a set of three vertices of $G$ and let $Q$ be a free component of $(G,\pi)$ such that $|E(Q)|\ge 2$ and $V(Q)\subseteq H \subseteq \con_1(Q)$. 
If for every $v\in H$, for every color $a\in \f(v)$ the two vertices of $H\setminus\{v\}$ are connected by an edge colored with $a$, then
$\ch(Q)\geq\rho(\Delta,k,t)|E(Q)|$.
\end{lemma}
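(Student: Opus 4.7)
My plan is to fit the instance into the feasible set of the optimization defining $\rho$. Set $\alpha:=|\overline{\pi}(Q)|$ and $\beta:=\sum_{v\in H\setminus V(Q)}|\overline{\pi}(v)|$ (so $\beta=0$ exactly when $V(Q)=H$, and $\beta=|\overline{\pi}(w)|$ when $V(Q)=\{x,y\}$ and $H\setminus V(Q)=\{w\}$). With $e=|E(Q)|$, I would show (a) the triple $(e,\alpha,\beta)$ is feasible in the minimization, and (b) $\ch(Q)\ge \frac{3\Delta-\alpha}{2}$. Combining them gives $\ch(Q)/e\ge \frac{3\Delta-\alpha}{2e}\ge \rho(\Delta,k,t)$, which is the claim; in the exceptional cases where one of the two nontrivial constraints fails, we instead prove $\ch(Q)\ge \frac{7}{2}e$ directly from the very restrictive structure forced by the failure.

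For (a), the easy constraints are immediate: $e\ge 2$ is hypothesis, $\alpha\ge 2e$ is Corollary \ref{cor:num-uncolored-0}, and $\alpha+\beta\le \Delta$ follows from Lemma \ref{lem:disjoint-free-W} applied to $H\subseteq\con_1(Q)$, whose vertices have pairwise disjoint free-color sets. For $e+\Delta-\beta\le t$ I would bound $|E(H)|$ from below: the lemma's hypothesis already yields $\alpha+\beta$ distinct colored edges inside $H$, and in the case $|V(Q)|=2$ the vertex $w$ lies in a trivial free component of degree $\Delta-\beta$ with every edge to $V(Q)$ counted inside $H$, which lets one conclude $|E(H)|\ge e+\Delta-\beta$; paired with the hypothesis $|E(H)|\le t$ this closes the chain. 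For $2\beta+\Delta-\alpha\ge k+1$ I would argue contrapositively: assuming $2\beta+\Delta-\alpha\le k$, a careful accounting of the edges leaving $H$ and of the pairwise multiplicities $\nu_{xy},\nu_{xw},\nu_{yw}$ forced by the lemma's hypothesis shows that $H$ meets both conditions of $k$-collapsibility, contradicting the assumption on $G$.

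For (b), each colored edge inside $H$ sends one unit to $Q$, since by Lemma \ref{lem:con-disjoint} both endpoints lie in $\con_1(Q)\subseteq\con(Q)$ and no other nontrivial component controls them, and each colored edge leaving $H$ sends at least half a unit. Adding the contributions by color gives the base estimate $\ch(Q)\ge \frac{3\Delta-\alpha-\beta}{2}$. The missing $\beta/2$ is supplied by Lemma \ref{lem:full-components}: when $|V(Q)|=2$, each color $a\in\overline{\pi}(w)$ witnesses an elementary move of $Q$ that absorbs $w$ into the free component $Q'=Q(\pi')$, expanding its palette so that $|\overline{\pi'}(Q')|\ge \Delta-1$ and hence one external colored edge at $w$ is dominated by $Q'$, promoting its charge from a half to a full unit. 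Accumulated over $\overline{\pi}(w)$, and ensuring the promoted edges are distinct for distinct colors, this yields the extra $\beta/2$.

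The main obstacle will be the charge-accounting step (b), and in particular the recovery of the $\beta/2$: one must choose the moves so that the promoted edges are pairwise distinct and that no edge is double-counted with the interior contribution. A secondary difficulty is deriving the inequality $2\beta+\Delta-\alpha\ge k+1$ from non-collapsibility, which needs a delicate case split on which of the two defining conditions of $k$-collapsibility is violated; the fallback $\frac{7}{2}$-bound is there precisely to absorb the borderline configurations where this argument cannot be applied.
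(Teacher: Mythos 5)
Your outline follows the same high-level template as the paper (exhibit a feasible triple for the optimization defining $\rho$, and separately prove a charge bound of the form $\ch(Q)\ge\frac{3\Delta-\alpha}{2}$), but your choices of $\alpha$ and $\beta$ differ from the paper's and they break the argument at a crucial point. The paper sets $\alpha=|\f(H)|$ (free colors over \emph{all three} vertices of $H$) and $\beta=|C_3|$, where $C_3$ is the set of colors $b\notin\f(H)$ such that all three $b$-colored edges at $H$ leave $H$. That $\beta$ is what makes the constraints $e+\Delta-\beta\le t$ (count the edges inside $G[H]$: $\alpha$ colored with $\f(H)$-colors, $\Delta-\alpha-\beta$ colored with $C_2$-colors, and $e$ uncolored) and $2\beta+\Delta-\alpha\ge k+1$ (count the edges leaving $H$: $3\beta+|C_2|$) fall out cleanly, together with a $\beta\le e$ sub-case whose failure gives the $\tfrac72 e$ fallback. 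Your $\alpha=|\f(Q)|$ and $\beta=\sum_{v\in H\setminus V(Q)}|\f(v)|$ are \emph{not} tied to either of those edge counts, so the verification of $e+\Delta-\beta\le t$ and $2\beta+\Delta-\alpha\ge k+1$ does not go through as sketched.

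The more serious gap is in step (b). With your $\alpha=|\f(Q)|$ you need the \emph{stronger} bound $\ch(Q)\ge\frac{3\Delta-|\f(Q)|}{2}$, which exceeds what the paper proves by $\tfrac12\sum_{v\in H\setminus V(Q)}|\f(v)|$. You propose to recover this via Lemma~\ref{lem:full-components}, claiming that after the elementary move $\pi'$ absorbing $w\in H\setminus V(Q)$, the new component $Q'$ satisfies $|\overline{\pi'}(Q')|\ge\Delta-1$, so some edge at $w$ becomes dominated. This is false in general: a direct check of the move (uncolor $wx$ of color $a\in\f(y)$, color $xy$ with $a$) gives $\overline{\pi'}(Q')=\f(x)\cup\f(y)\cup\f(w)=\f(H)$, and the only global constraint on $|\f(H)|$ is $|\f(H)|\le\Delta$ — it can be as small as $2e+|\f(w)|$, far below $\Delta-1$ for $\Delta\ge 7$. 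So the domination hypothesis of Lemma~\ref{lem:full-components} is simply not available, and the $\beta/2$ you want to "recover" is not there. The paper avoids this entirely by working with $\alpha=|\f(H)|$, for which the base estimate $\ch(Q)\ge\frac{3\Delta-\alpha}{2}$ follows immediately from Observation~\ref{obs:1/2} (each $\f(H)$-color contributes $\ge 1$, each other color $\ge\tfrac32$), with no need for any domination step.
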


\begin{proof}
Let $\alpha=|\f(H)|$, by Corollary~\ref{cor:num-uncolored-0} we have $\alpha\geq |\f(Q)|\ge 2|E(Q)|$. 
By Lemma~\ref{lem:disjoint-free-W}, for every color $a\in\f(H)$ there are exactly two vertices of $H$ incident with an edge colored with $a$, so by Observation~\ref{obs:1/2} $Q$ receives the charge of at least $\alpha$ from edges colored with colors in $\f(H)$.
Clearly,  for every color $a\not\in\f(H)$ there are exactly three vertices of $H$ incident with an edge colored with $a$,
and hence $Q$ receives the charge of at least $\frac{3}{2}(\Delta-\alpha)$ charge from edges colored with colors not in $\f(H)$. 
Thus $\ch(Q)\geq\frac{1}{2}(3\Delta-\alpha)$. In what follows, we show that $\frac{1}{2}(3\Delta-\alpha)\ge \rho(\Delta,k,t)|E(Q)|$, that is we will define a $\beta$ such that for $e=|E(Q)|$ either $\ch(Q)\geq\frac{7}{2}$ or all the inequalities in definition of $\rho$ are satisfied.

Note that for any color $b\notin\f(H)$ there are either two or three distinct edges colored with $b$ and incident with a vertex of $H$. 
Let $C_2$ and $C_3$ be the sets of colors in $\{1,\ldots,\Delta\}\setminus\f(H)$ such that there are respectively two or three distinct edges colored that color and incident with a vertex of $H$. 
Let $\beta=|C_3|$, then $|C_2|=\Delta-\alpha-\beta$. 
We will show that if $2\beta+\Delta-\alpha\leq k$ and $\beta\leq|E(Q)|$ then the subgraph of $G$ induced by $H$ is $k$-collapsible. 
Indeed, for $b\in C_3$ all three edges in $I[H]$ colored with $b$ belong to $E(H,V(G)\setminus H)$. 
For $b\in C_2$ one edge belongs to $E(H,V(G)\setminus H)$ and one has both endpoints in $H$. 
Note that in $E(H,V(G)\setminus H)$ there are no edges colored with colors in $\f(H)$ so $|E(H,V(G)\setminus H)|=3\beta+(\Delta-\alpha-\beta)\leq k$. 
For every vertex $x\in H$ and color $b\in C_2$ if there is an edge in $E(x,V(G)\setminus H)$ colored with $b$ then there is also an edge in $E(H\setminus x,H\setminus x)$ colored with $b$. 
Recall that no edges colored with a color from $\f(H)$ leave $H$. 
Hence, to show that $|E(G[H]\setminus x)|\ge|E(x,V(G)\setminus H)|$ it suffices to prove that each pair of vertices in $H$ is connected by at least $\beta$ edges in $\pi^{-1}(\f(H)\cup\{\bot\})$. 
However for each $x\in V(Q)$ we have $\deg_Q(x)\leq|\f(x)|$ so we may assign to each edge of $Q$ two colors from $\f(Q)$ using one free color from each of its endpoints. By Lemma~\ref{lem:distinct-free} we may assign the colors so that every color is assigned at most once. 
Consider an edge $e\in E(Q)$, say $e=uv$, and let $p=H\setminus\{u,v\}$. 
Let $c_1\in\f(u)$ and $c_2\in\f(v)$ be the colors assigned to $e$, then by the assumption of the lemma there is an $up$ edge colored with $c_2$ and an $vp$ edge colored with $c_1$. 
It follows that every edge $e\in E(Q)$ forms a $K_3$ together with the edges in $E(G[H])$ colored with the assigned colors. 
Thus each pair of vertices in $H$ is connected by at least $|E(Q)|$ edges in $\pi^{-1}(\f(Q)\cup\{\bot\})$. 
Hence if $2\beta+\Delta-\alpha\leq k$ and $\beta\leq|E(Q)|$ then the subgraph of $G$ induced on $H$ is $k$-collapsible, contradicting the assumption. 
If $\beta>|E(Q)|$ then $\ch(Q)\geq\frac{1}{2}(3\Delta-\alpha)\geq\frac{1}{2}(2\alpha+3\beta)>\frac{1}{2}(4|E(Q)|+3|E(Q)|)=\frac{7}{2}|E(Q)|$ and the claim follows. Suppose $2\beta+\Delta-\alpha>k$. Note that $G[H]$ contains $\alpha$ edges colored with colors in $\f(H)$, $\Delta-\alpha-\beta$ edges colored with colors in $C_2$ and $e$ uncolored edges. Thus $t\geq \alpha+\Delta-\alpha-\beta+e=e+\Delta-\beta$, so by definition $\ch(Q)\geq\rho(\Delta,k,t)|E(Q)|$ and the claim follows.
\end{proof}

\begin{lemma}\label{lem:7_9for3}
Let $(G,\pi)$ be a coloring maximizing $\Psi$. Let $Q$ be a free component of $(G,\pi)$ such that $|V(Q)|\in\{2,3\}$. Let $\Delta\geq 6$ be the maximal degree of $G$ and let $\Delta\leq t\leq\floor{\frac{3\Delta}{2}}$ and $0\leq k\leq\Delta$ be integers such that $G$ does not contain a $k$-collapsible subgraph and $G$ does not contain an induced subgraph on three vertices with more than $t$ edges. Then $\ch(Q)\geq\rho(\Delta,k,t)|E(Q)|$.
\end{lemma}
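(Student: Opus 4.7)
The plan is to perform a case analysis on $|V(Q)|$ and $|E(Q)|$, with the overall goal of either applying Lemma~\ref{lem:aux} to a three-vertex set $H\supseteq V(Q)$ with $H\subseteq\con_1(Q)$, or producing an elementary move that transforms $Q$ into a component dispatched by one of the earlier lemmas. Two bookkeeping facts underlie the strategy: Corollary~\ref{cor:num-uncolored-0} gives $|E(Q)|\le\lfloor\Delta/2\rfloor$, and by definition $\rho(\Delta,k,t)\le\tfrac{7}{2}$; hence $\rho(\Delta,k,t)|E(Q)|\le\tfrac{7\Delta}{4}<2\Delta$ and $\rho(\Delta,k,t)|E(Q)|\le 4|E(Q)|$, so any bound of the form $\ch(Q)\ge 2\Delta$ or $\ch(Q)\ge 4|E(Q)|$ already suffices. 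I would first dispatch the case $|E(Q)|=1$ directly via Lemma~\ref{P1}, giving $\ch(Q)\ge\Delta\ge 6\ge\tfrac{7}{2}\ge\rho(\Delta,k,t)|E(Q)|$. From now on assume $|E(Q)|\ge 2$.

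When $|V(Q)|=3$ I would take $H=V(Q)$. If for every $v\in H$ and every $a\in\f(v)$ the two vertices of $H\setminus\{v\}$ are joined by an $a$-coloured edge, Lemma~\ref{lem:aux} applies directly. Otherwise I would fix a witness pair $(v,a)$ and an $a$-coloured edge $uw$ with $u\in H\setminus\{v\}$, $w\notin H$; Lemma~\ref{lem:distinct-free} then forces $uw$ to be the unique $a$-edge at $u$, and the elementary move that uncolours $uw$ and colours a suitable edge of $Q$ with $a$ produces a new free component $Q'$ with $|V(Q')|\ge 4$. Lemma~\ref{lem:4_5for4} or Corollary~\ref{cor:4_5for5+} then yield $\ch(Q')\ge 4|E(Q')|=4|E(Q)|$, and Observation~\ref{obs:niezmienne_ladunki} transfers the bound back to $Q$.

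When $|V(Q)|=2$, so that $Q$ is a $k$-fold edge between $x$ and $y$ with $k\ge 2$, I would split on $|\f(Q)|$. If $|\f(Q)|=\Delta$, Lemma~\ref{multiple_delta} gives either $\ch(Q)\ge 2\Delta$ or $|V(G)|=3$; the latter is handled by a direct count exploiting the hypothesis that $G$ has no three-vertex induced subgraph with more than $t$ edges. If $|\f(Q)|=\Delta-1$, Lemma~\ref{multiple_delta-1} either gives $\ch(Q)\ge 2\Delta$ or produces an induced three-vertex $H\supseteq V(Q)$ whose outgoing edges all carry the single colour $a\notin\f(Q)$; then $H\subseteq\con_1(Q)$ and meets the hypothesis of Lemma~\ref{lem:aux}. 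If $|\f(Q)|\le\Delta-2$, Lemma~\ref{lem:distinct-free} together with inclusion--exclusion forces $|\pi(x)\cap\pi(y)|\ge 2$, so some colour $c$ is present at both $x$ and $y$; the two $c$-edges then produce a third vertex $z\in\con_1(Q)$, and the elementary move realising $z$ reduces the analysis to the $|V(Q)|=3$ case via Observation~\ref{obs:niezmienne_ladunki}.

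The main obstacle I expect is the precise verification of the elementary moves in the sub-cases where a third or fourth controlled vertex must be produced: one must use Lemma~\ref{lem:distinct-free} to locate the appropriate edges and Lemma~\ref{lem:disjoint-free-W} to rule out coincidences of free colours at newly controlled vertices, in particular to make sure the resulting $Q'$ has the required vertex count and that its free colours do not interact badly with the rest of the graph. Once these moves are justified, the rest of the argument is a mechanical synthesis of Lemma~\ref{lem:aux}, Lemmas~\ref{multiple_delta}--\ref{multiple_delta-1}, Lemma~\ref{lem:4_5for4}, Corollary~\ref{cor:4_5for5+}, and Observation~\ref{obs:niezmienne_ladunki}.
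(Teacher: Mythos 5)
There is a genuine gap in your treatment of the case $|V(Q)|=3$, $|E(Q)|=2$. You claim that if the colour condition of Lemma~\ref{lem:aux} fails you can perform an elementary move that ``produces a new free component $Q'$ with $|V(Q')|\ge 4$.'' This is impossible: an elementary move preserves the number of uncoloured edges in the component, so $|E(Q')|=|E(Q)|=2$, and a connected graph with two edges has at most three vertices. So neither Lemma~\ref{lem:4_5for4} nor Corollary~\ref{cor:4_5for5+} can ever be invoked when $|E(Q)|=2$, and this subcase is not covered by your argument at all. The paper isolates this situation as a separate case: for $\Delta\geq 7$ the bound $\ch(Q)\geq\Delta\cdot\tfrac{|V(Q)|-1}{2}=\Delta\geq\tfrac{7}{2}|E(Q)|$ already suffices, and for $\Delta=6$ one needs a genuinely multi-move argument, showing that either $|\con_1(Q(\pi'))|\geq 4$ for some move $\pi'$ (at which point Lemma~\ref{lem:disjoint-free-W} and Lemma~\ref{lem:szerokie_komponenty}, not the $|V(Q)|\geq 4$ lemmas, give the bound), or one can reach a position where the colour hypothesis of Lemma~\ref{lem:aux} is satisfied. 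Your bookkeeping fact $\rho\leq\tfrac{7}{2}$ is correct and would indeed dispatch $\Delta\geq 7$, but you do not use it, and the $\Delta=6$ subcase requires more work than a single move.

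Your handling of $|V(Q)|=2$ also diverges from the paper and carries its own problems. The paper simply performs one elementary move (uncolour a $vx$ edge coloured $a\in\f(u)$, colour a $uv$ edge with $a$) to reach a three-vertex component and then appeals to the earlier cases; this works uniformly, with no case split on $|\f(Q)|$. Your route through Lemmas~\ref{multiple_delta} and~\ref{multiple_delta-1} leaves a residue you do not finish: in the three-vertex-graph outcome of Lemma~\ref{multiple_delta} you only gesture at ``a direct count,'' and in the outcome of Lemma~\ref{multiple_delta-1} the produced three-vertex set $H$ need not satisfy the colour hypothesis of Lemma~\ref{lem:aux} for the third vertex $z$ (if $\f(z)=\{a\}$ there is no guarantee of an $a$-coloured edge between the two vertices of $Q$). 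Also, in the subcase $|\f(Q)|\leq\Delta-2$ the inclusion--exclusion step is correct but the inference ``the two $c$-edges then produce a third vertex $z\in\con_1(Q)$'' is not: $c\notin\f(Q)$, so $c$ cannot be used to colour an edge of $Q$ and does not directly give a move. The move that actually works uses a colour from $\f(Q)$ and is available unconditionally, which is exactly what the paper does.
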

\begin{proof}
\mycase{1.} $|V(Q)|=3$ and $|E(Q)|\geq 3$. Let $V(Q)=\{u,v,p\}$. Note that $Q$ contains a cycle or a multiple edge.

\mycase{1.1.} $Q$ contains a cycle of length 3, i.e.\ the cycle $uvpu$. 
Pick any $a\in\f(Q)$, by symmetry assume that $a\in\f(u)$. By Lemma~\ref{lem:distinct-free} there is an $vx$ edge colored with $a$. 
If $x\neq p$ then we can move $Q$ by uncoloring $vx$ and coloring $uv$ with $a$. We obtain a component with four vertices and the claim follows by Lemma~\ref{lem:4_5for4} and Observation~\ref{obs:niezmienne_ladunki}. 
Thus we can assume that for every color in $\f(Q)$ there is an edge incident with two vertices in $V(Q)$ colored with that color. 
The claim follows from Lemma~\ref{lem:aux} applied for $H=V(Q)$.

\mycase{1.2.} $Q$ contains a multiple edge.
By symmetry assume that $uv$ is a multiple edge of $Q$ and  $vp$ is an edge of $Q$. Let $a\in\f(v)$, by Lemma~\ref{lem:distinct-free} there is an $ux$ edge colored with $a$. We may move $Q$ by uncoloring $ux$ and coloring $uv$ with $a$. If $x\neq p$ then we obtain a component with four vertices and the claim follows by Lemma \ref{lem:4_5for4} and Observation \ref{niezmienne_ladunki}. 
If $x=p$ then we obtain a component with three vertices and a cycle of length $3$. Hence the claim follows by Observation \ref{niezmienne_ladunki} and Case 1.1.

\mycase{2.} $|V(Q)|=3$ and $|E(Q)|=2$. 
If $\Delta\geq 7$ then by Lemma~\ref{lem:distinct-free} and Observation~\ref{obs:1/2} we have $\ch(Q)\geq\Delta\frac{|V(Q)|-1}{2}\geq 7=\frac{7}{2}|E(Q)|$. 
So we can assume that $\Delta=6$. Note that if there is a move $\pi '$ of $\pi$ such that $\con_1(Q(\pi '))\geq 4$ then by Observation~\ref{niezmienne_ladunki}, Lemma~\ref{lem:disjoint-free-W} and Observation~\ref{obs:1/2} we have 
$\ch(Q)=\ch(Q(\pi '))\geq \Delta \tfrac{|\con_1(Q(\pi'))|-1}{2} = 6\frac{4-1}{2}=\frac{9}{2}|E(Q)|$. 
Thus we assume that $\con_1(Q(\pi '))\le 3$ for every move $\pi '$ of $\pi$. Let $E(Q)=\{uv,vp\}$ and $c_u\in\f(u)$. By Lemma~\ref{lem:distinct-free} there is an $vx$ edge colored with $c_u$. By uncoloring $vx$ and coloring $uv$ with $c_u$ we obtain an elementary move $\pi_1$ of $\pi$. Hence $x\in \con_1(Q)$, so $x=p$. Thus for every color $c_u\in\f(u)$ there is an $vp$ edge colored with $c_u$ and therefore for every $c_u\in\overline{\pi_1}(u)$ there is an $vp$ edge colored with $c_u$ in $\pi_1$. Let $c_v\in\overline{\pi_1}(v)$ and 
consider the $py$ edge colored in $\pi_1$ with $c_v$. We may move $Q(\pi_1)$ by uncoloring $py$ and coloring $vp$ with $c_v$. Hence $y\in \con_1(Q(\pi_1))=\{u,v,p\}$, so $y=u$. Similarly for every $c_p\in\overline{\pi_1}(p)$ there is an $uv$ edge colored in $\pi_1$ with $c_p$. Thus we have obtained that for every color in $\overline{\pi_1}(\{u,v,p\})$ there is an edge incident with two vertices in $\{u,v,p\}$ colored with that color. The claim follows from Observation \ref{niezmienne_ladunki} and Lemma~\ref{lem:aux} applied for $(G,\pi_1)$ and $H=\{u,v,p\}$.

\mycase{3.} $|V(Q)|=2$. 
If $|E(Q)|=1$ then by Lemma~\ref{P1} we have $\ch(Q)\geq\Delta\geq 4|E(Q)|$. Suppose $|E(Q)|\geq 2$, let $V(Q)=\{u,v\}$ and $a\in\f(u)$. We may move $Q$ by uncoloring the $vx$ edge colored with $a$ and coloring a $uv$ edge with $a$. 
We obtain a free component $Q'$ with three vertices and are done by Case 2 and Observation \ref{niezmienne_ladunki}.
\end{proof}

\section{Proof of the main results}\label{sec:proof}

Now we are ready to describe the algorithm used to find the colorings from Theorem~\ref{thm:main}, Theorem~\ref{thm:main_1} and Theorem~\ref{thm:main_2} in polynomial time. 

\begin{algorithm}\label{alg}
Let $G$ be a multigraph with maximal degree $\Delta\geq 4$. Let $t$ be an integer such that $\floor{\frac{3\Delta}{2}}\geq t\geq 0$ and $G$ does not contain a subgraph with $3$ vertices and more than $t$ edges.
\begin{enumerate}
\item Let $k=\min\{\Delta,2(t-\Delta)+1\}$, $i:=0$ and $G_0:=G$. 

\item\label{alg:collapse}
While $G_i$ contains a $k$-collapsible subgraph $H$ let $i:=i+1$ and $G_i$ be the graph obtained by collapsing $H$ in $G_{i-1}$. By Lemma~\ref{lem_collapse} $G_i$ has maximal degree at most $\Delta$ and does not contain a subgraph with $3$ vertices and more then $t$ edges. 
The resulting graph $G_i$ does not contain a $k$-collapsible subgraph.
\item Let $\pi$ be the empty partial coloring of $G_i$, i.e.\ $\pi(E(G_i))=\{\bot\}$.
\item \label{alg:squeezing-components}As long as $(G_i,\pi)$ contains a free component with more than $\lfloor \Delta/2 \rfloor$ edges, use the procedure described in the proof of Lemma~\ref{lem:distinct-free} to find a new partial coloring $\pi'$, with increased number of colored edges and replace $\pi$ by $\pi'$.


\item\label{alg:while} For every free component $P$ determine $O(1)$ colored edges which send charge to $P$ so that $P$ gets the required amount; the edges are specified in proofs of relevant lemmas in Subsection~\ref{sec:charge}.

\item If in Step~~\ref{alg:while} charge is claimed from an uncolored edge or more than 1 unit of charge is claimed from a colored edge proceed as follows. 
The proofs of lemmas from Subsection~\ref{sec:charge} provide a coloring $\pi '$ with $\Psi(\pi ')>\Psi(\pi)$. 
Replace $\pi$ with $\pi '$ and repeat step~\ref{alg:while}. 
\item For $j=i,\ldots,1$ use Lemma~\ref{lem_collapse} to obtain a coloring of $G_{i-1}$ from a coloring of $G_i$. 
\end{enumerate}
\end{algorithm}

Now let us argue that the algorithm above takes only polynomial time.
Let $n,m$ denote the number of vertices and edges of the input graph $G$.
Finding a $k$-collapsible subgraph can be easily done in $O(n^3)$ time.
Since after collapsing such a subgraph the number of vertices decreases, Step~\ref{alg:collapse} takes $O(n^4)$ time.
One can easily check that the procedure from the proof of Lemma~\ref{lem:distinct-free} used to find the coloring $\pi'$ in Step~\ref{alg:squeezing-components} takes $O(nm)$ time. Since after each such operation the number of colored edges increases, Step~\ref{alg:squeezing-components} takes $O(nm^2)$ time.
Now we focus on Step~\ref{alg:while}. Note that although in principle a component $P$ can receive charge from many edges, which may be hard to find, in the proofs in Section~\ref{sec:charge} we needed only charge from edges incident to $P$ or to a component obtained from $P$ by at most two elementary moves, and the moves were always specified. Since every such edge sends at least $1/2$ unit of charge and every component needs only $O(1)$ units of charge per edge, to each component $P$ we assign $O(|E(P)|)=O(\Delta)$ edges and it is easy to find these edges in $O(m)$ time.
Hence one execution of Step~\ref{alg:while} takes only $O(mn)$ time. There are $O(mn^{\Delta/2})$ possible values of the potential $\Psi$, so the total time spent on Step~\ref{alg:while} is $O(m^2n^{\Delta/2+1})$. The last step takes time linear in the size of all the $k$-collapsible subgraphs found, which is bounded by $O(m)$. We conclude that the algorithm takes $O(m^2n^{\Delta/2+1})$ time, which is polynomial for every fixed value of $\Delta$.

Now we will prove that if $G$ satisfies certain conditions then Algorithm~\ref{alg} constructs a coloring which colors sufficiently many edges.
It will be convenient to begin with the proof of Theorem~\ref{thm:main_1}.

\begin{proof}[Proof of Theorem \ref{thm:main_1}]
Let $G$ be a multigraph of maximum degree $\Delta$. 
Fix an integer $t$ such that $\floor{\frac{3\Delta}{2}}\geq t\geq\left(\frac{1}{2}\sqrt{22}-1\right)\Delta$ and $G$ does not contain a subgraph with $3$ vertices and more than $t$ edges. 
Let $\pi$ be the coloring constructed for $G$ and $t$ by Algorithm~\ref{alg}. 
Let $k=\min\{\Delta,2(t-\Delta)+1\}$. 
Assume $G$ contains a $k$-collapsible subgraph $H$. Let $G'$ be the graph obtain by collapsing $H$.
Since $\Delta+\lfloor k/2\rfloor=t$, Lemma~\ref{lem_collapse}(i) tells us that $G'$ does not contain a three vertex subgraph with more than $t$ edges. Moreover, Lemma~\ref{lem_collapse}(ii) states that if we show $G'$ has a $\Delta$-edge-colorable subgraph with at least $\tfrac{\Delta}t|E(G')|$ edges then $G$ contains a $\Delta$-edge-colorable subgraph with at least $\tfrac{\Delta}t|E(G)|$ edges. 
Hence in what follows we can assume that $G$ does not contain $k$-collapsible subgraphs.

If $\Delta\le 5$ then $\floor{\frac{3\Delta}{2}}-1<\left(\frac{1}{2}\sqrt{22}-1\right)\Delta$ so $t=\floor{\frac{3\Delta}{2}}$. 
Then the claim follows from Shannon's theorem. 

Assume that $\Delta\geq 6$. 
We will show that every free component $P$ of $(G,\pi)$ receives at least $\frac{\Delta}{t-\Delta}|E(P)|$ charge. It follows that $\pi$ colors at least $\frac{\Delta}{t-\Delta}|E(G)|/(\frac{\Delta}{t-\Delta}+1)=\frac{\Delta}{t}|E(G)|$ edges, as required. Note that $t\geq\left(\frac{1}{2}\sqrt{22}-1\right)\Delta$ implies $\frac{\Delta}{t-\Delta}\leq\frac{\sqrt{22}+4}{3}\approx 2.9$.

Let $P$ be a free component of $(G,\pi)$. If $|V(P)|\geq 4$ then by Corollary~\ref{cor:4_5for5+} and Lemma~\ref{lem:4_5for4} we have $\ch(Q)\geq 4|E(Q)|$. If $|V(P)|\leq 3$ then by Lemma~\ref{lem:7_9for3} we have $\ch(P)\geq\rho(\Delta,k,t)|E(P)|$. Thus it remains to show that $\rho(\Delta,k,t)\geq\frac{\Delta}{t-\Delta}$.

Let $\alpha$, $\beta$ and $e$ be as in definition of $\rho(\Delta,k,t)$. 
If $k=\Delta$ then $2\beta>\alpha$ so $6\Delta\geq 6\alpha+6\beta>9\alpha$. 
Consequently $\frac{3\Delta-\alpha}{2e}=\frac{6\Delta-2\alpha}{4e}>\frac{9\alpha-2\alpha}{2\alpha}=\tfrac{7}2$ and thus $\rho(\Delta,k,t)\geq\frac{\Delta}{t-\Delta}$ as required. 
So assume $k=2(t-\Delta)+1$. 
Then $2\beta+\Delta-\alpha>2t-2\Delta$ so $2\beta+3\Delta>2t+\alpha$. 
It follows that $9\Delta-3\alpha=7\Delta-3\alpha+2\Delta\geq 7\Delta-3\alpha+2\alpha+2\beta=4\Delta-\alpha+(2\beta+3\Delta)>4\Delta-\alpha+2t+\alpha=4\Delta+2t$. 
In particular, this gives $3\alpha < 5\Delta - 2t$, so $6e\leq 3\alpha<5\Delta-2t$. 
Thus $\frac{3\Delta-\alpha}{2e}=\frac{9\Delta-3\alpha}{6e}>\frac{4\Delta+2t}{5\Delta-2t}$. 
It is easy to verify that for $\frac{3}{2}\geq \frac{t}{\Delta}\geq\frac{1}{2}\sqrt{22}-1$ we have $\frac{4\Delta+2t}{5\Delta-2t}\geq\frac{\Delta}{t-\Delta}$. Thus the claim follows.
\end{proof}

\begin{proof}[Proof of Theorem \ref{thm:main_2}]
Let $G$ be a multigraph of maximum degree $\Delta$. 
Assume $\Delta$ is even, $\Delta\ge 4$ and $G$ does not contain $\frac{\Delta}{2}K_3$ as a subgraph. 
This means that $G$ does not contain a three vertex subgraph with more than $t=\tfrac{3\Delta}2-1$ edges.
We will show that $G$ contains a $\Delta$-edge-colorable subgraph with at least $\tfrac{\Delta}t=\Delta/(\lfloor\tfrac{3\Delta}2\rfloor-1)$ edges.
If $\Delta\geq 8$ then $\frac{3\Delta}{2}-1\geq\left(\frac{1}{2}\sqrt{22}-1\right)\Delta$ and the claim follows by Theorem~\ref{thm:main_1}. 
If $\Delta=4$ then the claim follows by Lemma~\ref{lem:main4and5}. 
Hence we are left with the case $\Delta=6$. 
Then $t=8$; let $k=\min\{\Delta,2(t-\Delta)+1\}=5$.
By the argument from the beginning of the proof of Theorem~\ref{thm:main_1} we obtain that it is sufficient to show that $\rho(6,5,8)\geq\tfrac{\Delta}{t-\Delta}=3$. 
Let $\alpha$, $\beta$ and $e$ be as in definition of $\rho(\Delta,k,t)$. We have $2\beta+6-\alpha\geq 6$ and $6\geq\alpha+\beta$ and thus $12\geq 2\alpha+2\beta\geq 3\alpha$. Hence $\alpha\leq 4$ and $\frac{3\Delta-\alpha}{2e}\geq \frac{18-4}{4}=\frac{7}{2}$. So $\rho(6,5,8)=\frac{7}{2}>3$ and the claim follows.

Now assume $\Delta$ is odd, $\Delta\ge 5$ and $G$ does not contain $\frac{\Delta-1}{2}K_3+e$ as a subgraph. 
This means that $G$ does not contain a three vertex subgraph with more than $t=\tfrac{3}2(\Delta-1)$ edges.
We will show that $G$ contains a $\Delta$-edge-colorable subgraph with at least $\tfrac{\Delta}t=\Delta/(\lfloor\tfrac{3\Delta}2\rfloor-1)$ edges.
Similarly as for even $\Delta$, if $\Delta\geq 11$ then $\frac{3}{2}(\Delta-1)\geq\left(\frac{1}{2}\sqrt{22}-1\right)\Delta$ and the claim follows by Theorem~\ref{thm:main_1}. 
If $\Delta=5$ then by Lemma~\ref{lem_collapse} we can assume that $G$ does not contain a 3-collapsible subgraph so the claim follows by Lemma~\ref{lem:main4and5}. 
Thus we only need to verify the claim for $\Delta=7$ and $\Delta=9$ and, as argued in the beginning of the proof of Theorem~\ref{thm:main_1}, this can be achieved by showing respectively $\rho(7,5,9)\geq\frac{\Delta}{t-\Delta}=\frac{7}{2}$ and $\rho(9,7,12)\geq\frac{\Delta}{t-\Delta}=3$. 
If $\Delta=7$ then $2\beta+7-\alpha\geq 6$ and $7\geq\alpha+\beta$ and thus $14\geq 2\alpha+2\beta\geq 3\alpha-1$. 
Hence $\alpha\leq 5$ and, since $\alpha\ge 2e$ and $e$ is integer, $e\leq 2$. 
It follows that $\frac{3\Delta-\alpha}{2e}\geq\frac{21-5}{4}=4$, so $\rho(7,5,9)=\frac{7}{2}$. If $\Delta=9$ then $2\beta+9-\alpha\geq 8$ and $9\geq\alpha+\beta$ and thus $18\geq 2\alpha+2\beta\geq 3\alpha-1$. Hence $\alpha\leq 6$ and $\frac{3\Delta-\alpha}{2e}\geq\frac{27-6}{6}=\frac{7}{2}$. We get $\rho(9,7,12)=\frac{7}{2}$, as required.
\end{proof}

\begin{proof}[Proof of Theorem \ref{thm:main}]
Let $G$ be a connected multigraph of maximum degree $\Delta$. 
For $\Delta=3$ the claim was proved by Kamiñski and Kowalik~\cite{swat,Bey}, so in what follows we assume $\Delta\ge 4$.
Suppose $\Delta$ is even. Since $G\ne \frac{\Delta}{2}K_3$ and $G$ is connected it follows that $G$ does not contain $\frac{\Delta}{2}K_3$ as a subgraph. 
Hence the claim follows from Theorem~\ref{thm:main_2}.

Assume $\Delta$ is odd. 
If $G$ consists of two copies of $\frac{\Delta-1}{2}K_3+e$ joined by an edge then we can easily color $2\Delta+1$ of the $3\Delta$ edges of $G$ and the claim follows. 
So assume that $G$  does not consist of two copies of $\frac{\Delta-1}{2}K_3+e$ joined by an edge. 
Let $\mathcal{H}$ be the set of subgraphs of $G$ isomorphic to $\frac{\Delta-1}{2}K_3+e$. 
Then each $H\in\mathcal{H}$ is a $2$-edge-connected component of $G$ joined with $G\setminus H$ by a single edge $e_H$. 
Let $G'=G\left[V(G)\setminus\bigcup_{H\in\mathcal{H}}V(H)\right]$. 
We can color at least $\frac{2\Delta}{3\Delta-3}|E(G')|$ edges of $G'$ by Theorem~\ref{thm:main_2}. 
The partial coloring of $G'$ gives a partial coloring of $G$. 
For every $H\in\mathcal{H}$ we can additionally color $e_H$ and $\Delta$ of the $\frac{3\Delta-1}{2}$ edges of $H$. 
Thus we have colored at least 
$\frac{2\Delta}{3\Delta-3}|E(G')|+\sum_{H\in\mathcal{H}}(\Delta+1) = 
 \frac{2\Delta}{3\Delta-3}|E(G')|+\sum_{H\in\mathcal{H}}\tfrac{\Delta+1}{|E(H)|+1}(|E(H)|+1) = 
 \frac{2\Delta}{3\Delta-3}|E(G')|+\sum_{H\in\mathcal{H}}\tfrac{2\Delta+2}{3\Delta+1}(|E(H)|+1) > \frac{2\Delta+1}{3\Delta}|E(G)|$ edges of $G$ and the claim follows.
\end{proof}

\section{Approximation Algorithms}
Following~\cite{AH96}, let $c_k(G)$ be the maximum number of edges of a $k$-edge-colorable subgraph of $G$. 
We use the following result of Kami\'nski and Kowalik.

\begin{theorem}[\cite{swat,Bey}]
\label{th:meta-algorithm}
Let $\mathcal{G}$ be a family of graphs and let $\mathcal{F}$ be a $k$-normal
family of graphs.
Assume there is a polynomial-time algorithm which for every $k$-matching $H$ of
a graph in $\mathcal{G}$, such that $H\not\in\mathcal{F}$ finds its $k$-edge
colorable subgraph with at least $\alpha |E(H)|$ edges.
Moreover, let \[\beta = \min_{A,B\in\mathcal{F}\atop\text{$A$ is not
$k$-regular}}\frac{c_k(A)+c_k(B)+1}{|E(A)|+|E(B)|+1}\text{\quad and\quad }
\gamma=\min_{A\in\mathcal{F}}\frac{c_k(A)+1}{|E(A)|+1}.\]
Then, there is an approximation algorithm for the maximum $k$-ECS problem  for graphs in $\mathcal{G}$ with approximation ratio $\min\{\alpha,\beta,\gamma\}$.
\end{theorem}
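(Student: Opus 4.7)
The plan is the classical two-phase approach for \textsc{Maximum $k$-ECS}. First, compute a maximum $k$-matching $H$ of the input graph $G\in\mathcal{G}$; this is doable in polynomial time via an $f$-factor algorithm~\cite{schrijver}. Since any $k$-edge-colorable subgraph is itself a $k$-matching, $|E(H)|\ge c_k(G)$, so it suffices to produce a $k$-edge-colored subgraph of $G$ containing at least $\rho\cdot|E(H)|$ edges with $\rho=\min\{\alpha,\beta,\gamma\}$ (possibly using a few extra edges from $E(G)\setminus E(H)$ charged appropriately).

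Next I would split the connected components of $H$ into the collection $\mathcal{A}=\{A_1,\dots,A_r\}$ of those belonging to $\mathcal{F}$ and the union $H'$ of the remaining components. For $H'$, invoke the given subroutine on each component (the $k$-normality of $\mathcal{F}$ should be precisely the hypothesis needed to guarantee that no component of $H'$ lies in $\mathcal{F}$), obtaining a $k$-edge-coloring of at least $\alpha|E(H')|$ edges. For each $A_i\in\mathcal{A}$ we already have a $k$-edge-coloring of $c_k(A_i)$ edges; the only issue is that the pointwise ratio $c_k(A_i)/|E(A_i)|$ can be worse than $\rho$, so we must borrow extra edges from $E(G)\setminus E(H)$. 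Observe that for every non-$k$-regular $A\in\mathcal{A}$, maximality of $H$ forces any edge $vw$ of $G\setminus H$ incident with a degree-deficient $v\in V(A)$ to land on some vertex $w$ saturated in $H$, hence in some other component $A'$ of $H$. Build an auxiliary graph on $\mathcal{A}$ whose edges are exactly such witness pairs (with at least one endpoint non-$k$-regular), compute a maximum matching $M$ in it, and for each matched pair $\{A,A'\}$ take optimum $k$-colorings of $A$ and $A'$, then permute color labels (or invoke a Kempe chain inside the $k$-regular side) so that some color is free at $v$ and $w$ simultaneously and place it on $vw$. This colors $c_k(A)+c_k(A')+1$ edges of $A\cup A'\cup\{vw\}$ out of $|E(A)|+|E(A')|+1$, giving ratio $\ge\beta$. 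Any $A_i$ unmatched in $M$ either has a witness edge into $H'$ (yielding ratio $\ge\gamma$ by the same one-extra-edge trick) or is a $k$-regular member of $\mathcal{F}$ with no outgoing edge of $G$, in which case it is a whole connected component of $G$ and the ratio $c_k(A_i)/|E(A_i)|\ge\gamma$ finishes the accounting locally.

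Summing over $H'$, matched pairs, and unmatched components in $\mathcal{A}$, and charging each colored edge to its responsible group, the total number of colored edges is at least $\min\{\alpha,\beta,\gamma\}\cdot(|E(H)|+e^*)\ge\min\{\alpha,\beta,\gamma\}\cdot c_k(G)$, where $e^*$ counts the witness edges used outside $H$. The main obstacle I foresee is the coupled free-color argument in the $\beta$-pairing: with $A$ non-$k$-regular at $v$ but $B$ possibly $k$-regular at $w$, one must arrange that optimum $k$-colorings of $A$ and $B$ can be synchronized so that a common color is free at $v$ and $w$ simultaneously. This seems to require a Kempe-chain swap inside $B$ together with the fact that $c_k(B)$ is the genuine maximum (so some edge incident with $w$ is uncolored in at least one optimum coloring of $B$). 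Making this precise is the central technical step, and it is also what pins down the exact content of ``$k$-normal'' beyond simply ``the components of a $k$-matching that are bottlenecks.''
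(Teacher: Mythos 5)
The statement you are proving is quoted verbatim from Kami\'nski and Kowalik~\cite{swat,Bey}; the present paper does not prove it and even defers the definition of ``$k$-normal family'' to~\cite{Bey}. So there is nothing in this paper to check your argument against -- the comparison has to be against~\cite{Bey} itself, where the proof indeed follows the Kosowski-style skeleton you describe (maximum $k$-matching, peel off the $\mathcal{F}$-components, pair them along witness edges leaving $H$, and account for the surplus).

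Within that skeleton, two of your steps have genuine holes. First, the clause handling an unmatched $A_i$ that is a $k$-regular member of $\mathcal{F}$ and a whole connected component of $G$ claims the local ratio $c_k(A_i)/|E(A_i)|\geq\gamma$. This is false: since the members of $\mathcal{F}$ are precisely the bottleneck graphs with $c_k(A)<|E(A)|$, we have $c_k(A)/|E(A)|<(c_k(A)+1)/(|E(A)|+1)\leq\gamma$. The correct resolution is not a pointwise bound on the ratio for $A$ but a change of benchmark: when $A$ is an entire connected component of $G$, any $k$-ECS of $G$ also colors at most $c_k(A)$ edges inside $A$, so both your solution and the optimum lose exactly $|E(A)|-c_k(A)$ there and those terms cancel from the global ratio. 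Your closing inequality $\rho(|E(H)|+e^*)\geq\rho\, c_k(G)$ papers over this, because without the benchmark adjustment the term you produce on such components is strictly less than $\rho|E(A)|$.

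Second, the ``central technical step'' you flag -- synchronizing a free color at both endpoints $v$ (deficient, in the non-$k$-regular $A$) and $w$ (saturated, in $B$) of the witness edge -- is not a deferrable technicality; it is exactly where a naive argument breaks. An optimal partial $k$-coloring of $B$ with $c_k(B)$ colored edges can perfectly well have all $k$ colors present at $w$, with the uncolored edges of $B$ lying elsewhere, in which case no color permutation of $B$ alone frees a color at $w$, and a Kempe swap inside $B$ need not help either. This is precisely what the unstated ``$k$-normal'' axioms are engineered to guarantee: roughly, that every $A\in\mathcal{F}$ admits, for every vertex $x$, an optimal partial $k$-coloring with a prescribed color missing at $x$, together with closure conditions ensuring the family really does capture all bad $k$-matching components. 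You correctly sense that this is the crux, but the argument cannot be closed without those axioms, and your reconstruction only guesses at them. There is also a small slip earlier: the witness edge $vw$ from a deficient $v\in V(A)$ may return to $V(A)$ itself (with $w$ already saturated inside $A$), which your auxiliary matching on the set $\mathcal{A}$ of $\mathcal{F}$-components does not model; that case is exactly the one $\gamma$ is designed for, and it needs to be handled separately from the $\beta$ pairing.
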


Since the definition of $k$-normal family is very technical, we refer the reader to \cite{Bey} for its definition. As a direct consequence of Theorem~\ref{thm:main} and Theorem~\ref{th:meta-algorithm} we get the following two results.

\begin{theorem}
Let $k\geq 4$ be even. The maximum $k$-ECS problem has a $\frac{2k+2}{3k+2}$-approximation algorithm for multigraphs.
\end{theorem}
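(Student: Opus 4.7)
The plan is to invoke the meta-algorithm Theorem~\ref{th:meta-algorithm} with $\mathcal{G}$ the class of all multigraphs and the singleton family $\mathcal{F}=\{\tfrac{k}{2}K_3\}$. The Shannon-bottleneck case, in which a component of the $k$-matching is isomorphic to $\tfrac{k}{2}K_3$, is then absorbed into the parameter $\gamma$, while the non-bottleneck case is supplied by Theorem~\ref{thm:main}(1) in the parameter $\alpha$.

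First I compute the parameters. Any two edges of the three-vertex multigraph $A=\tfrac{k}{2}K_3$ are incident, so every color class in a proper edge-coloring of a subgraph of $A$ is a single edge; this gives $c_k(A)=k$ and $|E(A)|=\tfrac{3k}{2}$, whence
\[
\gamma=\frac{c_k(A)+1}{|E(A)|+1}=\frac{k+1}{\tfrac{3k}{2}+1}=\frac{2k+2}{3k+2}.
\]
Since $A$ is $k$-regular, the set of non-$k$-regular members of $\mathcal{F}$ is empty and $\beta$ imposes no restriction. For $\alpha$, given a $k$-matching $H\notin\mathcal{F}$, I handle each connected component $C$ of $H$ independently. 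If $C$ is not a Shannon bottleneck for its own maximum degree, Theorem~\ref{thm:main} supplies a $\Delta(C)$-edge-colorable (hence $k$-edge-colorable) subgraph with at least $\tfrac{2\Delta(C)}{3\Delta(C)-2}|E(C)|$ (even $\Delta(C)$) or $\tfrac{2\Delta(C)+1}{3\Delta(C)}|E(C)|$ (odd $\Delta(C)$) edges; both ratios are decreasing in $\Delta(C)\le k$ and hence bounded below by $\tfrac{2k}{3k-2}$, which a one-line cross-multiplication shows dominates $\tfrac{2k+2}{3k+2}$. If $C$ is a bottleneck with $\Delta(C)<k$, then $C$ has at most $\tfrac{3(k-1)}{2}$ edges on three vertices, and the trivial coloring that assigns at most one edge per color colors $\min\{k,|E(C)|\}$ of its $|E(C)|$ edges, a ratio that again beats $\tfrac{2k+2}{3k+2}$. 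Thus $\alpha\ge\tfrac{2k+2}{3k+2}$ and the meta-algorithm yields approximation ratio $\min\{\alpha,\beta,\gamma\}=\tfrac{2k+2}{3k+2}$.

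The main technical step is verifying that $\mathcal{F}=\{\tfrac{k}{2}K_3\}$ satisfies the ``$k$-normal'' property from~\cite{Bey}, which is the structural hypothesis that allows the meta-algorithm to absorb the ``$+1$'' appearing in $\gamma$: morally, it encodes the fact that a copy of $\tfrac{k}{2}K_3$ appearing as a component of the $k$-matching is joined to the rest of $G$ by at least one edge that can be colored with one of the $k$ colors unused inside the copy. Once $k$-normality is checked, the theorem follows directly from Theorem~\ref{th:meta-algorithm}.
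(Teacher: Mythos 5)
Your proposal is correct and follows essentially the same route as the paper: same family $\mathcal{F}=\{\tfrac{k}{2}K_3\}$, same computation of $\gamma$, same observation that $\beta$ is vacuous because $\tfrac{k}{2}K_3$ is $k$-regular, and reliance on Theorem~\ref{thm:main} plus the meta-algorithm. The one place you go beyond the paper is in the treatment of $\alpha$: the paper just writes $\alpha=\tfrac{2k}{3k-2}$ by Theorem~\ref{thm:main}, whereas you spell out a per-component argument for a $k$-matching $H$ whose components may have maximum degree strictly less than $k$ or may be lower-degree Shannon bottlenecks; that is a reasonable elaboration, since Theorem~\ref{thm:main} is stated for connected graphs of a fixed $\Delta$. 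One small inaccuracy worth flagging: you claim both ratios $\tfrac{2\Delta}{3\Delta-2}$ and $\tfrac{2\Delta+1}{3\Delta}$ are "bounded below by $\tfrac{2k}{3k-2}$" for $\Delta\le k$, but the odd-degree ratio at $\Delta=k-1$, namely $\tfrac{2k-1}{3k-3}$, is actually slightly below $\tfrac{2k}{3k-2}$ (e.g.\ $\tfrac{7}{9}<\tfrac{8}{10}$ at $k=4$). The conclusion you need, $\tfrac{2k-1}{3k-3}\ge\tfrac{2k+2}{3k+2}$, is still true (cross-multiplying reduces it to $k\ge -4$), so the final bound $\alpha\ge\tfrac{2k+2}{3k+2}$ holds, but the intermediate monotonicity claim should be corrected. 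Like the paper, you defer the $k$-normality check to~\cite{Bey}, which is acceptable here.
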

\begin{proof}
Let $\mathcal{F} = \{\frac{k}{2}K_3\}.$ It is easy to check that $\mathcal{F}$ is $k$-normal.
Now we give the values of parameters $\alpha, \beta$ and $\gamma$ from Theorem \ref{th:meta-algorithm}.
By Theorem \ref{thm:main}, $\alpha=\frac{2k}{3k-2}.$
We have $c_k(\frac{k}{2}K_3) = k$ and $|E(\frac{k}{2}K_3)| = \frac{3k}{2}.$
Hence, $\beta = \infty$ and $\gamma = \frac{2k+2}{3k+2}$.
\end{proof}

\begin{theorem}
Let $k\geq 5$ be odd. The maximum $k$-ECS problem has a $\frac{2k+1}{3k}$-approximation algorithm for multigraphs.
\end{theorem}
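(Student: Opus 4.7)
The plan is to mimic the approach used for the even case in the preceding theorem, invoking the meta-algorithm of Kamiński and Kowalik (Theorem~\ref{th:meta-algorithm}) with the exceptional family $\mathcal{F} = \{\tfrac{k-1}{2}K_3 + e\}$, which is precisely the bottleneck graph appearing in the odd case of Theorem~\ref{thm:main}. First I would note that this single-element family is $k$-normal (the argument is essentially the same as for $\tfrac{k}{2}K_3$ in the even case). Then I would supply the algorithm required by the hypothesis of Theorem~\ref{th:meta-algorithm}: given a $k$-matching $H$ of the input multigraph with $H \notin \mathcal{F}$, apply Theorem~\ref{thm:main} component-by-component to $H$ to produce a $k$-edge-colorable subgraph with at least $\alpha |E(H)|$ edges, where $\alpha = \frac{2k+1}{3k}$.

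Next I would compute the two constants $\beta$ and $\gamma$ from the statement of Theorem~\ref{th:meta-algorithm}. Let $A = \tfrac{k-1}{2}K_3 + e$. Since $A$ has only three vertices, any two of its edges share an endpoint, so a proper edge-coloring of any subgraph must use a distinct color for each edge; hence $c_k(A) = k$. Also $|E(A)| = 3 \cdot \tfrac{k-1}{2} + 1 = \tfrac{3k-1}{2}$. Note that $A$ is not $k$-regular (one vertex has degree $k-1$), so the minimum defining $\beta$ is taken over $A = B$, giving
\[
\beta \;=\; \frac{c_k(A)+c_k(A)+1}{|E(A)|+|E(A)|+1} \;=\; \frac{2k+1}{3k},
\]
and
\[
\gamma \;=\; \frac{c_k(A)+1}{|E(A)|+1} \;=\; \frac{k+1}{(3k+1)/2} \;=\; \frac{2k+2}{3k+1}.
\]

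Finally I would verify $\gamma \ge \tfrac{2k+1}{3k}$, which reduces to the inequality $3k(2k+2) \ge (2k+1)(3k+1)$, i.e.\ $6k^2 + 6k \ge 6k^2 + 5k + 1$, equivalent to $k \ge 1$, so it holds trivially for $k \ge 5$. Thus $\min\{\alpha,\beta,\gamma\} = \tfrac{2k+1}{3k}$, and Theorem~\ref{th:meta-algorithm} yields the claimed $\tfrac{2k+1}{3k}$-approximation. I do not anticipate any real obstacle: the only nontrivial ingredient, the combinatorial bound on $k$-edge-colorable subgraphs of a $k$-matching avoiding $\tfrac{k-1}{2}K_3+e$, is already established in Theorem~\ref{thm:main}, and checking $k$-normality of $\mathcal{F}$ is a mechanical verification analogous to the even case.
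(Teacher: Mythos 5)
Your proof follows essentially the same route as the paper: take $\mathcal{F}=\{\tfrac{k-1}{2}K_3+e\}$, invoke Theorem~\ref{th:meta-algorithm}, supply $\alpha$ via Theorem~\ref{thm:main}, and compute $\beta=\tfrac{2k+1}{3k}$, $\gamma=\tfrac{2k+2}{3k+1}$, which the paper also does. The only cosmetic difference is that you record $\alpha=\tfrac{2k+1}{3k}$ (the direct reading of Theorem~\ref{thm:main}(2)), while the paper states $\alpha=\tfrac{2k}{3(k-1)}$; since both exceed $\beta$, the minimum and hence the final ratio are the same.
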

\begin{proof}
Let $\mathcal{F} = \{\frac{k-1}{2}K_3 + e\}.$ It is easy to check that $\mathcal{F}$ is $k$-normal.
Now we give the values of parameters $\alpha, \beta, \gamma$ and $\delta$
for Theorem \ref{th:meta-algorithm}.
By Theorem \ref{thm:main}, $\alpha=\frac{2k}{3(k-1)}.$
We have $c_k(\frac{k-1}{2}K_3 + e) = k$ and $|E(\frac{k-1}{2}K_3 + e)| = \frac{3k-1}{2}.$
Hence, $\beta = \frac{2k+1}{3k}$ and $\gamma = \frac{2k+2}{3k+1}.$
\end{proof}

\section*{Acknowledgments}

The second and third author are partially supported by National Science Centre of Poland (grant N206 567140).
The work has been done while the first author was staying at University of Warsaw.

\bibliographystyle{abbrv}

\end{document}